\definecolor{darkblue}{rgb}{0,0,.6}
\DeclareMathOperator*{\argmin}{\arg\min}
\theoremstyle{plain}
\newtheorem{theorem}{Theorem}
\newtheorem{proposition}{Proposition}
\newtheorem{lemma}{Lemma}
\theoremstyle{definition}
\newtheorem{definition}{Definition}
\newtheorem{remark}{Remark}
\newtheorem{assumption}{Assumption}
\journal{}
\begin{document}

\begin{frontmatter}


\title{Inference for high-dimensional linear expectile regression with de-biased method}

\author[a]{Xiang Li}
\ead{12135003@zju.edu.cn}
\affiliation[a]{organization={Zhejiang University},
            addressline={866 Yuhangtang Rd}, 
            city={Hangzhou},
            postcode={310058}, 
            country={China}}
 \author[b]{Yu-Ning Li}
\ead{yuning.li@york.ac.uk}
\affiliation[b]{organization={School for Business and Society, University of York},
            addressline={Heslington}, 
            city={York},
            postcode={YO10 5DD}, 
            country={United Kingdom}}
 \author[a]{Li-Xin Zhang}
 \ead{stazlx@zju.edu}
 \author[c]{Jun Zhao\corref{cor1}}
\ead{zhaojun@hzcu.edu.cn}
 \cortext[cor1]{Corresponding author.}
\affiliation[c]{organization={School of Computer and Computing Science, Hangzhou City University},
            addressline={No. 48 Huzhou Street}, 
            city={Hangzhou},
            postcode={310015}, 
            country={China}}

\title{}

\begin{abstract}
In this paper, we address the inference problem in high-dimensional linear expectile regression. We transform the expectile loss into a weighted-least-squares form and apply a de-biased strategy to establish Wald-type tests for multiple constraints within a regularized framework. Simultaneously, we construct an estimator for the pseudo-inverse of the generalized Hessian matrix in high dimension with general amenable regularizers including Lasso and SCAD, and demonstrate its consistency through a new proof technique. We conduct simulation studies and real data applications to demonstrate the efficacy of our proposed test statistic in both homoscedastic and heteroscedastic scenarios.
\end{abstract}



\begin{keyword}
Amenable regularizer \sep
 De-biased method \sep
 Expectile regression \sep
 High-dimensional inference.


\MSC[2020] 
62F05\sep
62F12\sep
62J12
\end{keyword}

\end{frontmatter}


\section{Introduction}\label{sec:1}

High-dimensional datasets have become increasingly prevalent in many fields, such as finance and genetics, presenting significant challenges for traditional regression methods such as least squares. In particular, these methods can lead to over-fitting and unreliable results in the high-dimensional setting, that is when the number of variables exceeds the sample size. By  imposing sparsity constraints on the model coefficients, regularization techniques, such as the Lasso, have been developed to address these limitations. Nonetheless, sparsity constraints bring in complexities for high-dimensional inference, calling for further investigation and tailored methodologies.

 Despite notable advancements made in the area of high-dimensional estimation and inference, a considerable portion of the existing literature primarily focuses on regression in mean, providing insights solely into the conditional mean of the response variable. 
To offer a more comprehensive understanding of the underlying distributional information, alternative regression methods such as quantile regression and expectile regression \citep[e.g.,][]{Koe05,NP87} have been developed to estimate the quantiles and expectiles, respectively, of the response variable. In recent years, there has been a growing interest in expectile regression applied to high-dimensional data, leading to the development of several regularized expectile regression methods. For instance, \cite{WOP19} introduce an $L_0$-regularized expectile regression model with a Whittaker smoother, specifically tailored for modelling accelerometer data. \cite{GZ16} study the expectile regression with Lasso ($L_1$ regularizer) and  nonconvex penalties. Further contributions to this domain include the works of 
 \cite{ZCZ18}, \cite{ZZ18}, \cite{LPC19}, \cite{Ciu21} and \cite{XDJYS21}, which study the expectile regression with SCAD, adaptive Lasso, and elastic-net regularizers and show the oracle properties of the estimators. Recently, \cite{MTWZ23} study the  robust expectile regression in high dimensions using the generalized Huber loss with the reweighted $L_1$ regularizer.

While the estimation of high-dimensional expectile regression has been well-studied, the statistical inference of such models remains an active area of investigation.
The existing literature on the inference of expectile regression primarily focuses on the low-dimensional setting since \cite{NP87}. 
\cite{ZZ18} develop de-biased estimation for inference of expectile regression when the number of variables is finite. Additionally,
\cite{JPD21} extend it to the single-index expectile model. 
Recently, \cite{SLZ21} consider the inference for large-scale data using the divide and conquer algorithm.  Furthermore, \cite{BDT23} and \cite{ZXY22} develop a sup-Wald test for Granger non-causality in an expectile range, which allows for the examination of Granger causality in the entire distribution by jointly testing for Granger non-causality at all expectiles. 

When it comes to the high-dimensional settings, the penalized estimators suffer from non-negligible bias due to the effect of the massive dimensionality,  which makes them unfeasible to be used  directly for inference. To overcome such a problem, \cite{VBRD14}, \cite{ZZ14}  and \cite{JM14} introduce the so-called 'de-biased' or 'de-sparsified' procedure with penalized least squares to correct the bias of the initial estimator so that the de-biased estimator can be used for inference purpose. Inspired by such an idea, many recent papers have focused on the relevant generalizations for the  de-biased approach.  \cite{CG17} study the optimal expected lengths of confidence intervals for general linear functions of a high-dimensional regression vector from both the minimax and the adaptive perspective.  \cite{DBZ17} propose some bootstrap methodology for individual and simultaneous inference in high-dimensional linear models with possibly non-Gaussian and heteroscedastic errors.  \cite{CGM21} study the GLMs with binary outcomes via the optimization in  quadratic form.   Along this line of research,  \cite{CCK22} further make an extension to the time series framework while \cite{LZT23} combine the de-biased  approach with transfer learning.
 However, to our knowledge, no literature has discussed the statistical inference for the high dimensional GLMs in the context of the expectile regression.

 In this paper, we aim to extend the study of the inference of linear expectile regression to the high-dimensional setting,  which allows the number of covariates to grow with the sample size and diverge to infinity. The main contributions are four-folds. First and foremost, unlike the equal-weight methods for bias correction in the high-dimensional mean regression models \citep[e.g.,][]{ZZ14, JM14}, we develop the bias correction procedure for high-dimensional expectile regression models with the expectile-specified random weights. 
Such a weighting strategy is essential to the de-biased procedure of the inference problems in high-dimensional generalized linear models, e.g., \cite{VBRD14,CGM21}, and can be of independent interest to study. 
 Furthermore, 
we extend the study by \cite{VBRD14} who consider a twice differentiable and local Lipstchiz loss function which excludes the expectile loss function. We develop an alternative proving strategy to show that the estimation errors in the preliminary estimator have little impact on the estimation of the expectile-specified random weights.
Secondly, we extend the choice of the regularizer to a broader category, that is the amenable regularizers discussed by \cite{LW15} and \cite{Loh17}.  This extension provides an avenue for the application of a wider range of penalties, such as the SCAD regularizer, which exhibits favorable properties for high-dimensional model selection. 
  Thirdly, our paper also delves into precision matrix estimation in a high-dimensional setting, further advancing previous research by incorporating random weights and considering estimation errors in those weights. This extends the results by \cite{FFW09} and \cite{LW15} which also utilize the non-convex regularizer in the precision matrix estimation. 
  Finally, we establish a test statistic for multivariate testing within the high-dimensional expectile regression framework. This test is applied in two the empirical study cases, one in finance, revealing that momentary supply may possess predictive power in stock returns, the other in Genomics, screening important genes related to the target gene `TLR8'.
 
The rest of the article is organized as follows. In Section~\ref{sec2}, we introduce the high-dimensional expectile regression and the inference problem. In Section~\ref{sec3}, we introduce the expectile-specified weighting matrix and propose the de-biased estimator of penalized expectile regression. Then we approximate the   pseudo-inverse of generalized Hessian matrix by the node-wise regression. In Section~\ref{sec4}, we establish a Wald-type test based on the asymptotic normality of the de-biased estimator. Section~\ref{sec5} provides numerical results.
Section~\ref{sec6} applies the proposed method to a financial dataset and a genetic dataset.
Section~\ref{sec7}  contains more discussion along with possible future works. 

{\bf Notations}. 
For a $p- $dimensional vector $ \boldsymbol{v} = (v_1,...,v_p)^\top \in \mathbb{R}^p $, and $1\leq q < \infty $, we define $\Vert \boldsymbol{v} \Vert_{q} = (\sum_{i=1}^p |v_i|^q)^{1/q} $, and particularly, we denote by 
 $\Vert \boldsymbol{v} \Vert_{\infty} = \max_{1\leq i\leq p}|v_i| $ the infinity norm of a vector. Denote $\Vert \boldsymbol{v} \Vert_{0} = | \text{supp}(\boldsymbol{v} ) | $, where $\text{supp}(\boldsymbol{v} ) = \{i; v_i \neq 0 \} $ is the active set. 
Moreover, we denote by $\langle \boldsymbol{u},\boldsymbol{v}\rangle = \boldsymbol{u}^\top\boldsymbol{v} $ the inner product of $\boldsymbol{u},\boldsymbol{v} \in \mathbb{R}^p $.
Furthermore, for ${\cal A} \subseteq \left\{1,...,p \right\} $, denote $|{\cal A} | $ by the cardinality of $\cal A$. We let $\boldsymbol{v}_{\cal A} = (v_i)_{i\in {\cal A}}$  and ${\cal A}^C $ be the complement of $\cal A$.   For a differentiable function $g: \mathbb{R}^p  \to \mathbb{R} $, we denote by $\nabla g(\boldsymbol{v}) = \partial g(\boldsymbol{v})/\partial \boldsymbol{v} $ and $ \nabla_{\cal A} g(\boldsymbol{v}) = \partial g(\boldsymbol{v})/\partial \boldsymbol{v}_{\cal A} $.    For a matrix $\mathbf{Q} = (Q_{ij}) $, we denote by $\mathbf{Q}^\top $ the transpose of the matrix $\mathbf{Q} $, $ \Vert \mathbf{Q} \Vert_{\infty} = \max_{ij} |Q_{ij}|$ the element-wise sup-norm, $\Vert \mathbf{Q} \Vert_{l_1} = \max_{j}\sum_{i} |Q_{ij}| $ the the $l_1$ norm, and $\Vert \mathbf{Q} \Vert_{l_\infty} = \max_{i}\sum_{j} |Q_{ij}| $ the the $l_\infty$ norm. For symmetric matrix $\mathbf{Q} $, we denote by $\lambda_{\min}(\mathbf{Q} ) $ and $\lambda_{\max}(\mathbf{Q} ) $ its minimal and maximal eigenvalues. Particularly, we denote by $\mathbf{I} $ the unit matrix and by $\boldsymbol{e}_j$ its $j-$th column. Let $a\vee b $ and $a\wedge b $ denote $\max\{a,b\} $ and $\min\{a,b\} $, respectively; and let $a_n\asymp b_n $ denote that $a_n=O(b_n) $ and $b_n= O(a_n) $ hold jointly.


\section{High-dimensional expectile regression and the inference problem}\label{sec2}
We start by introducing the asymmetric square loss function, or the so-called expectile loss function (e.g.,  \cite{NP87} and \cite{Efr91}), 
\begin{equation*}
   \rho_{\tau}(u) = |\tau - \mathbb{I}(u<0)|u^2 = \left\{\begin{aligned} \nonumber
    &\tau u^2, ~~~~\qquad u \ge 0,\\
    &(1-\tau) u^2,\quad u < 0,\\
   \end{aligned}
    \right.
\end{equation*}   
where $\tau \in \left(0,1\right) $ is a positive constant. 
Then the $\tau- $th expectile of a random variable $Y $ can be defined as 
\begin{equation*}
 m_{\tau} (Y) = \argmin_{m} {\sf E}[ \rho_{\tau}(Y - m)].
\end{equation*} 
The expectile loss function assigns different weights on the squared loss associated with the sign of the residual $Y- m_{\tau} (Y) $.
It is easy to check that $m_{\tau}(Y - m_{\tau} (Y))=0 $ and $m_{1/2}(Y)={\sf E}[Y] $. It is also worth mentioning that the expectile loss function is not second-order differentiable due to the discontinuity of the first-order derivative at 0.



Next, we introduce the high-dimensional expectile framework. Consider a set of $n$ independent and identically distributed multivariate random variables $\{y_{i},\boldsymbol{X}_{i} \}, i \in \{1,...,n\}$, where $ \boldsymbol{X}_{i}  = (x_{i1},...,x_{ip})^\top $ is a $p$-dimensional covariates and $y_{i} $ is a scalar variable, where the number of covariates $p$ can be larger then the sample size $n$. We assume that they are generated by the following high-dimensional expectile linear model:
\begin{equation}\label{hlm01}
  y_{i}  = \boldsymbol{X}_{i}^\top \boldsymbol{\beta}^*   + \epsilon_{i},
\end{equation}
  where $\boldsymbol{\beta}^*  = \argmin_{\boldsymbol{\beta}   \in \mathbb{R}^p} {\sf E} [\rho_{\tau}(Y_{i} - \boldsymbol{X}_{i}^\top\boldsymbol{\beta}  )]$ is a $p $-dimensional vector of parameters, $\epsilon_{i} $ is the error term.  We omit the subscript $
\tau$ of $\boldsymbol{\beta}^*$ hereafter when
there is no confusion.   For identification purpose, we assume that $\{\epsilon_{i}\},i = 1,...,n\ $ are i.i.d distributed and satisfy $m_{\tau}(\epsilon_{i}|\boldsymbol{X}_{i})=0 $, or equivalently ${\sf E} \left[\rho'_{\tau}(\epsilon_{i}) |\boldsymbol{X}_{i}\right] = 0 $.  
This framework allows a general conditional heteroscedastic setting or the so-called location-scale framework, that is $\epsilon_{i}=\sigma(\boldsymbol{X}_{i})z_i$, where $z_i$ is the innovation that is independent of $\boldsymbol{X}_{i}$ and the scale function $\sigma(\boldsymbol{X}_{i})$ can be either of a linear form (e.g.,  \cite{AM87} and \cite{GZ16})  or a non-parametric form (e.g., \cite{RRS96} and \cite{FY98}). 

  Under this high-dimensional framework, a popular approach to estimate the coefficient $\boldsymbol{\beta} $ is given by the regularized  asymmetric-least-squares (ALS),
\begin{equation} \label{est01}
  \hat{\boldsymbol{\beta}  } = \argmin \limits_{\Vert\boldsymbol{\beta}\Vert_1\leq R } L_n(\boldsymbol{\beta}  ) + P_{\lambda}(\boldsymbol{\beta}  ) = \argmin \limits_{\Vert\boldsymbol{\beta}\Vert_1\leq R} \dfrac{1}{2n}\sum_{i=1}^n \rho_{\tau}(y_i - \boldsymbol{X}_i^\top  \boldsymbol{\beta}  ) +P_{\lambda}(\boldsymbol{\beta}  ) ,
\end{equation}
where $L_n(\boldsymbol{\beta}  )$ is the expectile loss function and the constant $1/2$ is introduced for the sake of simplicity in the statistical results. 
Regarding the regularizer $P_{\lambda}(\boldsymbol{\beta}  ) $ with a tuning parameter $\lambda>0$, we consider the amenable regularizer which is developed by \cite{ LW15} and formally stated in \cite{LW17}. This category encompasses both convex and non-convex regularizers, including the Lasso, MCP, SCAD, among others. Due to the potential non-convexity of the regularizer, following \cite{LW15, LW17}, we add the constraint $\Vert \boldsymbol{\beta} \Vert_1 \leq R$ in order to ensure that a global minimum $\hat{\boldsymbol{\beta}}$
exists. The boundary $R$ is taken as another tuning parameter which is chosen to satisfy $ \Vert \boldsymbol{\beta}^* \Vert_1 \leq R $ to ensure $\boldsymbol{\beta}^*$ is feasible. Specifically,  the amenable regularizer is defined as follows:

\begin{definition}\label{amenable-p}
  (Amenable regularizer). Suppose that the regularizer $P_{\lambda}(\boldsymbol{\beta})$ is separable across each coordinate, that is,
  \begin{equation*}
    P_{\lambda}(\boldsymbol{\beta}) = \sum _{j = 1}^{p} p_{\lambda}(\beta_{j}), 
  \end{equation*}
  where $p_{\lambda}: \mathbb{R} \to \mathbb{R} $ is some scalar function. If $ p_{\lambda}$  satisfies the following conditions (i)--(vi), we say that $P_{\lambda}$ (or $p_{\lambda}$) is $\mu $-amenable. Moreover, if $ p_{\lambda}$ further satisfies the condition (vii), we say that $P_{\lambda}$ (or $p_{\lambda}$) is $(\mu,\gamma)-$amenable.
  \begin{enumerate}[(i)]
    \item The scalar function $p_{\lambda}(\cdot) $  is symmetric around zero and $p_{\lambda} (0)= 0 $.
    \item The scalar function $p_{\lambda}(\cdot) $  is non-decreasing on $\mathbb{R}^{+} $.
    \item The scalar function $\dfrac{p_{\lambda}(t)}{t} $  is non-increasing on $\mathbb{R}^{+} $.
    \item The scalar function $p_{\lambda}(t) $  is differentiable for all $t\neq 0 $.
    \item $\lim_{t \to 0^{+}} p'_{\lambda}(t) = \lambda. $
    \item There exists a constant $\mu > 0 $ such that the scalar function $p_{\lambda}(t) + \dfrac{\mu}{2}t^{2} $ is convex.
    \item There exists a constant $\gamma >0 $ such that $p'_{\lambda}(t)=0 $ for all $t \geq \gamma \lambda $.
  \end{enumerate} 
\end{definition}


 The conditions (i)--(iii) are relatively mild and feasible for a large variety of regularizers, which is also discussed in \cite{ZZ12}. 
The conditions (iv) and (v) exclude some unfeasible regularizers,  such as the capped $l_1$ regularizer, which is not differentiable on many points on the positive real line, and the ridge 
regularizer due to its behavior at the origin. 
The condition (vi) represents for the weak convexity which is proposed by \cite{VJ82}, and it can be viewed as a curvature constraint that controls the non-convexity of the regularizer.
Moreover, the condition (vii) ensures the oracle property of the corresponding estimator and a similar condition can be found in \cite{WKL14}. 
More discussion and properties of the amenable regularizers can be found in \cite{Loh17} and \cite{LW15, LW17}.

\begin{remark}
The Lasso regularizer, $p_{\lambda} (u)= |u| $, is convex and $0- $amenable. However, it is  not $(0,\gamma)- $amenable for any $0<\gamma<\infty$. 
\end{remark}
\begin{remark}The Smoothly clipped absolute deviation (SCAD) regularizer is non-convex, following \cite{FL01}, takes the form 
 $$
  p_{\lambda}(u) = \lambda|u|\cdot \mathbb{I}(|u|\leq \lambda)+ \dfrac{2a\lambda|u| - u^2-\lambda^2}{2(a-1)} \cdot\mathbb{I}( \lambda < |u|\leq a\lambda)+ \dfrac{(a+1)\lambda^2}{2}\cdot \mathbb{I}( a\lambda < |u|),
 $$
  where $a>2 $ is a fixed constant. Moreover, the SCAD regularizer is $(\mu,\gamma)- $amenable, with $\mu = \dfrac{1}{a-1} $ and $\gamma = a $.
\end{remark}

Our main interest falls on the statistical inference for the parameter vector $\boldsymbol{\beta}  ^* $ in the high-dimensional expectile linear regression model (\ref{hlm01}). To be specific, given a certain expectile level $\tau $, we want to test, for a given $p_0
$-by-$p$ full row-rank hypothesis matrix $\mathbf{R}$ and a $p_0$-dimensional vector $\boldsymbol{c}$,
\begin{equation}\label{test01}
  {\sf H}_{0}: \mathbf{R}{ \boldsymbol\beta}^* = \boldsymbol{c} \quad \text{versus} \quad {\sf H}_{1}: \mathbf{R}{ \boldsymbol\beta}^*  \neq \boldsymbol{c},
\end{equation}
where $p_0$ is the number of constraints. For example, when $\mathbf{R}=(1,0,0,\ldots,0)$ and $\boldsymbol{c}=0$, we have ${\sf H}_{0}: {\beta}_1^*=0$. When $\mathbf{R}=(1/p,1/p,\ldots,1/p)$ and $\boldsymbol{c}=0$, we have ${\sf H}_{0}: \sum_{j=1}^p{\beta}_j^*/p=0$.

\section{The de-biased method under the expectile framework.}\label{sec3}
In this section, we establish a test statistic for hypotheses (\ref{test01}) by developing the de-biased method under the expectile framework.


\subsection{The de-biased estimator.}\label{sec3.1}

To begin with, we introduce a $\boldsymbol{\beta}  - $related weighting matrix $\mathbf{W}_{\boldsymbol{\beta}  } = {\sf diag}(w_{\boldsymbol{\beta} ,1},...,w_{\boldsymbol{\beta} ,n}) $, where 
\begin{equation*}
  w_{\boldsymbol{\beta}  ,i} = |\tau - \mathbb{I}(y_{i}-\boldsymbol{X}_{i}^\top \boldsymbol{\beta}   <0 )|^{1/2}.
\end{equation*}
Consequently, the expectile loss function in (\ref{est01}) can be rewritten as follows,  
\begin{equation*}
  L_n(\boldsymbol{\beta}  ) = \dfrac{1}{2n}\sum_{i=1}^n w^2_{\boldsymbol{\beta}  ,i}(y_{i} -\boldsymbol{X}_{i}^\top \boldsymbol{\beta}   )^2  = \dfrac{1}{2n}\Vert (\mathbf{W}_{\boldsymbol{\beta}  }\boldsymbol{Y} - \mathbf{W}_{\boldsymbol{\beta}  }\mathbf{X}\boldsymbol{\beta}  )\Vert^2_{2},
\end{equation*}
where $\boldsymbol{Y}=(y_1,\ldots,y_n)^\top$ and $\mathbf{X}=(\boldsymbol{X}_{1},\ldots,\boldsymbol{X}_{n})^\top$.

Featured by the weighting matrix $\mathbf{W}_{\boldsymbol{\beta}  }$, the expectile loss function shares a similarity with the Generalized Least Squares (GLS) loss function.  However, it is noteworthy to point out that the weighting matrix depends on $\boldsymbol{\epsilon} $ and also depends on $\mathbf{X}$ when $\boldsymbol{\beta}  \neq\boldsymbol{\beta}  ^* $. 
Then inspired by the de-biased LASSO proposed by \cite{VBRD14}, we introduce the de-biased estimator of regularized expectile regression (e.g., \cite{GZ16} and \cite{ZCZ18}) as follows,
\begin{equation}\label{debias-est}
  \hat{\boldsymbol{\beta}  }_{de} = \hat{\boldsymbol{\beta}  }+ \hat{\boldsymbol{\Theta}}_{\hat{\boldsymbol{\beta}  }} (\mathbf{W}_{\hat{\boldsymbol{\beta}  }}\mathbf{X})^\top\mathbf{W}_{\hat{\boldsymbol{\beta}  }}\hat{\boldsymbol{\epsilon}}/n,
\end{equation} 
 where $\hat{\boldsymbol{\epsilon}} = \boldsymbol{Y}-\mathbf{X}\hat{\boldsymbol{\beta}  }$ and $\hat{\boldsymbol{\Theta}}_{\hat{\boldsymbol{\beta}  }} $ is an estimate of the inverse of the covariance matrix (i.e., the precision matrix) of $w_{\boldsymbol{\beta}^*  ,i}\boldsymbol{X}_i$. Obtaining $\hat{\boldsymbol{\Theta}}_{\hat{\boldsymbol{\beta}  }} $ poses a twofold challenge. Firstly, $w_{\boldsymbol{\beta}^*  ,i}$ is not directly observable, and we must estimate it using $w_{\hat{\boldsymbol{\beta}},i}$. Secondly, given that $p>n$, the (weighted) sample covariance matrix is non-invertible. While the second challenge is well-documented and solved in the literature, the first one requires further attention, which we address in Section \ref{sec3.2}.
To be more specific, the de-biased estimator allows the following decomposition,
 \begin{equation}\label{de-bias01}
  \begin{aligned}
  \hat{\boldsymbol{\beta}  }_{de} - \boldsymbol{\beta}  ^*  &=  \hat{\boldsymbol{\Theta}}_{\hat{\boldsymbol{\beta}  }} \mathbf{X}^\top\mathbf{W}_{{\boldsymbol{\beta}  }^*}^2\boldsymbol{\epsilon}/n - (\hat{\boldsymbol{\Theta}}_{\hat{\boldsymbol{\beta}  }} \mathbf{X}^\top\mathbf{W}_{{\boldsymbol{\beta}  }^*}^2\boldsymbol{\epsilon}/n- \hat{\boldsymbol{\Theta}}_{\hat{\boldsymbol{\beta}  }} \mathbf{X}^\top\mathbf{W}_{\hat{\boldsymbol{\beta}  }}^2\boldsymbol{\epsilon}/n) - (\hat{\boldsymbol{\Theta}}_{\hat{\boldsymbol{\beta}  }}\hat{{\boldsymbol{\Sigma}}}_{\hat{\boldsymbol{\beta}  }}-\mathbf{I})(\hat{\boldsymbol{\beta}  } - \boldsymbol{\beta}  ^*)\\
  & :=\hat{\boldsymbol{\Theta}}_{\hat{\boldsymbol{\beta}  }} \mathbf{X}^\top\mathbf{W}_{{\boldsymbol{\beta}  }^*}^2\boldsymbol{\epsilon}/n- {\boldsymbol{ \Delta}}^{(1)} -  {\boldsymbol{ \Delta}}^{(2)},
  \end{aligned}
    \end{equation}
  where 
  \begin{equation*}
 \hat{{\boldsymbol{\Sigma}}}_{\boldsymbol{\beta}  } = \mathbf{X}^\top_{\boldsymbol{\beta}  }\mathbf{X}_{\boldsymbol{\beta}  } /n, \quad \text{and} \quad  \mathbf{X}_{\boldsymbol{\beta}  } = \mathbf{W}_{\boldsymbol{\beta}  } \mathbf{X}. 
\end{equation*}
  The term ${\boldsymbol{ \Delta}}^{(1)} $  is related to the approximation error of the weighting matrix $\mathbf{W}_{\hat{\boldsymbol{\beta}  }}^2 $ with respect to $\mathbf{W}_{{\boldsymbol{\beta}  }^*}^2 $ and  the term ${\boldsymbol{ \Delta}}^{(2)} $ is related to the approximation of the inverse of the covariance matrix and the  bias of the estimation $(\hat{\boldsymbol{\beta}  } - \boldsymbol{\beta}  ^*) $.
  To utilize the proposed estimator for inference, it is essential to establish the asymptotic normality of $\hat{\boldsymbol{\Theta}}_{\hat{\boldsymbol{\beta}}} \mathbf{X}^\top\mathbf{W}_{{\boldsymbol{\beta}  }^*}^2\boldsymbol{\epsilon}/\sqrt{n} $ and demonstrate the negligibility of $\sqrt{n}\Vert{\boldsymbol{ \Delta}}^{(1)}\Vert_{\infty} $ and $\sqrt{n} \Vert{\boldsymbol{ \Delta}}^{(2)}\Vert_{\infty}$. 

\subsection {The pseudo-inverse from node-wise regression }\label{sec3.2}

In this section, we employ a widely-used node-wise regression method (see  \cite{VBRD14} and \cite{MNB06}) to construct $\hat{\boldsymbol{\Theta}}_{\hat{\boldsymbol{\beta}  }} $, which also serves as a pseudo-inverse of $\hat{{\boldsymbol{\Sigma}}}_{\hat{\boldsymbol{\beta}  }} $.  

Let $\boldsymbol{X}_{{\boldsymbol{\beta}  },(j)}$ denote the $j$-th column of the weighted design matrix $\mathbf{X}_{{\boldsymbol{\beta}  }} $, and $\mathbf{X}_{{\boldsymbol{\beta}  },(-j)} $ represent the weighted design matrix without the $j$-th column. To handle the high dimensionality, we apply node-wise regression within the classic regularized framework. Specifically, for each $j\in\{1,\ldots, p\}$, we seek the solution $\hat{\boldsymbol{\varphi}}_{\hat{\boldsymbol{\beta}},j} $ that minimizes the following objective function:
\begin{equation}\label{op-de}
\hat{\boldsymbol{\varphi}}_{\hat{\boldsymbol{\beta}  },j} := \argmin_{\Vert\boldsymbol{\varphi}\Vert_1\leq R_j}  \Vert \boldsymbol{X}_{\hat{\boldsymbol{\beta}  },(j)} - \mathbf{X}_{\hat{\boldsymbol{\beta}  },(-j)} \boldsymbol{\varphi} \Vert^2_{2}/(2n) + Q_{\lambda_{j}}(\boldsymbol{\varphi} ),   
\end{equation}
where $\hat{\boldsymbol{\varphi}}_{\hat{\boldsymbol{\beta}  },j} $ is a $(p-1) $-dimension vector with elements $ \hat{{\varphi}}_{\hat{\boldsymbol{\beta}  },jl}, l\in\{1,\ldots, p\}, l \neq j $, and $Q_{\lambda_{j}}(\cdot) $ is the regularizer with tuning parameter $\lambda_{j}$. Again,
following \cite{LW15}, the constraint $\Vert \boldsymbol{\varphi} \Vert_1 \leq R_j$  ensures a global minimum. 
By letting $\hat{{\varphi}}_{\hat{\boldsymbol{\beta}  },jj} = -1 $, we can define 
\begin{equation*}
  \hat{\boldsymbol{\Phi}}_{\hat{\boldsymbol{\beta}  }} =(-\hat{{\varphi}}_{\hat{\boldsymbol{\beta}  },ij})= 
\begin{pmatrix}
  1 & -\hat{{\varphi}}_{\hat{\boldsymbol{\beta}  },12} & \dots & -\hat{{\varphi}}_{\hat{\boldsymbol{\beta}  },1p}\\
  -\hat{{\varphi}}_{\hat{\boldsymbol{\beta}  },21} &1 & \dots & -\hat{{\varphi}}_{\hat{\boldsymbol{\beta}  },2p}\\
  \vdots & \vdots & \ddots & \vdots\\
  -\hat{{\varphi}}_{\hat{\boldsymbol{\beta}  },p1} &  -\hat{{\varphi}}_{\hat{\boldsymbol{\beta}  },p2} & \dots &1 
\end{pmatrix}.
\end{equation*}
Moreover, we define a diagonal matrix $
\hat{\mathbf{D}}_{\hat{\boldsymbol{\beta}  }}^2 = {\sf diag}( \hat{\phi}^2_{\hat{\boldsymbol{\beta}  },1},...,\hat{\phi}^2_{\hat{\boldsymbol{\beta}  },p} ),
$
with 
$\hat{\phi}^2_{\hat{\boldsymbol{\beta}  },j} =  \Vert \boldsymbol{X}_{\hat{\boldsymbol{\beta}  },(j)} - \mathbf{X}_{\hat{\boldsymbol{\beta}  },(-j)} \hat{\boldsymbol{\varphi}}_{\hat{\boldsymbol{\beta}  },j} \Vert^2_{2}/n + {\hat{\boldsymbol{\varphi}}^\top_{\hat{\boldsymbol{\beta}  },j} \nabla Q_{\lambda_{j}}(\hat{\boldsymbol{\varphi}}_{\hat{\boldsymbol{\beta}  },j})}$.
Then we can define the pseudo-inverse of $\hat{{\boldsymbol{\Sigma}}}_{\hat{\boldsymbol{\beta}  }} $ as $\hat{\boldsymbol{\Theta}}_{\hat{\boldsymbol{\beta}  }} $ with
\begin{equation*}
\hat{\boldsymbol{\Theta}}_{\hat{\boldsymbol{\beta}  }} = \hat{\mathbf{D}}_{\hat{\boldsymbol{\beta}  }}^{-2}\hat{\boldsymbol{\Phi}}_{\hat{\boldsymbol{\beta}  }}.
\end{equation*}
With the definitions, we can write  
$\hat{\boldsymbol{\Theta}}_{\hat{\boldsymbol{\beta}  },j}= \hat{\boldsymbol{\Phi}}_{\hat{\boldsymbol{\beta}  },j}/\hat{\phi}_{\hat{\boldsymbol{\beta}  },j}^2$  for $j\in\{1,\ldots, p\} $,  where $\hat{\boldsymbol{\Phi}}_{\hat{\boldsymbol{\beta}  },j}$ and $\hat{\boldsymbol{\Theta}}_{\hat{\boldsymbol{\beta}  },j} $
 are the $j$-th column of $\hat{\boldsymbol{\Phi}}^\top_{\hat{\boldsymbol{\beta}  }} $ and $\hat{\boldsymbol{\Theta}}^\top_{\hat{\boldsymbol{\beta}  }} $, respectively. 
 
 \begin{remark}
  Although $\hat{\boldsymbol{\Sigma}}_{\hat{\boldsymbol{\beta}}}$ is symmetric, its approximated inverse $\hat{\Theta}_{\hat{\boldsymbol{\beta}  }}$ is not guaranteed to be symmetric. 
  Moreover, the value of $\hat{\boldsymbol{\Theta}}_{\hat{\boldsymbol{\beta}  }} $ relies on $\hat{\boldsymbol{\beta}}$, which is different from the classic de-biased procedure under the OLS loss function and the classic high-dimensional precision matrix estimation problem, e.g., \cite{FHT08}, \cite{CLL11}, and \cite{LW17b}.
   
 \end{remark}


\subsection{Testing procedure}
We summary the testing procedure as follows, 
\begin{itemize}
    \item Step 1: Obtain the ALS estimator $ \hat{\boldsymbol{\beta} }$ by solving the optimization problem (\ref{est01}) under the expectile framework with an amenable regularizer. 
    \item Step 2: Estimate $\hat{\boldsymbol{\Theta}}_{\hat{\boldsymbol{\beta}  }}$,  the  pseudo-inverse of the generalized Hessian matrix $\hat{{\boldsymbol{\Sigma}}}_{\boldsymbol{\hat{\beta}}  }$,  by
    applying the node-wise regression method (\ref{op-de}) on the expectile-weighted design matrix $\mathbf{X}_{\hat{\boldsymbol{\beta} }}$ column by column. 
    \item Step 3: Obtain the de-biased estimator $\hat{\boldsymbol{\beta}}  _{de} $ by (\ref{debias-est}). 
 \item Step 4: Construct a Wald-type test,
\begin{equation}\label{eq14}
T_\mathbf{R}=(\mathbf{R}(\hat{\boldsymbol{\beta}}  _{de}-{\boldsymbol{\beta}}^*))^\top\hat{\boldsymbol{\Omega}}_\mathbf{R}^{-1}\mathbf{R}(\hat{\boldsymbol{\beta}}  _{de}-{\boldsymbol{\beta}}^*),
\end{equation}
for the hypothesis testing \eqref{test01}, where 
$
 \hat{\boldsymbol\Omega}_{\mathbf{R}} =\mathbf{R}\hat{\boldsymbol\Omega} \mathbf{R}^\top
$
 is the estimator of $
 {\boldsymbol\Omega}_{\mathbf{R}} =\mathbf{R}{\boldsymbol\Omega} \mathbf{R}^\top
$, the asymptotic variance of $\mathbf{R}(\hat{\boldsymbol{\beta}}  _{de}-{\boldsymbol{\beta}}^*)$ with
${\boldsymbol\Omega}$ being the asymptotic variance matrix of $\hat{\boldsymbol{\beta}}  _{de}$  defined by 
\begin{equation*}
{\boldsymbol\Omega}= (\omega_{ij}) =\boldsymbol{\Theta}_{\boldsymbol{\beta}^*} {\sf E} \left[ \boldsymbol{X}_i \boldsymbol{X}_i^\top w^4_{{\boldsymbol{\beta} ^*},i}\epsilon^2_i \right]{\boldsymbol{\Theta}}^\top_{\boldsymbol{\beta}^*},
\end{equation*}
and $ \hat{\boldsymbol\Omega}$ its estimator defined  by
 \begin{equation*}
 \hat{\boldsymbol\Omega} = (\hat\omega_{ij}) =\hat{\boldsymbol{\Theta}}_{\hat{\boldsymbol{\beta}}} \left(\frac{1}{n}\sum_{i=1}^n \boldsymbol{X}_i \boldsymbol{X}_i^\top w^4_{\hat{{\boldsymbol{\beta} }},i}\hat{\epsilon}^2_i \right)\hat{\boldsymbol{\Theta}}^\top_{\hat{\boldsymbol{\beta}}}.
 \end{equation*}
The test statistic asymptotically follows a $\chi^2$ distribution with the degree of freedom equal to the rank of $\mathbf{R}$, which we will show in Theorem \ref{thm3}.  Given a statistical significance level $\alpha$, the rejection region (which is a subset of the sample space) corresponding to the proposed test is 
\begin{equation*}
   \left\{\mathbf{X}: T_\mathbf{R}  >\chi^2_{\alpha}(p_0)\right\},
\end{equation*}
where $p_0$ is the rank of $\mathbf{R}$ and $\chi^2_{\alpha}(p_0)$  the upper $\alpha$-quantile of a $\chi^2$ distribution with $p_0$ degrees of freedom.
\end{itemize}






\section {Assumptions and statistical results. }\label{sec4}
In this section, we impose some technical conditions and establish statistical properties of the de-biased estimators. 

\subsection{Assumptions}


\begin{assumption}\label{assump1}
$(\boldsymbol{X}_i^\top, \epsilon_i)$ are i.i.d. random vectors with  ${\sf E}[w^2_{\boldsymbol{\beta}  ^*,i}\epsilon_i|\boldsymbol{X}_i] = 0$ and, for some positive constants $c_1, \ldots, c_6$,\\ 
(i) ${\sf E}[\epsilon_{i}^{4}]<c_1$, $\max_{i=1}^n|\epsilon_i
|=O_p(K)$, $\Pr \{|\epsilon_{i}|>c_2K\})=o((\ln p) /n)$, and {  $\sup_{x\in(-\infty,+\infty)}f_\epsilon(x)<\infty$,
where $f_\epsilon(\cdot) $ is the probability density function of $\epsilon$; } \\ 
 (ii) $\max_{j}{\sf E}[x_{ij}^{4}]<c_3$, $\Vert \mathbf{X}\Vert_{
 \infty}=O_p(K)$, $\Pr \{|x_{ij}|>c_4K\})=o((\ln p) /n)$;\\ 
(iii) $K^2\sqrt{(\ln p)/n}\to 0 $ as $n\to \infty$;\\
(iv)
$ 0 < c_5 < \lambda_{\min}({\boldsymbol{\Sigma}}
  ) < \lambda_{\max}({\boldsymbol{\Sigma}}) < c_6 < \infty $. 

\end{assumption}

Since we do not impose independence between $\boldsymbol{X}_{i} $ and $\epsilon_i$, the assumption allows for conditional   heteroscedastic cases.
The upper bound for $\Vert \mathbf{X}\Vert_{
 \infty}$ in Assumption 1(ii) is also considered in (D1) of \cite{VBRD14}, which is very general, as it include the case for bounded variables with $K=1$ and sub-Gaussian variables with $K=\ln (n \vee p)$. More generally, heavy tail cases may also included. For example, if $X_{ij}$ and $\epsilon_i$ have bounded $(4+\delta)$th moment, we can check using the Markov's inequality that the tail probability  assumption holds with $K=(np)^{1/(4+\delta)}$, and when  $p\asymp n^\eta$  it is easy to check that Assumption  1(iii),  also considered in (D2) of \cite{VBRD14}, holds for $\delta> 4\eta$. Assumption 1(iii) is considered in (D3) of \cite{VBRD14} for the identification reason.

\subsection{Preliminary theoretical results }
{  
In order to ensure the completeness of the paper, we start by stating the ${l}_2$ error bound, the ${l}_1$ error bound and the prediction error bound of the ALS estimators $\hat{\boldsymbol{\beta}}$. Similar results have been proved 
by \cite{GZ16} and \cite{LZZ22} under the sub-Gaussian settings.   Denoting by ${\cal A} = \{j; {\beta}^*_j \neq 0  \} $ the active set of the covariates and its cardinality  by $s := |{\cal A} | $, we have the following proposition. 
\begin{proposition}
\label{prop1}
  (The bounds for the initial estimators). Suppose that Assumption \ref{assump1} is satisfied. If $P_{\lambda}(\boldsymbol{\beta}  )$ is a $\mu$-amenable regularizer, $\min\{\tau,1-\tau\}\lambda_{\min}(\boldsymbol{\Sigma}) > 3\mu/4$, $\Vert \boldsymbol{\beta} \Vert_1 \leq R$, and $s\ln p/n = o(1)$, then by choosing $\lambda\geq c_7R\sqrt{\ln p/n}$ for some large positive constant $c_7$, the estimator given by the optimization (\ref{est01}) follows
 $$
  \Vert \hat{\boldsymbol{\beta}  } - \boldsymbol{\beta}  ^* \Vert_{2} = \mathcal{O}_{p}(\sqrt{s}\lambda),   \quad \Vert \hat{\boldsymbol{\beta}  } - \boldsymbol{\beta}  ^* \Vert_{1} = \mathcal{O}_{p}(s\lambda),   \quad \text{and} \quad \Vert \mathbf{X} (\hat{\boldsymbol{\beta}  } - \boldsymbol{\beta}  ^*) \Vert^2_{2}/n = \mathcal{O}_{p}(s\lambda^2). $$ 
\end{proposition}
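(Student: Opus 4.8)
The plan is to place the estimator inside the general framework for regularized $M$-estimators with possibly non-convex penalties of \cite{LW15, LW17}, for which it suffices to verify two ingredients: a deviation bound on the gradient of the loss at the truth, and a restricted strong convexity (RSC) condition for $L_n$. Writing $\hat{\boldsymbol{\nu}} = \hat{\boldsymbol{\beta}} - \boldsymbol{\beta}^*$ and $\mathcal{A}$ for the active set with $s = |\mathcal{A}|$, these two facts, combined with the optimality of $\hat{\boldsymbol{\beta}}$ and the feasibility $\Vert\boldsymbol{\beta}^*\Vert_1 \le R$, will deliver all three error bounds via a now-standard cone argument.

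First I would establish the deviation bound. A direct computation gives $\nabla L_n(\boldsymbol{\beta}^*) = -\frac{1}{n}\sum_{i=1}^n w^2_{\boldsymbol{\beta}^*,i}\epsilon_i\boldsymbol{X}_i$, and the identification condition ${\sf E}[w^2_{\boldsymbol{\beta}^*,i}\epsilon_i\mid\boldsymbol{X}_i]=0$ in Assumption \ref{assump1} renders each summand conditionally centered. Since $w^2_{\boldsymbol{\beta}^*,i}\le 1$, the summands $w^2_{\boldsymbol{\beta}^*,i}\epsilon_i x_{ij}$ have variance bounded by $\sqrt{{\sf E}[\epsilon_i^4]\,{\sf E}[x_{ij}^4]}=\mathcal{O}(1)$ under Assumption \ref{assump1}(i)--(ii); truncating at level $K^2$ (the discarded mass being $o((\ln p)/n)$ by the tail conditions) and applying a Bernstein inequality with a union bound over the $p$ coordinates gives $\Vert\nabla L_n(\boldsymbol{\beta}^*)\Vert_\infty = \mathcal{O}_p(\sqrt{(\ln p)/n})$, the truncation remainder being negligible because $K^2\sqrt{(\ln p)/n}\to 0$ by Assumption \ref{assump1}(iii). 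Hence the prescribed choice $\lambda \ge c_7 R\sqrt{(\ln p)/n}$ dominates $2\Vert\nabla L_n(\boldsymbol{\beta}^*)\Vert_\infty$ with probability tending to one once $c_7$ is large.

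The crux is the RSC condition, and here the non-smoothness of $\rho_\tau$ is handled through the strong monotonicity of $\rho'_\tau$ rather than a second-order Taylor expansion. Because $\rho'_\tau$ is nondecreasing with slope at least $2\min\{\tau,1-\tau\}$ wherever it is defined, one has the pointwise bound $[\rho'_\tau(\epsilon_i - \boldsymbol{X}_i^\top\boldsymbol{\nu}) - \rho'_\tau(\epsilon_i)](-\boldsymbol{X}_i^\top\boldsymbol{\nu}) \ge 2\min\{\tau,1-\tau\}(\boldsymbol{X}_i^\top\boldsymbol{\nu})^2$, so the gradient-difference form of the remainder obeys $\langle\nabla L_n(\boldsymbol{\beta}^*+\boldsymbol{\nu}) - \nabla L_n(\boldsymbol{\beta}^*),\boldsymbol{\nu}\rangle \ge \min\{\tau,1-\tau\}\,\boldsymbol{\nu}^\top(\mathbf{X}^\top\mathbf{X}/n)\boldsymbol{\nu}$. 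It then remains only to pass to the population level: a concentration argument under the truncation conditions of Assumption \ref{assump1} yields $\Vert\mathbf{X}^\top\mathbf{X}/n - \boldsymbol{\Sigma}\Vert_\infty = \mathcal{O}_p(K^2\sqrt{(\ln p)/n})$, which together with $\lambda_{\min}(\boldsymbol{\Sigma}) > c_5$ gives an RSC inequality of the form $\langle\nabla L_n(\boldsymbol{\beta}^*+\boldsymbol{\nu}) - \nabla L_n(\boldsymbol{\beta}^*),\boldsymbol{\nu}\rangle \ge \min\{\tau,1-\tau\}\lambda_{\min}(\boldsymbol{\Sigma})\Vert\boldsymbol{\nu}\Vert_2^2 - cK^2\sqrt{(\ln p)/n}\,\Vert\boldsymbol{\nu}\Vert_1^2$ uniformly over the feasible $\ell_1$-ball.

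Finally I would combine the two ingredients. The basic inequality $L_n(\hat{\boldsymbol{\beta}}) + P_\lambda(\hat{\boldsymbol{\beta}}) \le L_n(\boldsymbol{\beta}^*) + P_\lambda(\boldsymbol{\beta}^*)$, with the amenable-regularizer properties (i)--(v) and the weak convexity (vi), confines $\hat{\boldsymbol{\nu}}$ to a cone of the form $\Vert\hat{\boldsymbol{\nu}}_{\mathcal{A}^c}\Vert_1 \le 3\Vert\hat{\boldsymbol{\nu}}_{\mathcal{A}}\Vert_1$ (up to lower-order terms), on which $\Vert\hat{\boldsymbol{\nu}}\Vert_1 \le 4\sqrt{s}\Vert\hat{\boldsymbol{\nu}}\Vert_2$. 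The weak convexity subtracts $\mu$ from the effective curvature $\min\{\tau,1-\tau\}\lambda_{\min}(\boldsymbol{\Sigma})$, and the assumed margin $\min\{\tau,1-\tau\}\lambda_{\min}(\boldsymbol{\Sigma}) > 3\mu/4$ is exactly what keeps the net curvature strictly positive after the cone constant is absorbed; meanwhile $s\ln p/n = o(1)$ and Assumption \ref{assump1}(iii) make the tolerance term $K^2\sqrt{(\ln p)/n}\,\Vert\hat{\boldsymbol{\nu}}\Vert_1^2$ negligible against the curvature term. Solving the resulting quadratic inequality $\kappa\Vert\hat{\boldsymbol{\nu}}\Vert_2^2 \lesssim \lambda\sqrt{s}\Vert\hat{\boldsymbol{\nu}}\Vert_2$ gives $\Vert\hat{\boldsymbol{\nu}}\Vert_2 = \mathcal{O}_p(\sqrt{s}\lambda)$; the cone bound then yields $\Vert\hat{\boldsymbol{\nu}}\Vert_1 = \mathcal{O}_p(s\lambda)$, and reinserting into the RSC lower bound produces $\Vert\mathbf{X}\hat{\boldsymbol{\nu}}\Vert_2^2/n = \mathcal{O}_p(s\lambda^2)$. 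I expect the main obstacle to be the RSC step under only fourth-moment and truncation assumptions rather than sub-Gaussianity: the strong monotonicity of $\rho'_\tau$ dispatches the non-differentiability cleanly, so the delicate work lies in the $\Vert\cdot\Vert_\infty$ control of the sample Gram deviation for heavy-tailed covariates and in tracking the non-convexity budget $\mu$ against the curvature so that the factor $3/4$ indeed suffices.
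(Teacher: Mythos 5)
Your proposal follows the same skeleton as the paper's proof: the curvature inequality derived from the strong monotonicity of $\rho'_\tau$ is exactly the paper's \eqref{eqA.1} (citing Lemma 4 of \cite{GZ16}), the transfer to the population Gram matrix is \eqref{A1.01} together with the element-wise deviation bound of Lemma~\ref{lemmaA.2}, the gradient deviation bound is Lemma~\ref{lemmaA.3}, and the finish runs through the amenable-regularizer lemmas (Lemmas~\ref{lux_amen_convex} and~\ref{lux_amen_L1}) to get the cone inequality, the quadratic inequality in $\Vert\hat{\boldsymbol{\nu}}\Vert_2$, and the three bounds. Whether one enters via the basic inequality or, as the paper does, via the first-order condition combined with the $\mu$-amenable gradient inequality \eqref{A3.1} is immaterial.

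There is, however, a genuine gap in how you neutralize the Gram-deviation tolerance term $\Vert\hat{\boldsymbol{\Sigma}}-\boldsymbol{\Sigma}\Vert_\infty\Vert\hat{\boldsymbol{\nu}}\Vert_1^2$. You claim it is negligible against the curvature term by the cone inequality together with $s\ln p/n=o(1)$ and Assumption~\ref{assump1}(iii). On the cone $\Vert\hat{\boldsymbol{\nu}}\Vert_1\le 4\sqrt{s}\Vert\hat{\boldsymbol{\nu}}\Vert_2$, so the tolerance is of order $s\,\Vert\hat{\boldsymbol{\Sigma}}-\boldsymbol{\Sigma}\Vert_\infty\Vert\hat{\boldsymbol{\nu}}\Vert_2^2$; even with the sharp rate $\Vert\hat{\boldsymbol{\Sigma}}-\boldsymbol{\Sigma}\Vert_\infty=O_p(\sqrt{\ln p/n})$ this requires $s\sqrt{\ln p/n}=o(1)$, i.e.\ $s^2\ln p/n\to 0$, strictly stronger than the assumed $s\ln p/n=o(1)$, and with your stated rate $O_p(K^2\sqrt{\ln p/n})$ it is worse still. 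The paper never uses the cone for this term: it bounds $\Vert\hat{\boldsymbol{\nu}}\Vert_1\le 2R$ by feasibility of both $\hat{\boldsymbol{\beta}}$ and $\boldsymbol{\beta}^*$ in the $\ell_1$ ball, absorbs the resulting $2R\Vert\hat{\boldsymbol{\Sigma}}-\boldsymbol{\Sigma}\Vert_\infty\Vert\hat{\boldsymbol{\nu}}\Vert_1$ into $(\lambda/4)\Vert\hat{\boldsymbol{\nu}}\Vert_1$ on the event $\lambda\ge 8R\Vert\hat{\boldsymbol{\Sigma}}-\boldsymbol{\Sigma}\Vert_\infty$, and converts that into penalty terms plus $(\mu/4)\Vert\hat{\boldsymbol{\nu}}\Vert_2^2$ via Lemma~\ref{lux_amen_convex}; this is precisely why $\lambda$ is chosen proportional to $R\sqrt{\ln p/n}$ and why the curvature margin in \eqref{eq.A.10.1} is $3\mu/4$ (namely $\mu/2$ from the amenable gradient inequality plus $\mu/4$ from this absorption) rather than the $\mu$ you quote. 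Relatedly, the $K^2$ factor in your Gram bound is spurious: Lemma~\ref{lemmaA.2} shows the tail condition in Assumption~\ref{assump1}(ii) makes the untruncated part negligible while Bernstein gives $O_p(\sqrt{\ln p/n})+O_p(K^2\ln p/n)=O_p(\sqrt{\ln p/n})$ under Assumption~\ref{assump1}(iii); without removing that factor, even the $R$-based absorption would be incompatible with the prescribed choice $\lambda\ge c_7R\sqrt{\ln p/n}$ unless $K=O(1)$.
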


\medskip
Next, we discuss the asymptotic property of the de-biased estimator. 
From (\ref{de-bias01}), we can see that
the weighting matrix $\mathbf{W}_{\hat{\boldsymbol{\beta}  }} $ is involved in both ${\boldsymbol{ \Delta}}^{(1)} $ and ${\boldsymbol{ \Delta}}^{(2)} $ as well as the construction of $\hat{\boldsymbol{\Theta}}_{\hat{\boldsymbol{\beta}  }} $. Thus a key step to derive the asymptotic theory is to show that the weighting matrix $\mathbf{W}_{\hat{\boldsymbol{\beta}  }} $ can be replaced by $\mathbf{W}_{\boldsymbol{\beta}  ^*} $ with little cost. 
Indeed, we have the following theorem.



\begin{lemma}\label{lem1}
 (The bounds of the weighting factor). 
  Under the assumptions of Proposition \ref{prop1}, we have 
  \begin{equation*}
  |w^2_{\boldsymbol{\beta} ^*,i} - w^2_{\hat{\boldsymbol{\beta}}  ,i}| \leq  \mathbb{I}\left(|\epsilon_i| \leq |\boldsymbol{X}_i^\top (\hat{\boldsymbol{\beta}}   - \boldsymbol{\beta} ^*) |\right),
  \end{equation*}
holds for $i\in\{1,\ldots,n\}$ and 
 \begin{equation}\label{eqT42.4}
  \frac{1}{n}\sum_{i=1}^n |w^2_{\hat{\boldsymbol{\beta}  },i} - w^2_{\boldsymbol{\beta}  ^*,i}|  =O_p\left( K s\lambda\right).
   \end{equation}
\end{lemma}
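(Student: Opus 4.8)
The plan is to prove the pointwise inequality first, which is essentially deterministic, and then convert the resulting averaged indicator into the stochastic rate via a bounded-density argument. Observe that $w_{\boldsymbol{\beta},i}^2 = |\tau - \mathbb{I}(y_i - \boldsymbol{X}_i^\top\boldsymbol{\beta} < 0)|$ equals $\tau$ when the residual $y_i - \boldsymbol{X}_i^\top\boldsymbol{\beta}$ is nonnegative and $1-\tau$ when it is negative. Consequently $w_{\boldsymbol{\beta}^*,i}^2 \neq w_{\hat{\boldsymbol{\beta}},i}^2$ can occur only when the two residuals $\epsilon_i = y_i - \boldsymbol{X}_i^\top\boldsymbol{\beta}^*$ and $y_i - \boldsymbol{X}_i^\top\hat{\boldsymbol{\beta}} = \epsilon_i - \boldsymbol{X}_i^\top(\hat{\boldsymbol{\beta}}-\boldsymbol{\beta}^*)$ lie on opposite sides of zero, and in that event the difference equals $|2\tau-1| \le 1$. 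A two-line case analysis then shows that sign disagreement forces $|\epsilon_i| \le |\boldsymbol{X}_i^\top(\hat{\boldsymbol{\beta}}-\boldsymbol{\beta}^*)|$: writing $d_i := \boldsymbol{X}_i^\top(\hat{\boldsymbol{\beta}}-\boldsymbol{\beta}^*)$, if $\epsilon_i\ge 0$ and $\epsilon_i - d_i < 0$ then $d_i > \epsilon_i \ge 0$ so $|\epsilon_i| < |d_i|$, while if $\epsilon_i < 0$ and $\epsilon_i - d_i \ge 0$ then $d_i \le \epsilon_i < 0$ so $|\epsilon_i| \le |d_i|$. Combining, $|w_{\boldsymbol{\beta}^*,i}^2 - w_{\hat{\boldsymbol{\beta}},i}^2| \le \mathbb{I}(|\epsilon_i| \le |\boldsymbol{X}_i^\top(\hat{\boldsymbol{\beta}}-\boldsymbol{\beta}^*)|)$, which is the first assertion.

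Summing the pointwise bound gives
\[
\frac{1}{n}\sum_{i=1}^n |w_{\hat{\boldsymbol{\beta}},i}^2 - w_{\boldsymbol{\beta}^*,i}^2| \;\le\; \frac{1}{n}\sum_{i=1}^n \mathbb{I}\!\left(|\epsilon_i| \le |\boldsymbol{X}_i^\top(\hat{\boldsymbol{\beta}}-\boldsymbol{\beta}^*)|\right),
\]
so it remains to bound the averaged indicator by $O_p(Ks\lambda)$. I would first control the threshold uniformly in $i$ by H\"older's inequality, $|\boldsymbol{X}_i^\top(\hat{\boldsymbol{\beta}}-\boldsymbol{\beta}^*)| \le \|\mathbf{X}\|_\infty\,\|\hat{\boldsymbol{\beta}}-\boldsymbol{\beta}^*\|_1$, where Assumption~\ref{assump1}(ii) gives $\|\mathbf{X}\|_\infty = O_p(K)$ and Proposition~\ref{prop1} gives $\|\hat{\boldsymbol{\beta}}-\boldsymbol{\beta}^*\|_1 = O_p(s\lambda)$, so the threshold is $O_p(Ks\lambda)$ uniformly over $i$.

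The main obstacle is that this threshold is itself random and depends on $\hat{\boldsymbol{\beta}}$, hence on the same data that contains $\epsilon_i$, so I cannot simply pass the expectation through the indicator. The device to circumvent this is to replace the random threshold by a deterministic one on a high-probability event. Fix $\eta>0$; by the two $O_p$ bounds there is a constant $C$ and an event $\mathcal{E}$ with $\Pr(\mathcal{E}) \ge 1-\eta$ on which $\|\mathbf{X}\|_\infty \le CK$ and $\|\hat{\boldsymbol{\beta}}-\boldsymbol{\beta}^*\|_1 \le Cs\lambda$ simultaneously, whence $|\boldsymbol{X}_i^\top(\hat{\boldsymbol{\beta}}-\boldsymbol{\beta}^*)| \le \bar\delta_n := C^2 Ks\lambda$ for all $i$, with $\bar\delta_n$ now deterministic. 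On $\mathcal{E}$ the averaged indicator is dominated by $\tfrac1n\sum_{i=1}^n \mathbb{I}(|\epsilon_i| \le \bar\delta_n)$, an average of i.i.d.\ Bernoulli variables whose threshold no longer depends on the data.

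Finally I would close with a bounded-density and Markov argument. Since $\sup_x f_\epsilon(x) < \infty$ by Assumption~\ref{assump1}(i), the mean of each Bernoulli satisfies $\Pr(|\epsilon_i| \le \bar\delta_n) = \int_{-\bar\delta_n}^{\bar\delta_n} f_\epsilon(x)\,dx \le 2\bar\delta_n \sup_x f_\epsilon(x) = O(Ks\lambda)$. Hence ${\sf E}\big[\tfrac1n\sum_{i=1}^n \mathbb{I}(|\epsilon_i| \le \bar\delta_n)\big] = O(Ks\lambda)$, and since the summand is nonnegative, Markov's inequality yields $\tfrac1n\sum_{i=1}^n \mathbb{I}(|\epsilon_i| \le \bar\delta_n) = O_p(Ks\lambda)$. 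Accounting for the $\eta$-probability complement of $\mathcal{E}$ (which can be made arbitrarily small) gives $\tfrac1n\sum_{i=1}^n |w_{\hat{\boldsymbol{\beta}},i}^2 - w_{\boldsymbol{\beta}^*,i}^2| = O_p(Ks\lambda)$, establishing \eqref{eqT42.4}.
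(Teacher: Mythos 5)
Your proof is correct, and its skeleton matches the paper's: the deterministic sign-flip bound $|w^2_{\boldsymbol{\beta}^*,i}-w^2_{\hat{\boldsymbol{\beta}},i}|\le \mathbb{I}\left(|\epsilon_i|\le |\boldsymbol{X}_i^\top(\hat{\boldsymbol{\beta}}-\boldsymbol{\beta}^*)|\right)$, followed by a split into the event that the random threshold $|\boldsymbol{X}_i^\top(\hat{\boldsymbol{\beta}}-\boldsymbol{\beta}^*)|$ exceeds a deterministic level of order $Ks\lambda$ (via H\"older, Proposition~\ref{prop1} and Assumption~\ref{assump1}(ii)) and the event that $|\epsilon_i|$ falls below that level (via the bounded density of $\epsilon$). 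The genuine difference is the order of the probabilistic operations, and your ordering is the more careful one. The paper applies Markov's inequality to the full average and splits the indicator inside the expectation, producing the term $\frac{1}{n\delta_1}\sum_{i=1}^n \Pr\left(|\boldsymbol{X}_i^\top(\hat{\boldsymbol{\beta}}-\boldsymbol{\beta}^*)|>\delta_2\right)$ in \eqref{eqA.10new}; since $\delta_1\asymp Ks\lambda$ tends to zero in the regime of interest, that term is negligible only if $\Pr\left(\Vert\mathbf{X}\Vert_\infty\Vert\hat{\boldsymbol{\beta}}-\boldsymbol{\beta}^*\Vert_1>\delta_2\right)$ decays faster than $Ks\lambda$, i.e.\ a rate of convergence in probability, which the bare $O_p$ conclusions of Proposition~\ref{prop1} do not by themselves supply. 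Your argument sidesteps this: the complement of the event $\mathcal{E}$ contributes its probability $\eta$ directly (not divided by $\delta_1$), and Markov's inequality is applied only to the i.i.d.\ Bernoulli average $\frac{1}{n}\sum_{i=1}^n \mathbb{I}(|\epsilon_i|\le \bar\delta_n)$, whose threshold is deterministic precisely because you conditioned first. This event-first conditioning yields the claimed $O_p(Ks\lambda)$ in \eqref{eqT42.4} with no extra assumptions, and can be read as the rigorous repair of the paper's looser display.
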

}

{  Lemma \ref{lem1} indicates that when $\hat{\boldsymbol{\beta}  } $ is close to $\boldsymbol{\beta}  ^* $, the weighting factor $w^2_{\hat{\boldsymbol{\beta}  },i}$ is equal to $w^2_{\boldsymbol{\beta}  ^*,i}$ with high probability, and the difference is equal to 1 with low probability. Therefore the average approximation error of the weighting factor converges in probability. This convergence rate differs from that in the  GLM scenario,  where the second-order derivative of the loss function holds the Lipschitz property, e.g., assumption (C1)  in \cite{VBRD14} and  (L1) in \cite{CGM21}.
Additionally, \cite{VBRD14} assume in (D4) that $|w_{\hat{\boldsymbol{\beta}  }, i} - w_{\boldsymbol{\beta}  ^*, i}| \leq |\boldsymbol{X}_i^\top (\hat{\boldsymbol{\beta}  } - \boldsymbol{\beta}  ^*)|$ and this directly implies that
$\frac{1}{n}\sum_{i=1}^n |w^2_{\hat{\boldsymbol{\beta}  }, i} - w^2_{\boldsymbol{\beta}  ^*, i}| \leq \Vert\mathbf{X}  (\hat{\boldsymbol{\beta}  } - \boldsymbol{\beta}  ^*)  \Vert^2_2/n = O_p(s\lambda^2)$, regardless of the condition on the maximum absolute value of the design matrix.  
Lemma \ref{lem1} shows that under the Assumption \ref{assump1} in expectile framework, the convergence rate is slower, which results in slower convergence rate in the subsequent node-wise regressions. 
}




\subsubsection{Node-wise regressions for the generalized Hessian matrix}





Denote the inverse of the population Hessian matrix of the weighted design $\mathbf{X}_{\boldsymbol{\beta}^*}$ by $\boldsymbol{\Theta}_{\boldsymbol{\beta}  ^*} = {\boldsymbol{\Sigma}}^{-1}_{\boldsymbol{\beta}  ^*} = ({\sf E} [\mathbf{X}^\top_{\boldsymbol{\beta}  ^*}\mathbf{X}_{\boldsymbol{\beta}  ^*}]/n)^{-1} $. For  $j\in\{1,\ldots, p\} $, denote the residual term of the node-wise regression by $ 
\boldsymbol{\varrho}_{\boldsymbol{\beta}  ^*,j} =X_{\boldsymbol{\beta}  ^*,(j)} -  X_{\boldsymbol{\beta}  ^*,(-j)} \boldsymbol{\varphi}_{\boldsymbol{\beta}  ^*,j},  
 $ where  $\boldsymbol{\varphi}_{\boldsymbol{\beta}  ^*,j} := \argmin_{\boldsymbol{\varphi}} {\sf E}\Vert X_{\boldsymbol{\beta}  ^*,(j)} -  \boldsymbol{X}_{\boldsymbol{\beta}  ^*,(-j)} \boldsymbol{\varphi}\Vert^2_2 $  is  the corresponding true coefficient. 
 Denote the corresponding population variance by $\phi^2_{\boldsymbol{\beta}  ^*,j} = {\sf E} \Vert \boldsymbol{\varrho}_{\boldsymbol{\beta}  ^*,j} \Vert^2_{2}/n.  $  

\begin{theorem}\label{thm1}
(The uniform bounds for the node-wise estimators). 
Suppose that all the assumptions in Proposition \ref{prop1} are satisfied. If $Q_{\lambda_{j}}(\boldsymbol{\varphi} )$ is $\mu-$amenable regularizers,  $\Vert \boldsymbol{\varphi}_{\boldsymbol{\beta} ^*,j} \Vert_1 \leq R_j$,  
$\max_j \Vert \boldsymbol{\varphi}_{\boldsymbol{\beta} ^*,j} \Vert_\infty=O(1)$, for $j\in\{1,\ldots, p\}$, $(s^{**})^2 K^2 \sqrt{\ln p / n } = o(1)$, and $s^{**}\lambda^{**} = o(1)$, where  $s^{**}=\max_j \Vert \boldsymbol{\varphi}_{\boldsymbol{\beta} ^*,j} \Vert_0$ and $\lambda^{**}=\max_j \lambda_j$,  then if we choose 
$\lambda_j\geq c_8(\max_jR_j\sqrt{\ln p/n})\vee(s\sqrt{s^{**}} K^3 \lambda))$
for some large positive constant $c_{8}$, we have (i)
 $$
\max_{1\leq j \leq p}\Vert \hat{\boldsymbol{\varphi}}_{\hat{\boldsymbol{\beta}  },j} - \boldsymbol{\varphi}_{\boldsymbol{\beta}  ^*,j} \Vert_{1} = \mathcal{O}_{p}(s^{**}\lambda^{**}), 
\quad    \max_{1\leq j \leq p}\Vert \hat{\boldsymbol{\varphi}}_{\hat{\boldsymbol{\beta}  },j} - \boldsymbol{\varphi}_{\boldsymbol{\beta}  ^*,j} \Vert_{2} = \mathcal{O}_{p}(\sqrt{s^{**}}\lambda^{**}),$$
and 
$$ \max_{1\leq j \leq p}\Vert \mathbf{X}_{\boldsymbol{\beta}  ^*,(-j)} (\hat{\boldsymbol{\varphi}}_{\hat{\boldsymbol{\beta}  },j} - \boldsymbol{\varphi}_{\boldsymbol{\beta}  ^*,j} )\Vert^2_{2}/n = \mathcal{O}_{p}(s^{**}(\lambda^{**})^2).$$
(ii) 
 $$ \max_{1\leq j \leq p}\left|\hat{\phi}_{\hat{\boldsymbol{\beta}  },j}^2-\phi_ {\boldsymbol{\beta}  ^*,j}^2\right| = O_p(\sqrt{s^{**}}\lambda^{**}), $$   
 $$\max_{1\leq j \leq p}\Vert \hat{\boldsymbol{\Theta}}_{\hat{\boldsymbol{\beta}  },j} - \boldsymbol{\Theta}_{\boldsymbol{\beta}  ^*,j} \Vert_{1} = O_p({s^{**}}\lambda^{**}), \quad \textit{and} \quad   \max_{1\leq j \leq p}\Vert \hat{\boldsymbol{\Theta}}_{\hat{\boldsymbol{\beta}  },j} - \boldsymbol{\Theta}_{\boldsymbol{\beta}  ^*,j}  \Vert _{2} = O_p(\sqrt{s^{**}}\lambda^{**}).
 $$
\end{theorem}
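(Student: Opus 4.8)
The plan is to cast each node-wise problem (\ref{op-de}) as a regularized least-squares M-estimator with an amenable regularizer and to invoke the general deterministic machinery of \cite{LW15}: once we verify (a) a restricted strong convexity (RSC) condition for the weighted empirical Gram matrix $\hat{\boldsymbol{\Sigma}}_{\hat{\boldsymbol{\beta}},(-j)}$ over the relevant $s^{**}$-cone, and (b) that the infinity norm of the loss gradient evaluated at the population target $\boldsymbol{\varphi}_{\boldsymbol{\beta}^*,j}$ is dominated by a constant multiple of $\lambda_j$, the $\ell_1$, $\ell_2$, and prediction-error bounds in part~(i) follow from the standard cone argument, uniformly in $j$ after a union bound over $j\in\{1,\ldots,p\}$. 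The genuinely new ingredient is that both (a) and (b) are stated for the \emph{estimated}-weight design $\mathbf{X}_{\hat{\boldsymbol{\beta}}}=\mathbf{W}_{\hat{\boldsymbol{\beta}}}\mathbf{X}$, whereas the population targets are defined through $\mathbf{W}_{\boldsymbol{\beta}^*}$; the whole difficulty is to transfer between the two using Lemma~\ref{lem1}.

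First I would establish RSC for $\hat{\boldsymbol{\Sigma}}_{\hat{\boldsymbol{\beta}},(-j)}$. Starting from the lower eigenvalue bound in Assumption~\ref{assump1}(iv), a concentration argument based on the fourth-moment and $K$-tail conditions gives RSC for the true-weight empirical Gram matrix $\hat{\boldsymbol{\Sigma}}_{\boldsymbol{\beta}^*,(-j)}$ uniformly in $j$. I then control the perturbation $\hat{\boldsymbol{\Sigma}}_{\hat{\boldsymbol{\beta}}}-\hat{\boldsymbol{\Sigma}}_{\boldsymbol{\beta}^*}$ entrywise: each entry equals $\frac1n\sum_i (w^2_{\hat{\boldsymbol{\beta}},i}-w^2_{\boldsymbol{\beta}^*,i})x_{ij}x_{ik}$, which is $O_p(K^3 s\lambda)$ by Lemma~\ref{lem1} together with $\Vert\mathbf{X}\Vert_\infty=O_p(K)$; over the $s^{**}$-cone this perturbs the quadratic form by at most $\Vert\hat{\boldsymbol{\Sigma}}_{\hat{\boldsymbol{\beta}}}-\hat{\boldsymbol{\Sigma}}_{\boldsymbol{\beta}^*}\Vert_\infty\Vert\boldsymbol{v}\Vert_1^2$, which is negligible under the scaling conditions $(s^{**})^2K^2\sqrt{\ln p/n}=o(1)$ and $s^{**}\lambda^{**}=o(1)$.

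The hard part will be the gradient bound, which I expect to be the main obstacle and the crux of the new proof technique. Writing the node-wise loss gradient at $\boldsymbol{\varphi}_{\boldsymbol{\beta}^*,j}$ as
\[
\nabla\mathcal{L}_{n,j}(\boldsymbol{\varphi}_{\boldsymbol{\beta}^*,j})=-\tfrac{1}{n}\mathbf{X}_{\hat{\boldsymbol{\beta}},(-j)}^\top\big(\boldsymbol{X}_{\hat{\boldsymbol{\beta}},(j)}-\mathbf{X}_{\hat{\boldsymbol{\beta}},(-j)}\boldsymbol{\varphi}_{\boldsymbol{\beta}^*,j}\big),
\]
I would split it into a true-weight term $-\frac1n\mathbf{X}_{\boldsymbol{\beta}^*,(-j)}^\top\boldsymbol{\varrho}_{\boldsymbol{\beta}^*,j}$ and a weight-approximation remainder. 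The first term has mean zero by the population orthogonality defining $\boldsymbol{\varphi}_{\boldsymbol{\beta}^*,j}$, and its infinity norm concentrates at rate $O_p(\max_jR_j\sqrt{\ln p/n})$ by a maximal inequality and union bound under Assumption~\ref{assump1}. The remainder collects every cross term carrying the factor $w^2_{\hat{\boldsymbol{\beta}},i}-w^2_{\boldsymbol{\beta}^*,i}$; bounding these by pulling out $\frac1n\sum_i|w^2_{\hat{\boldsymbol{\beta}},i}-w^2_{\boldsymbol{\beta}^*,i}|=O_p(Ks\lambda)$ and using $\Vert\mathbf{X}\Vert_\infty=O_p(K)$ together with $\Vert\boldsymbol{\varphi}_{\boldsymbol{\beta}^*,j}\Vert_1=O(\sqrt{s^{**}})$ (itself a consequence of $\Vert\boldsymbol{\varphi}_{\boldsymbol{\beta}^*,j}\Vert_2=O(1)$, which follows from the eigenvalue bound, and of $s^{**}$-sparsity) yields the rate $O_p(s\sqrt{s^{**}}K^3\lambda)$. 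Taking $\lambda_j$ above both contributions, exactly the choice in the statement, makes (b) hold uniformly in $j$, and part~(i) follows.

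For part~(ii), I would first transfer the prediction-error bound from $\mathbf{X}_{\hat{\boldsymbol{\beta}},(-j)}$ to $\mathbf{X}_{\boldsymbol{\beta}^*,(-j)}$ via Lemma~\ref{lem1}, then decompose $\hat{\phi}^2_{\hat{\boldsymbol{\beta}},j}-\phi^2_{\boldsymbol{\beta}^*,j}$ into the residual-variance gap, controlled by the part-(i) prediction error and the concentration of the empirical residual variance about its population value, and the penalty term $\hat{\boldsymbol{\varphi}}_{\hat{\boldsymbol{\beta}},j}^\top\nabla Q_{\lambda_j}(\hat{\boldsymbol{\varphi}}_{\hat{\boldsymbol{\beta}},j})$, bounded by $\lambda_j\Vert\hat{\boldsymbol{\varphi}}_{\hat{\boldsymbol{\beta}},j}\Vert_1=O_p(\sqrt{s^{**}}\lambda^{**})$ since $\Vert\nabla Q_{\lambda_j}\Vert_\infty\le\lambda_j$; this gives the claimed rate for $|\hat{\phi}^2_{\hat{\boldsymbol{\beta}},j}-\phi^2_{\boldsymbol{\beta}^*,j}|$. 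Finally, recalling $\hat{\boldsymbol{\Theta}}_{\hat{\boldsymbol{\beta}},j}=\hat{\boldsymbol{\Phi}}_{\hat{\boldsymbol{\beta}},j}/\hat{\phi}^2_{\hat{\boldsymbol{\beta}},j}$, I would split
\[
\hat{\boldsymbol{\Theta}}_{\hat{\boldsymbol{\beta}},j}-\boldsymbol{\Theta}_{\boldsymbol{\beta}^*,j}=\frac{\hat{\boldsymbol{\Phi}}_{\hat{\boldsymbol{\beta}},j}-\boldsymbol{\Phi}_{\boldsymbol{\beta}^*,j}}{\hat{\phi}^2_{\hat{\boldsymbol{\beta}},j}}+\boldsymbol{\Phi}_{\boldsymbol{\beta}^*,j}\Big(\frac{1}{\hat{\phi}^2_{\hat{\boldsymbol{\beta}},j}}-\frac{1}{\phi^2_{\boldsymbol{\beta}^*,j}}\Big),
\]
and bound the first summand by the part-(i) error on $\hat{\boldsymbol{\varphi}}_{\hat{\boldsymbol{\beta}},j}$ (whose entries build $\hat{\boldsymbol{\Phi}}_{\hat{\boldsymbol{\beta}},j}$) divided by $\hat{\phi}^2_{\hat{\boldsymbol{\beta}},j}$, and the second by $\Vert\boldsymbol{\Phi}_{\boldsymbol{\beta}^*,j}\Vert$ times the $\hat{\phi}^2$-gap, using that $\phi^2_{\boldsymbol{\beta}^*,j}$, and hence $\hat{\phi}^2_{\hat{\boldsymbol{\beta}},j}$, stay bounded away from zero under Assumption~\ref{assump1}(iv). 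This delivers the stated $\ell_1$ and $\ell_2$ rates for the precision-matrix columns.
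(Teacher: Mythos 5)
Your proposal is correct and follows essentially the same route as the paper's own proof: the same LW15-style basic-inequality/amenable-regularizer machinery for part (i), the same splitting of the node-wise gradient at $\boldsymbol{\varphi}_{\boldsymbol{\beta}^*,j}$ into a true-weight concentration term of order $O_p(\max_jR_j\sqrt{\ln p/n})$ (the paper's $\iota_{1j}$) and a Lemma~\ref{lem1}-controlled weight-approximation remainder of order $O_p(s\sqrt{s^{**}}K^3\lambda)$ (the paper's $\iota_{2j}$), and the identical final decomposition of $\hat{\boldsymbol{\Theta}}_{\hat{\boldsymbol{\beta}},j}-\boldsymbol{\Theta}_{\boldsymbol{\beta}^*,j}$ through the gap in $\hat{\phi}^2_{\hat{\boldsymbol{\beta}},j}$. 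The only tactical differences are that the paper gets the restricted-strong-convexity lower bound via the pointwise inequality $w^2_{\hat{\boldsymbol{\beta}},i}\geq\min\{\tau,1-\tau\}$, reducing to the unweighted Gram matrix rather than transferring the estimated-weight Gram matrix through Lemma~\ref{lem1} as you do, and it handles $\hat{\phi}^2_{\hat{\boldsymbol{\beta}},j}$ via the first-order-condition identity $\hat{\phi}^2_{\hat{\boldsymbol{\beta}},j}=\boldsymbol{X}_{\hat{\boldsymbol{\beta}},(j)}^\top(\boldsymbol{X}_{\hat{\boldsymbol{\beta}},(j)}-\mathbf{X}_{\hat{\boldsymbol{\beta}},(-j)}\hat{\boldsymbol{\varphi}}_{\hat{\boldsymbol{\beta}},j})/n$ rather than bounding the penalty term $\hat{\boldsymbol{\varphi}}^\top_{\hat{\boldsymbol{\beta}},j}\nabla Q_{\lambda_j}(\hat{\boldsymbol{\varphi}}_{\hat{\boldsymbol{\beta}},j})$ directly; both variants deliver the same rates under the stated scaling conditions.
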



The theorem provides uniformly convergence of the node-wise regression estimators. When the conditions $K \asymp O(1)$ and $s \asymp s^{**} \asymp O(1)$ are simultaneously satisfied, we can choose $R \asymp \max_{j}R_j \asymp O(1)$ and $\lambda \asymp \lambda_j\asymp  O(\sqrt{\ln p / n })$, for $j\in\{1,\ldots, p\}$. In such a case, the  $l_1 $ and $l_2 $ bounds for both  $\hat{\boldsymbol{\beta}  } $ and $\hat{\boldsymbol{\varphi}}_{\hat{\boldsymbol{\beta}  },j} $ have reached their optimal bounds  under the regularized framework, see \cite{CG17}, \cite{STZ12} and  \cite{Ver12}. 
Compared to the results in Theorem 3.2 of \cite{VBRD14} which choose  $\lambda_j  \asymp O( K \sqrt{\ln n/p} )$,
we may choose $\lambda_j  \asymp O( RR_js\sqrt{s^{**}}K^3 \sqrt{\ln n/p} )$.  { The primary cause of this difference is the second-order non-Lipschitz property of the expectile loss function. Additionally, the use of the non-convex regularizer and the uniform rate under consideration also contribute to this disparity.} For the same reason, the convergence rate is slower than that in Thoerem 3.2 of \cite{VBRD14}.



\subsection{Main results}

We establish the probabilistic upper bound of ${\boldsymbol{ \Delta}}^{(1)}$ and ${\boldsymbol{ \Delta}}^{(2)}$ as defined in \eqref{de-bias01} and  the asymptotic normality of the de-bias estimator under the expectile framework in the following two theorems, respectively.

\begin{theorem}\label{thm_add}
  (Square-root $n$ negligiblity of ${\boldsymbol{ \Delta}}^{(1)} $  and ${\boldsymbol{ \Delta}}^{(2)} $). Suppose that all the conditions  in Theorem \ref{thm1} hold and additionally we assume that $s^{**}\lambda\lambda^{**} = o(n^{-1/2})$, then  
 $$
\Vert {\boldsymbol{ \Delta}}^{(1)} \Vert_{\infty} = o_p(n^{-1/2}), \quad \text{and} \quad  \Vert {\boldsymbol{ \Delta}}^{(2)} \Vert_{\infty} = o_p(n^{-1/2}).
 $$


\end{theorem}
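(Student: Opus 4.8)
The plan is to control the two remainders of the decomposition~\eqref{de-bias01} separately in the element-wise sup-norm, feeding in the uniform rates already established. From Proposition~\ref{prop1} I will use $\Vert\hat{\boldsymbol{\beta}}-\boldsymbol{\beta}^*\Vert_1=O_p(s\lambda)$ and the prediction bound $\Vert\mathbf{X}(\hat{\boldsymbol{\beta}}-\boldsymbol{\beta}^*)\Vert_2^2/n=O_p(s\lambda^2)$; from Lemma~\ref{lem1} the pointwise control $|w^2_{\boldsymbol{\beta}^*,i}-w^2_{\hat{\boldsymbol{\beta}},i}|\le\mathbb{I}(|\epsilon_i|\le|\boldsymbol{X}_i^\top(\hat{\boldsymbol{\beta}}-\boldsymbol{\beta}^*)|)$ together with the averaged rate~\eqref{eqT42.4}; and from Theorem~\ref{thm1} the uniform node-wise rates. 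As a preliminary step I would record two consequences that hold with probability tending to one: $\max_j\Vert\hat{\boldsymbol{\Theta}}_{\hat{\boldsymbol{\beta}},j}\Vert_1=O_p(1)$, since $\Vert\hat{\boldsymbol{\Phi}}_{\hat{\boldsymbol{\beta}},j}\Vert_1\le 1+\Vert\boldsymbol{\varphi}_{\boldsymbol{\beta}^*,j}\Vert_1+\Vert\hat{\boldsymbol{\varphi}}_{\hat{\boldsymbol{\beta}},j}-\boldsymbol{\varphi}_{\boldsymbol{\beta}^*,j}\Vert_1=O_p(1)$ by the $\ell_1$-bound in Theorem~\ref{thm1}(i) and $\Vert\boldsymbol{\varphi}_{\boldsymbol{\beta}^*,j}\Vert_1\le R_j=O(1)$; and $\min_j\hat{\phi}^2_{\hat{\boldsymbol{\beta}},j}$ bounded away from zero, since Theorem~\ref{thm1}(ii) gives $\hat{\phi}^2_{\hat{\boldsymbol{\beta}},j}\to\phi^2_{\boldsymbol{\beta}^*,j}\ge\lambda_{\min}(\boldsymbol{\Sigma}_{\boldsymbol{\beta}^*})>0$ uniformly under Assumption~\ref{assump1}(iv).

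For $\boldsymbol{\Delta}^{(2)}=(\hat{\boldsymbol{\Theta}}_{\hat{\boldsymbol{\beta}}}\hat{\boldsymbol{\Sigma}}_{\hat{\boldsymbol{\beta}}}-\mathbf{I})(\hat{\boldsymbol{\beta}}-\boldsymbol{\beta}^*)$ the first task is the node-wise KKT identity. Differentiating the objective in~\eqref{op-de}, the stationarity condition reads $\mathbf{X}_{\hat{\boldsymbol{\beta}},(-j)}^\top(\boldsymbol{X}_{\hat{\boldsymbol{\beta}},(j)}-\mathbf{X}_{\hat{\boldsymbol{\beta}},(-j)}\hat{\boldsymbol{\varphi}}_{\hat{\boldsymbol{\beta}},j})/n=\nabla Q_{\lambda_j}(\hat{\boldsymbol{\varphi}}_{\hat{\boldsymbol{\beta}},j})$, while the very definition of $\hat{\phi}^2_{\hat{\boldsymbol{\beta}},j}$ yields $\boldsymbol{X}_{\hat{\boldsymbol{\beta}},(j)}^\top(\boldsymbol{X}_{\hat{\boldsymbol{\beta}},(j)}-\mathbf{X}_{\hat{\boldsymbol{\beta}},(-j)}\hat{\boldsymbol{\varphi}}_{\hat{\boldsymbol{\beta}},j})/n=\hat{\phi}^2_{\hat{\boldsymbol{\beta}},j}$. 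Recognizing that $\mathbf{X}_{\hat{\boldsymbol{\beta}}}\hat{\boldsymbol{\Phi}}_{\hat{\boldsymbol{\beta}},j}$ is exactly this residual, the two identities show that $\hat{\boldsymbol{\Sigma}}_{\hat{\boldsymbol{\beta}}}\hat{\boldsymbol{\Theta}}_{\hat{\boldsymbol{\beta}},j}-\boldsymbol{e}_j$ has $j$-th coordinate zero and $l$-th coordinate $[\nabla Q_{\lambda_j}(\hat{\boldsymbol{\varphi}}_{\hat{\boldsymbol{\beta}},j})]_l/\hat{\phi}^2_{\hat{\boldsymbol{\beta}},j}$ for $l\neq j$. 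By the amenability conditions (ii), (iii), (v) of Definition~\ref{amenable-p} one has $\Vert\nabla Q_{\lambda_j}(\cdot)\Vert_\infty\le\lambda_j\le\lambda^{**}$, so combining with the lower bound on $\hat{\phi}^2_{\hat{\boldsymbol{\beta}},j}$ and the symmetry of $\hat{\boldsymbol{\Sigma}}_{\hat{\boldsymbol{\beta}}}$ (which identifies the $j$-th row of $\hat{\boldsymbol{\Theta}}_{\hat{\boldsymbol{\beta}}}\hat{\boldsymbol{\Sigma}}_{\hat{\boldsymbol{\beta}}}-\mathbf{I}$ with $(\hat{\boldsymbol{\Sigma}}_{\hat{\boldsymbol{\beta}}}\hat{\boldsymbol{\Theta}}_{\hat{\boldsymbol{\beta}},j}-\boldsymbol{e}_j)^\top$) gives $\max_j\Vert\hat{\boldsymbol{\Sigma}}_{\hat{\boldsymbol{\beta}}}\hat{\boldsymbol{\Theta}}_{\hat{\boldsymbol{\beta}},j}-\boldsymbol{e}_j\Vert_\infty=O_p(\lambda^{**})$. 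Hölder's inequality then delivers $\Vert\boldsymbol{\Delta}^{(2)}\Vert_\infty\le\max_j\Vert\hat{\boldsymbol{\Sigma}}_{\hat{\boldsymbol{\beta}}}\hat{\boldsymbol{\Theta}}_{\hat{\boldsymbol{\beta}},j}-\boldsymbol{e}_j\Vert_\infty\,\Vert\hat{\boldsymbol{\beta}}-\boldsymbol{\beta}^*\Vert_1=O_p(s\lambda\lambda^{**})$, which is $o_p(n^{-1/2})$ under the additional assumption $s^{**}\lambda\lambda^{**}=o(n^{-1/2})$ in the working regime $s\asymp s^{**}$.

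For $\boldsymbol{\Delta}^{(1)}=\hat{\boldsymbol{\Theta}}_{\hat{\boldsymbol{\beta}}}\mathbf{X}^\top(\mathbf{W}_{\boldsymbol{\beta}^*}^2-\mathbf{W}_{\hat{\boldsymbol{\beta}}}^2)\boldsymbol{\epsilon}/n$, writing $D_i:=w^2_{\boldsymbol{\beta}^*,i}-w^2_{\hat{\boldsymbol{\beta}},i}$ the $j$-th coordinate is $n^{-1}\sum_i(\hat{\boldsymbol{\Theta}}_{\hat{\boldsymbol{\beta}},j}^\top\boldsymbol{X}_i)D_i\epsilon_i$, and this is where Lemma~\ref{lem1} does the essential work: on $\{D_i\neq0\}$ one has $|\epsilon_i|\le|\boldsymbol{X}_i^\top(\hat{\boldsymbol{\beta}}-\boldsymbol{\beta}^*)|\le\Vert\mathbf{X}\Vert_\infty\Vert\hat{\boldsymbol{\beta}}-\boldsymbol{\beta}^*\Vert_1$. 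Bounding $|\hat{\boldsymbol{\Theta}}_{\hat{\boldsymbol{\beta}},j}^\top\boldsymbol{X}_i|\le\Vert\hat{\boldsymbol{\Theta}}_{\hat{\boldsymbol{\beta}},j}\Vert_1\Vert\mathbf{X}\Vert_\infty$ and pulling the $\hat{\boldsymbol{\beta}}$-dependent factors out of the average, I obtain $\Vert\boldsymbol{\Delta}^{(1)}\Vert_\infty\le\max_j\Vert\hat{\boldsymbol{\Theta}}_{\hat{\boldsymbol{\beta}},j}\Vert_1\,\Vert\mathbf{X}\Vert_\infty^2\,\Vert\hat{\boldsymbol{\beta}}-\boldsymbol{\beta}^*\Vert_1\,(n^{-1}\sum_i|D_i|)$, which by the preliminary facts, Assumption~\ref{assump1}(ii) and~\eqref{eqT42.4} equals $O_p(1)\cdot O_p(K^2)\cdot O_p(s\lambda)\cdot O_p(Ks\lambda)=O_p(K^3s^2\lambda^2)$. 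To close, I would invoke the prescribed node-wise tuning $\lambda^{**}\gtrsim s\sqrt{s^{**}}K^3\lambda$ from Theorem~\ref{thm1} to rewrite $K^3s^2\lambda^2=s^2\lambda\cdot K^3\lambda\lesssim s\lambda\lambda^{**}/\sqrt{s^{**}}\lesssim s^{**}\lambda\lambda^{**}=o_p(n^{-1/2})$, the last inequality again using $s\lesssim(s^{**})^{3/2}$ in the regime $s\asymp s^{**}$.

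The main obstacle is $\boldsymbol{\Delta}^{(1)}$, and it is precisely the expectile-specific difficulty the paper advertises: because the expectile loss is not second-order differentiable, the weight map $\boldsymbol{\beta}\mapsto\mathbf{W}_{\boldsymbol{\beta}}^2$ is not Lipschitz, so one cannot import assumption (D4) of \cite{VBRD14} that would give the faster rate $n^{-1}\sum_i|D_i|=O_p(s\lambda^2)$. Instead the indicator bound of Lemma~\ref{lem1} only yields $O_p(Ks\lambda)$, and the heavier book-keeping — handling the simultaneous randomness of the weights, the plug-in precision matrix $\hat{\boldsymbol{\Theta}}_{\hat{\boldsymbol{\beta}}}$, and the tail factor $\Vert\mathbf{X}\Vert_\infty=O_p(K)$, all uniformly over the $p$ coordinates — is what forces the slower rate and the slightly delicate algebra that must be matched against $s^{**}\lambda\lambda^{**}=o(n^{-1/2})$ via the choice of $\lambda_j$. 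The $\boldsymbol{\Delta}^{(2)}$ term, by contrast, is standard once the node-wise KKT identity is in place.
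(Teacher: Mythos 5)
Your proposal follows essentially the same route as the paper's proof: the same Hölder-based splitting of the two remainders, the identical node-wise KKT identity yielding $\Vert\hat{\boldsymbol{\Theta}}_{\hat{\boldsymbol{\beta}}}\hat{\boldsymbol{\Sigma}}_{\hat{\boldsymbol{\beta}}}-\mathbf{I}\Vert_{\infty}=O_p(\lambda^{**})$ for $\boldsymbol{\Delta}^{(2)}$, and the same combination of Lemma~\ref{lem1} and Proposition~\ref{prop1} (bounding $|\epsilon_i|$ by $\Vert\mathbf{X}\Vert_\infty\Vert\hat{\boldsymbol{\beta}}-\boldsymbol{\beta}^*\Vert_1$ on the event where the weights differ) for $\boldsymbol{\Delta}^{(1)}$. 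The one step that is mis-justified is your preliminary claim $\max_j\Vert\hat{\boldsymbol{\Theta}}_{\hat{\boldsymbol{\beta}},j}\Vert_1=O_p(1)$, which rests on $R_j=O(1)$: that is not among the hypotheses of Theorem~\ref{thm1} (the paper only takes $R_j\asymp O(1)$ in the special regime $s\asymp s^{**}\asymp K\asymp O(1)$). What the assumptions actually deliver, via $\Vert\boldsymbol{\varphi}_{\boldsymbol{\beta}^*,j}\Vert_1\le\sqrt{s_j}\,\Vert\boldsymbol{\varphi}_{\boldsymbol{\beta}^*,j}\Vert_2=O(\sqrt{s^{**}})$ as in \eqref{eq.A14}, together with the $\ell_1$ rate of Theorem~\ref{thm1}(i) and the lower bound on $\hat{\phi}^2_{\hat{\boldsymbol{\beta}},j}$, is $\max_j\Vert\hat{\boldsymbol{\Theta}}_{\hat{\boldsymbol{\beta}},j}\Vert_1=O_p(\sqrt{s^{**}})$. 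This repair is harmless: your bound for $\boldsymbol{\Delta}^{(1)}$ then becomes $O_p(\sqrt{s^{**}}K^3s^2\lambda^2)$, which is exactly the paper's intermediate rate (the paper reaches it by splitting $\Vert\mathbf{X}_{\boldsymbol{\beta}^*}\hat{\boldsymbol{\Theta}}_{\hat{\boldsymbol{\beta}},j}\Vert_\infty$ into a population part $O_p(\sqrt{s^{**}}K)$ and an estimation part, rather than through $\Vert\hat{\boldsymbol{\Theta}}_{\hat{\boldsymbol{\beta}},j}\Vert_1\Vert\mathbf{X}\Vert_\infty$), and your closing chain still works because it had a spare factor $\sqrt{s^{**}}$ of slack: $\sqrt{s^{**}}K^3s^2\lambda^2=s\lambda\cdot s\sqrt{s^{**}}K^3\lambda\lesssim s\lambda\lambda^{**}$ by the tuning of $\lambda_j$. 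Finally, your appeal to $s\asymp s^{**}$ to pass from $O_p(s\lambda\lambda^{**})$ to the assumed $s^{**}\lambda\lambda^{**}=o(n^{-1/2})$ is a wrinkle the paper itself shares (its proof concludes from the condition ``$s\lambda\lambda^{**}=o(n^{-1/2})$'' although the theorem states it with $s^{**}$), so it is not a defect specific to your argument.
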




\begin{theorem}\label{thm3}
  (Asymptotic normality for the de-biased estimator).
  Suppose that all the conditions  in Theorem \ref{thm_add} hold, then  for the $p_0
$-by-$p$ hypothesis matrix $\mathbf{R}$ defined in \eqref{test01}  with  $\Vert\mathbf{R}\Vert_\infty=O(1)$, $\Vert\mathbf{R}\Vert_{l_\infty}=O(1)$ and $\mathbf{R}^\top\mathbf{R}\to \boldsymbol{\Upsilon}$ where $\boldsymbol{\Upsilon}$ is a  full row-rank $p_0
$-by-$p_0$ matrix, we have
  $$
\sqrt{n}{\mathbf R}\left(\hat{\boldsymbol{\beta}}_{de} - \boldsymbol{\beta}^*\right)  \xrightarrow{d}\mathcal{N}_{p_0}(\boldsymbol{0},\boldsymbol{\Omega}_{\mathbf R}),
 $$
{ and moreover if ${\sf E}[\epsilon_{i}^8]<c_1$,  $\max_{j}{\sf E}[x_{ij}^8]<c_3$, $K^4 \sqrt{\ln p / n } = o(1)$,
and $(s^{**})^{3/2}\lambda^{**} = o(1)$, we have}
 $$\Vert\hat{\boldsymbol\Omega}_{\mathbf{R}} -{\boldsymbol\Omega}_{\mathbf{R}}\Vert_\infty =o_p(1),$$
 where ${\boldsymbol\Omega}_{\mathbf{R}}$ and $\hat{\boldsymbol\Omega}_{\mathbf{R}}$ are defined in \eqref{eq14}. Consequently, we have 
\begin{equation*}
(\mathbf{R}(\hat{\boldsymbol{\beta}}  _{de}-{\boldsymbol{\beta}}^*))^\top\hat{\boldsymbol{\Omega}}_\mathbf{R}^{-1}\mathbf{R}(\hat{\boldsymbol{\beta}}  _{de}-{\boldsymbol{\beta}}^*)\xrightarrow{d} \chi^2(p_0).
\end{equation*}
\end{theorem}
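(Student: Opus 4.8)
\emph{Overall strategy.} The plan is to hang everything on the expansion \eqref{de-bias01}. Multiplying by $\sqrt{n}\mathbf{R}$ and abbreviating $\boldsymbol{S}_n := \mathbf{X}^\top\mathbf{W}_{\boldsymbol{\beta}^*}^2\boldsymbol{\epsilon}/\sqrt{n}$, I would write
\begin{equation*}
\sqrt{n}\mathbf{R}(\hat{\boldsymbol{\beta}}_{de}-\boldsymbol{\beta}^*)=\underbrace{\mathbf{R}\boldsymbol{\Theta}_{\boldsymbol{\beta}^*}\boldsymbol{S}_n}_{\text{(I)}}+\underbrace{\mathbf{R}(\hat{\boldsymbol{\Theta}}_{\hat{\boldsymbol{\beta}}}-\boldsymbol{\Theta}_{\boldsymbol{\beta}^*})\boldsymbol{S}_n}_{\text{(II)}}-\sqrt{n}\mathbf{R}\boldsymbol{\Delta}^{(1)}-\sqrt{n}\mathbf{R}\boldsymbol{\Delta}^{(2)},
\end{equation*}
so that (I) is the leading term and the remaining three are to be shown $o_p(1)$ as $p_0$-vectors. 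Since $\|\mathbf{R}\|_{l_\infty}=O(1)$, for any $p$-vector $\boldsymbol{v}$ we have $\|\mathbf{R}\boldsymbol{v}\|_\infty\le\|\mathbf{R}\|_{l_\infty}\|\boldsymbol{v}\|_\infty$; combined with Theorem \ref{thm_add} this gives $\sqrt{n}\|\mathbf{R}\boldsymbol{\Delta}^{(1)}\|_\infty$ and $\sqrt{n}\|\mathbf{R}\boldsymbol{\Delta}^{(2)}\|_\infty$ both $o_p(1)$, and as $p_0$ is fixed these are $o_p(1)$ in any norm.

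\emph{Asymptotic normality.} For term (II) I would use the row-wise bound $\|\mathbf{R}(\hat{\boldsymbol{\Theta}}_{\hat{\boldsymbol{\beta}}}-\boldsymbol{\Theta}_{\boldsymbol{\beta}^*})\boldsymbol{S}_n\|_\infty\le\|\mathbf{R}\|_{l_\infty}\,\max_j\|\hat{\boldsymbol{\Theta}}_{\hat{\boldsymbol{\beta}},j}-\boldsymbol{\Theta}_{\boldsymbol{\beta}^*,j}\|_1\,\|\boldsymbol{S}_n\|_\infty$, where the uniform $\ell_1$-rate $O_p(s^{**}\lambda^{**})$ is supplied by Theorem \ref{thm1}(ii) and $\|\boldsymbol{S}_n\|_\infty=O_p(K\sqrt{\ln p})$ follows from a Bernstein-type maximal inequality (the summands are mean-zero by ${\sf E}[w^2_{\boldsymbol{\beta}^*,i}\epsilon_i|\boldsymbol{X}_i]=0$ and are truncated at order $K$ via Assumption \ref{assump1}), so the product is $o_p(1)$ under the maintained rates. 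For term (I) I would invoke the Cram\'er--Wold device: for fixed $\boldsymbol{a}\in\mathbb{R}^{p_0}$, $\boldsymbol{a}^\top\mathbf{R}\boldsymbol{\Theta}_{\boldsymbol{\beta}^*}\boldsymbol{S}_n=n^{-1/2}\sum_{i=1}^n\boldsymbol{a}^\top\mathbf{R}\boldsymbol{\Theta}_{\boldsymbol{\beta}^*}\boldsymbol{X}_i w^2_{\boldsymbol{\beta}^*,i}\epsilon_i$ is a normalized sum of i.i.d.\ mean-zero scalars with exact variance $\boldsymbol{a}^\top\boldsymbol{\Omega}_{\mathbf{R}}\boldsymbol{a}$; the eigenvalue bounds of Assumption \ref{assump1}(iv), the moment bounds, and $\mathbf{R}^\top\mathbf{R}\to\boldsymbol{\Upsilon}$ (full row-rank) keep this variance bounded away from $0$ and $\infty$, and a Lindeberg/Lyapunov condition is verified through the same moment and $K$-truncation controls. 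Cram\'er--Wold then yields (I)$\xrightarrow{d}\mathcal{N}_{p_0}(\boldsymbol{0},\boldsymbol{\Omega}_{\mathbf{R}})$, and Slutsky's theorem with the negligible remainders establishes the first display.

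\emph{Consistency of $\hat{\boldsymbol{\Omega}}_{\mathbf{R}}$.} Writing $\hat{\mathbf{M}}=n^{-1}\sum_i\boldsymbol{X}_i\boldsymbol{X}_i^\top w^4_{\hat{\boldsymbol{\beta}},i}\hat{\epsilon}_i^2$ and $\mathbf{M}={\sf E}[\boldsymbol{X}_i\boldsymbol{X}_i^\top w^4_{\boldsymbol{\beta}^*,i}\epsilon_i^2]$, I would telescope
\begin{equation*}
\hat{\boldsymbol{\Omega}}-\boldsymbol{\Omega}=(\hat{\boldsymbol{\Theta}}_{\hat{\boldsymbol{\beta}}}-\boldsymbol{\Theta}_{\boldsymbol{\beta}^*})\hat{\mathbf{M}}\hat{\boldsymbol{\Theta}}_{\hat{\boldsymbol{\beta}}}^\top+\boldsymbol{\Theta}_{\boldsymbol{\beta}^*}(\hat{\mathbf{M}}-\mathbf{M})\hat{\boldsymbol{\Theta}}_{\hat{\boldsymbol{\beta}}}^\top+\boldsymbol{\Theta}_{\boldsymbol{\beta}^*}\mathbf{M}(\hat{\boldsymbol{\Theta}}_{\hat{\boldsymbol{\beta}}}-\boldsymbol{\Theta}_{\boldsymbol{\beta}^*})^\top,
\end{equation*}
pre- and post-multiply by $\mathbf{R}$ and $\mathbf{R}^\top$, and bound each $\ell_\infty$-entry by chaining the matrix norms through inequalities such as $\|\mathbf{A}\mathbf{B}\|_\infty\le\|\mathbf{A}\|_{l_\infty}\|\mathbf{B}\|_\infty$, using $\|\mathbf{R}\|_{l_\infty},\|\mathbf{R}\|_\infty=O(1)$, the uniform $\ell_1$-rate of $\hat{\boldsymbol{\Theta}}_{\hat{\boldsymbol{\beta}}}-\boldsymbol{\Theta}_{\boldsymbol{\beta}^*}$ from Theorem \ref{thm1}, and the boundedness of $\|\boldsymbol{\Theta}_{\boldsymbol{\beta}^*}\|_{l_1}$ and $\|\mathbf{M}\|_\infty$ from Assumption \ref{assump1}. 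The crux is $\|\hat{\mathbf{M}}-\mathbf{M}\|_\infty=o_p(1)$, split into a stochastic part $\|n^{-1}\sum_i\boldsymbol{X}_i\boldsymbol{X}_i^\top w^4_{\boldsymbol{\beta}^*,i}\epsilon_i^2-\mathbf{M}\|_\infty$, handled by a maximal inequality over the $p^2$ entries using the strengthened eighth-moment bounds and $K^4\sqrt{\ln p/n}=o(1)$, and a plug-in part measuring the replacement of $(\boldsymbol{\beta}^*,\epsilon_i)$ by $(\hat{\boldsymbol{\beta}},\hat{\epsilon}_i)$.

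\emph{Main obstacle.} The plug-in part is the hardest and is where the expectile loss departs from smooth-loss GLM treatments. Expanding $\hat{\epsilon}_i^2-\epsilon_i^2=-2\epsilon_i\boldsymbol{X}_i^\top(\hat{\boldsymbol{\beta}}-\boldsymbol{\beta}^*)+(\boldsymbol{X}_i^\top(\hat{\boldsymbol{\beta}}-\boldsymbol{\beta}^*))^2$ is routine, controlled by $\|\hat{\boldsymbol{\beta}}-\boldsymbol{\beta}^*\|_1=O_p(s\lambda)$ and the prediction bound $\|\mathbf{X}(\hat{\boldsymbol{\beta}}-\boldsymbol{\beta}^*)\|_2^2/n=O_p(s\lambda^2)$ from Proposition \ref{prop1}; but the weight difference $w^4_{\hat{\boldsymbol{\beta}},i}-w^4_{\boldsymbol{\beta}^*,i}$ is a non-smooth indicator that cannot be Taylor-expanded. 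Here I would invoke Lemma \ref{lem1}: since $w^2\in\{\tau,1-\tau\}$, one has $|w^4_{\hat{\boldsymbol{\beta}},i}-w^4_{\boldsymbol{\beta}^*,i}|\le C\,\mathbb{I}(|\epsilon_i|\le|\boldsymbol{X}_i^\top(\hat{\boldsymbol{\beta}}-\boldsymbol{\beta}^*)|)$, and \emph{on the event where this indicator equals one} one may replace $\epsilon_i^2$ by $(\boldsymbol{X}_i^\top(\hat{\boldsymbol{\beta}}-\boldsymbol{\beta}^*))^2$, so that $n^{-1}\sum_i|x_{ij}x_{ik}|\,|w^4_{\hat{\boldsymbol{\beta}},i}-w^4_{\boldsymbol{\beta}^*,i}|\,\epsilon_i^2\le CK^2\|\mathbf{X}(\hat{\boldsymbol{\beta}}-\boldsymbol{\beta}^*)\|_2^2/n$ uniformly over $(j,k)$. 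Closing this and the analogous cross terms to $o_p(1)$ is precisely what the extra conditions $(s^{**})^{3/2}\lambda^{**}=o(1)$ and the eighth-moment assumptions buy. Finally, with $\|\hat{\boldsymbol{\Omega}}_{\mathbf{R}}-\boldsymbol{\Omega}_{\mathbf{R}}\|_\infty=o_p(1)$ and $\boldsymbol{\Omega}_{\mathbf{R}}$ invertible with eigenvalues bounded away from $0$ and $\infty$ (by Assumption \ref{assump1}(iv) and $\mathbf{R}^\top\mathbf{R}\to\boldsymbol{\Upsilon}$), the continuous mapping theorem gives $\hat{\boldsymbol{\Omega}}_{\mathbf{R}}^{-1}\xrightarrow{p}\boldsymbol{\Omega}_{\mathbf{R}}^{-1}$, and Slutsky's theorem applied to the quadratic form in the normalized statistic $\sqrt{n}\mathbf{R}(\hat{\boldsymbol{\beta}}_{de}-\boldsymbol{\beta}^*)$ yields $T_{\mathbf{R}}\xrightarrow{d}\chi^2(p_0)$.
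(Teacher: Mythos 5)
Your proposal follows essentially the same route as the paper: the same decomposition of $\sqrt{n}\mathbf{R}(\hat{\boldsymbol{\beta}}_{de}-\boldsymbol{\beta}^*)$, negligibility of $\sqrt{n}\mathbf{R}\boldsymbol{\Delta}^{(1)}$ and $\sqrt{n}\mathbf{R}\boldsymbol{\Delta}^{(2)}$ via Theorem \ref{thm_add} and $\Vert\mathbf{R}\boldsymbol{v}\Vert_\infty\le\Vert\mathbf{R}\Vert_{l_\infty}\Vert\boldsymbol{v}\Vert_\infty$, control of $\mathbf{R}(\hat{\boldsymbol{\Theta}}_{\hat{\boldsymbol{\beta}}}-\boldsymbol{\Theta}_{\boldsymbol{\beta}^*})\boldsymbol{S}_n$ through the uniform $\ell_1$-rate of Theorem \ref{thm1}(ii) times $\Vert\boldsymbol{S}_n\Vert_\infty$, an i.i.d.\ CLT for the leading term, and a telescoping argument with $\ell_\infty$/$l_\infty$ norm chaining for the variance estimator. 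One place where you are genuinely \emph{more} careful than the paper: in the consistency step the paper works with $\hat{\mathbf{Q}}=n^{-1}\sum_i\boldsymbol{X}_i\boldsymbol{X}_i^\top w^4_{\boldsymbol{\beta}^*,i}\epsilon_i^2$, i.e.\ with the \emph{true} weights and \emph{true} errors, whereas the estimator $\hat{\boldsymbol{\Omega}}$ defined in \eqref{eq14} uses $w^4_{\hat{\boldsymbol{\beta}},i}\hat{\epsilon}_i^2$; the plug-in discrepancy between the two is never bridged in the paper's proof. Your "main obstacle" paragraph addresses exactly this gap, and your key device --- bounding $|w^4_{\hat{\boldsymbol{\beta}},i}-w^4_{\boldsymbol{\beta}^*,i}|$ by the indicator from Lemma \ref{lem1} and then exploiting that, on the event where the indicator is one, $\epsilon_i^2\le(\boldsymbol{X}_i^\top(\hat{\boldsymbol{\beta}}-\boldsymbol{\beta}^*))^2$, so the sum is absorbed into the prediction bound $\Vert\mathbf{X}(\hat{\boldsymbol{\beta}}-\boldsymbol{\beta}^*)\Vert_2^2/n$ --- is correct and is exactly in the spirit of how the paper handles the non-smooth weights elsewhere (e.g.\ the terms $\iota_{2j},\iota_{4j}$ in the proof of Theorem \ref{thm1}).

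Two minor imprecisions you should fix. First, your bound $\Vert\boldsymbol{S}_n\Vert_\infty=O_p(K\sqrt{\ln p})$ is looser than what the Bernstein-with-truncation argument you cite actually delivers: under Assumption \ref{assump1}(iii) the truncation remainder is lower order and one gets $\Vert\mathbf{X}^\top\mathbf{W}^2_{\boldsymbol{\beta}^*}\boldsymbol{\epsilon}/n\Vert_\infty=O_p(\sqrt{\ln p/n})$ (the paper's Lemma A.3), i.e.\ $\Vert\boldsymbol{S}_n\Vert_\infty=O_p(\sqrt{\ln p})$ with no $K$ factor; with your extra $K$ the product $s^{**}\lambda^{**}K\sqrt{\ln p}=o_p(1)$ does not follow immediately from the maintained rate $s^{**}\lambda\lambda^{**}=o(n^{-1/2})$, so use the sharper bound. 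Second, the claim that $\Vert\boldsymbol{\Theta}_{\boldsymbol{\beta}^*}\Vert_{l_1}$ is \emph{bounded} "from Assumption \ref{assump1}" is not right: the eigenvalue condition plus row sparsity only gives $\max_j\Vert\boldsymbol{\Theta}_{\boldsymbol{\beta}^*,j}\Vert_1=O(\sqrt{s^{**}})$ (the paper's \eqref{eq.A14}), and this $\sqrt{s^{**}}$ factor is precisely why the final rate is $O_p((s^{**})^{3/2}\lambda^{**})$ and why the hypothesis $(s^{**})^{3/2}\lambda^{**}=o(1)$ is imposed; with genuine boundedness that condition would be superfluous. Neither issue breaks the argument, but both should be stated with the correct orders.
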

Compared to the Theorem 2.4 in \cite{VBRD14}, the additional $8$th moment conditions on $\epsilon_i$  and $x_{ij}$ are essential to guarantee the consistency of $\hat{\boldsymbol\Omega}_{\mathbf{R}} $.
The assumption $K^4\sqrt{(\ln p)/n}\to 0$ is valid even when 
 $\boldsymbol{X}_{ij}$ and $\epsilon_i$ do not necessarily follow sub-Gaussian or strongly bounded distributions. For instance,   when $s$ and $s^{**}$ are fixed positive integers, $p\asymp n^\eta$,  and $X_{ij}$ and $\epsilon_i$ have bounded $(8+\delta)$th moment for some $\delta> 8\eta$, it is easy to check that $K^4\sqrt{(\ln p)/n}\to 0$ holds.
 {  Although various regularizers within the amenable category yield similar performance in terms of the 
$l_1$ and $l_2$  bounds when estimating corresponding coefficients, non-convex regularizers offer distinct advantages in variable selection. This is further elaborated in \cite{LW15}, \cite{ZCZ18}, among others.} 
By reducing falsely selected coordinates in both $ \hat{\boldsymbol{\beta}  } $ and $ \hat{\boldsymbol{\varphi}}^2_{\hat{\boldsymbol{\beta}  },j}$ in the estimation of the pseudo-inverse of the sample covariance matrix,  our proposed test with non-convex regularizers may achieve extra efficacy.{  We show in the simulation that the test constructed using the SCAD regularizer performs better than that constructed using the LASSO regularizer in most situations, especially for the heteroscedasticity case.}

\section{Simulation study}\label{sec5}

In this section, we conduct simulation studies to evaluate the finite sample performance of the proposed Wald-type test under the expectile framework. Our primary focus is on assessing its ability to control Type I error and analyzing its local power in various scenarios. Meanwhile, recall that there are totally two regularizers, $P_{\lambda}(\cdot) $ and $Q_{\lambda_j}(\cdot) $, separately used in our testing procedure. We also exhibit a strong interest in the performance of test statistics derived from different regularizers within the amenable category. To illustrate this, we exemplify the convex regularizer with the Lasso and the non-convex regularizer with the SCAD and denote the combination of two LASSO regularizers as 'LA-LA' and  the combination of two SCAD regularizers as 'SC-SC'.

Given that expectile regression (\ref{est01}) can be applied to detect heteroscedasticity in high-dimensional data due to its asymmetric weighting on squared loss, we consider both the homoscedastic models studied in \cite{VBRD14} and  heteroscedastic models discussed in \cite{GZ16}, \cite{WWL12}, and \cite{ZCZ18}.
The entire simulation study is conducted using {\sf MATLAB 2022b}, with each simulation procedure repeated 1,000 times.  The tuning parameters $\lambda $ and $\lambda_j $ are selected via the $10- $fold cross-validation. Additionally, we set the crucial parameter $\gamma = 3.7 $ for the SCAD regularizer, as  suggested in \cite{FL01}.

\subsection{Simulation results under the homoscedastic case}

The response variable is generated from the following linear expectile model:
\begin{equation*}
y_i = \boldsymbol{X}_i^\top \boldsymbol{\beta} ^*  + (\epsilon_i - {\sf E}\rho_{\tau}(\epsilon_i)), 
\end{equation*}  
for $i=1,\ldots,n$, where the error term $\epsilon_i - {\sf E}\rho_{\tau}(\epsilon_i)$ ensures that the identical condition is satisfied. 

 To generate the covariates $\boldsymbol{X}_i$, we draw them independently from the multivariate normal distribution $\mathcal{N}_p(0,{\boldsymbol{\Sigma}}) $. We specify  two distinct covariance matrices  as the design choices, which are studied by \cite{VBRD14} and \cite{LW17b}, respectively: 
\begin{align*}
&\text{Toeplitz: } \quad   {\boldsymbol{\Sigma}}_{jk} = \xi^{|j-k|},  \\
&\text{Scale-free graph corr:} \quad 
{\boldsymbol{\Sigma}} = \left[ {\mathbf D} \left[ {\mathbf A} + (|\lambda_{min}({\mathbf A})| + 0.2) {\mathbf I}\right]{\mathbf D} \right]^{-1},
\end{align*}
where  ${\mathbf{A}} \in {\mathbb{R}}^{p \times p}$ is an adjacency matrix associated with a certain graph. In this matrix, the nonzero off-diagonal elements ${{A}}_{jk}$, $|j-k| \leq \varsigma $  are set to $0.3$, and the diagonal elements are set to $0$. Moreover, ${\mathbf{D}} \in {\mathbb{R}}^{p \times p}$ is a diagonal matrix where ${{D}}_{jj} = 1$ for $j\in\{1,\ldots, p/2\}$ and ${{D}}_{jj} = 3$ for $j\in\{p/2 + 1,\ldots, p\}$. Notably,   the Toeplitz covariance matrix leads to tridiagonal precision inverse matrix, while the sparsity of the inverse matrix of the Scale-free graph corr depends on the adjacency relationships recorded by ${\mathbf A}$.

To generate $\boldsymbol{\beta} ^*$, we let ${\beta}^*_1 = k/\sqrt{n}$,  $k\in\{0,\ldots, 6\}$ in different settings. For the rest of the coefficients, we consider two scenarios: 
\begin{enumerate}[(1)]
\item The Dirac measure case: ${\beta}^*_i = 1$ if $i\in {\cal K}$ and ${\beta}^*_i = 0$ if $i\notin {\cal K}\cup\{1\}$. 
\item The Uniform random measure case:  ${\beta}^*_i\sim\mathcal{U}(0,2)$ if $i\in {\cal K}$ and ${\beta}^*_i = 0$ if $i\notin {\cal K}\cup\{1\}$.
\end{enumerate}
where ${\cal K}={\cal K}^{4}$ or ${\cal K}^{10}$, with 
 $$
{\cal K}^{4} \in \{6,12,15,20  \}, \quad \text{and} \quad {\cal K}^{10} \in \{5,6,7,8,9,10,11,12,15,20  \}.
 $$
We denote the two scenarios as Dirac 4 (Dirac 10) and Unif 4 (Unif 10), respectively.

To generate $\epsilon_i$, 
 two distributions are taken into consideration: (1) the  standard normal distribution $\mathcal{N}(0,1)$, and (2) the student-$t$ distribution with $4$ degrees of freedom, denoted as $t_4$. The $t_4$ distribution is a heavy-tailed distribution, which helps to evaluate the performance of our proposed test under heavy-tailed scenario.

The sample size is set as $n =300$, the dimension of the covariates is set as $p = 200,400$, and $600$  and  the expectile level is set to $\tau = 0.1,0.5$ and $0.9$. Note that when the expectile level $\tau = 0.5$, the asymmetric square loss becomes exactly equivalent to the least square loss. The corresponding de-biased estimator then aligns with that under the OLS framework, as studied in \cite{VBRD14, JM14} and \cite{ZZ14}. For the parameters in different choices of the covariance matrix, we consider $\xi = 0.25,0.5$, and $0.75$ for the Toeplitz while $\varsigma = 10, 20$, and $p$ for the Scale-free graph case.

To assess the ability of the test statistic in controlling the Type I error and local power in homoscedastic scenarios, we firstly conduct a hypothesis test on a single coefficient at given expectile level $\tau$,
\begin{equation}\label{sig_test_Homo}
  {\sf H}_{\tau,0}: {\beta}^*_1 = 0 \quad \text{versus} \quad {\sf H}_{\tau,1}: {\beta}^*_1 \neq 0.
\end{equation} 
\begin{table}[htbp]
\small
  \centering
  \caption{The empirical Type I error and power  obtained by different regularizers under the homoscedastic case with  $n = 300, p =400$,  Dirac $4 $, $\tau = 0.1 $, standard normal error and Toeplitz design  from $1000 $ replicates, and  the average computing time for each repetition. The CPU time records the average computation time for each repetition under different methods. The null hypothesis is ${\sf H}_0: \beta^*_1=0$.}
    \begin{tabular}{cccccccccc}
    \toprule
    \toprule
    Toeplitz  &  $\epsilon \sim \mathcal{N}(0,1)$    & \multicolumn{7}{c}{$\beta^*_1 = k/\sqrt{n}$} &  \\
\cmidrule{3-9}    Method &  & $k=0$   & $k=1$   & $k=2$   &$ k=3$   & $k=4$   & $k=5$   &$ k=6$   & CPU time (s) \\
    \midrule
          & $\xi=0.25$ & 5.90\% & 13.50\% & 35.80\% & 65.40\% & 87.10\% & 96.80\% & 99.50\% &  \\
    LA-LA & $\xi= 0.50$  & 5.20\% & 11.80\% & 35.00\% & 64.20\% & 85.20\% & 95.60\% & 99.10\% & 1.056  \\
          & $\xi = 0.75$ & 4.50\% & 11.90\% & 28.10\% & 52.30\% & 74.20\% & 87.50\% & 95.70\% &  \\
    \midrule
          & $\xi=0.25$ & 4.80\% & 12.30\% & 35.30\% & 68.80\% & 89.00\% & 95.30\% & 99.60\% &  \\
    SC-SC & $\xi =  0.50$  & 4.60\% & 12.50\% & 33.10\% & 67.20\% & 86.70\% & 94.10\% & 99.20\% & 4.252  \\
          & $\xi = 0.75$ & 4.90\% & 10.80\% & 26.00\% & 59.30\% & 88.50\% & 89.10\% & 95.20\% &  \\
    \bottomrule
    \bottomrule
    \end{tabular}%
  \label{tab:addlabel}%
  \label{diff_01table}
\end{table}%

\begin{figure}[htbp]
	\centering
	\begin{minipage}[t]{0.48\textwidth}
		\centering
ude		\includegraphics[width=7cm]{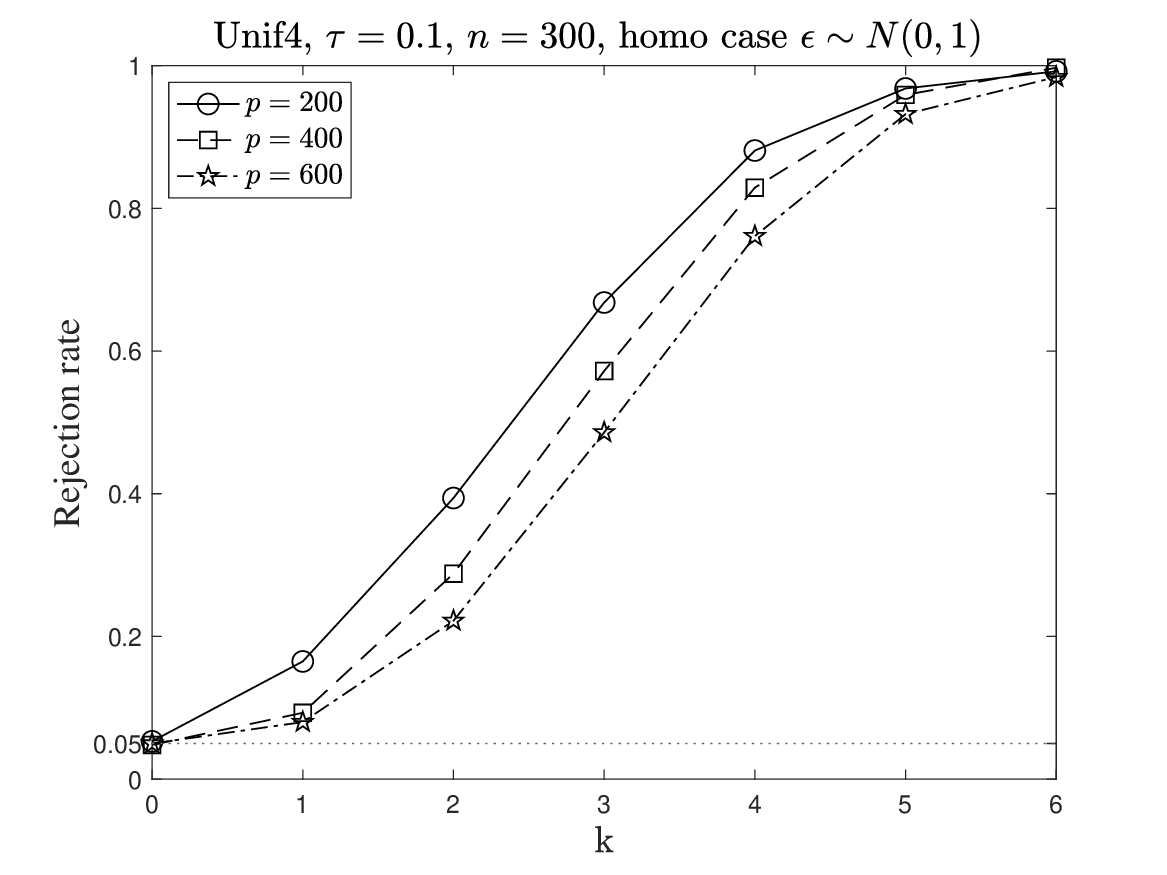}
	\end{minipage}
	\begin{minipage}[t]{0.48\textwidth}
		\centering
		\includegraphics[width=7cm]{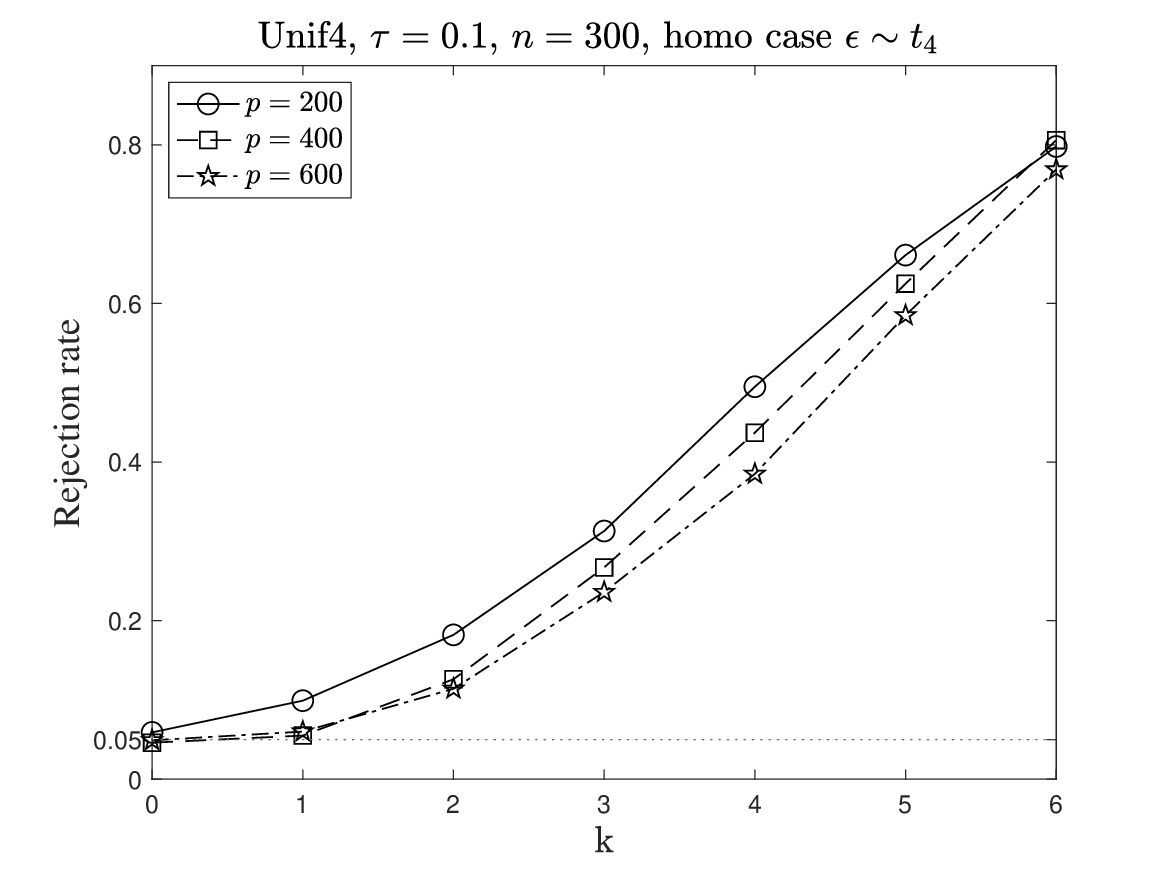}
	\end{minipage}
  \caption{The empirical power and Type I error of the proposed de-biased test for the expectile regression with $\tau = 0.1$ and $\beta^*_1 = k/\sqrt{n}$ under the Unif4, homoscedastic case with $n = 300$ and $p\in\{200,400,600\}$, calculated from $1000 $ replicates. The null hypothesis is ${\sf H}_0: \beta^*_1=0$.}
  \label{pic_300_varyp_homo}
\end{figure}

\begin{figure}[htbp]
	\centering
	\begin{minipage}[t]{0.48\textwidth}
		\centering
		\includegraphics[width=7.5cm]{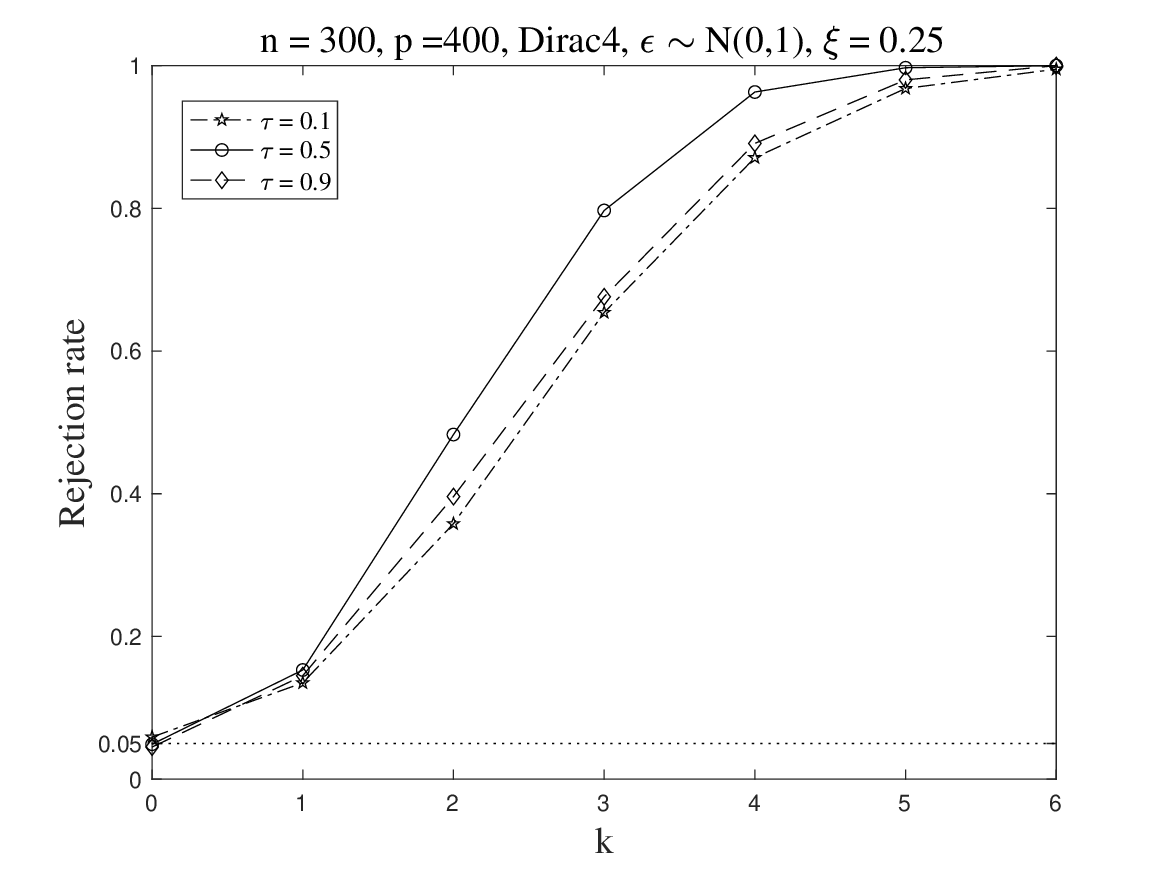}
	\end{minipage}
	\begin{minipage}[t]{0.48\textwidth}
		\centering
		\includegraphics[width=7.5cm]{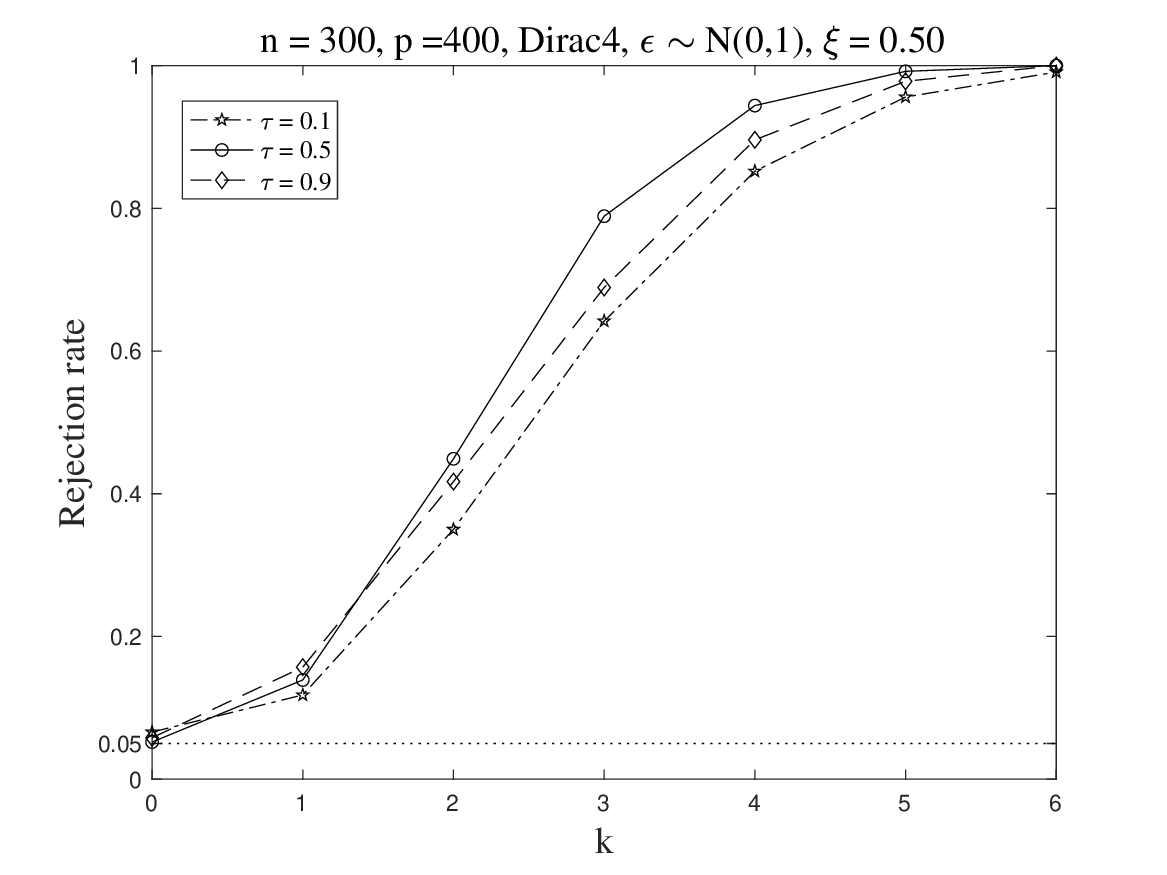}
	\end{minipage}
  \caption{The empirical power and Type I error of the proposed de-biased test for the expectile regression  under the Dirac 4,  homoscedastic case with $n = 300$, $\beta^*_1 = k/\sqrt{n}$,  $\tau\in\{0.1,0.5,0.9\}$ and Toeplitz design  calculated from $1000 $ replicates. The null hypothesis is ${\sf H}_0: \beta^*_1=0$.}
  \label{pic_300_vartau_homo}
\end{figure}

Table \ref{diff_01table} shows that test statistics derived from different regularizers exhibit similar efficacy in controlling Type I error and local power under the Dirac 4 scenario with diverse Toeplitz designs. However, the use of the non-convex SCAD regularizer incurs a noticeably higher computational burden, even when implementing the LLA strategy by {\cite{ZL08}}, than the Lasso method. 
 Figure \ref{pic_300_varyp_homo} shows that as the dimension grows, the empirical power of the proposed test gradually diminishes. Furthermore, in scenarios where errors follow a 
t-distribution, the test's empirical power is observed to be relatively lower compared to scenarios with normally distributed errors.
Figure \ref{pic_300_vartau_homo} demonstrates that the rejection rate of our proposed test is comparable at $\tau = 0.1$ and $\tau = 0.9$. Notably, the test exhibits marginally higher empirical power at $\tau = 0.5$, which corresponds to the OLS test in \cite{VBRD14},  in comparison to the other two expectile levels.

Following the results presented in Table \ref{diff_02table}, we observe that the two methods exhibit similar test power at $\varsigma = 10$. Notably, as $\varsigma$ increases from 20 to 
$p$, the 'SC-SC' method outperforms 'LA-LA' in controlling the empirical Type I error along with local power. This suggests that the non-convex penalizer might be more effective in scenarios where the inverse of the covariance matrix is not particularly sparse.

Next we consider a group test on ${ \boldsymbol\beta}^*_{\cal G} $, specifically,
\begin{equation}\label{sig_gptest_Homo}
  {\sf H}_{0}: { \boldsymbol\beta}^*_{\cal G} = \boldsymbol{0} \quad \text{versus} \quad {\sf H}_{1}: { \boldsymbol\beta}^*_{\cal G}   \neq \boldsymbol{0},
\end{equation}
where ${\cal G} = \{ 1,3,4 \}$. Analogously,  we set the value of $\beta^*_1$ varies from $0/\sqrt{n}$ to $6/\sqrt{n}$  while we should note that $\beta^*_3$ and $\beta^*_4$ are consistently zero. Table \ref{homo_group_test} reports the empirical Type I error and  power for  multiple tests on  group ${\cal G}$ across various settings on the true coefficients, which  is consistent with the result on the test of single component  $\beta^*_1$ under various coefficient setups in our simulation study.

\begin{table}[htbp]
\small
  \centering
  \caption{The empirical Type I error and power  obtained by different regularizers under the  Dirac $4 $, homoscedastic case with $n = 300, p =400$,   $\tau = 0.1 $, standard normal error and Scale-free design  from $1000 $ replicates. The null hypothesis is ${\sf H}_0: \beta^*_1=0$.}
    \begin{tabular}{ccccccccc}
    \toprule
    \toprule
    Scale-free  &   $\epsilon \sim \mathcal{N}(0,1)$    & \multicolumn{7}{c}{$\beta^*_1 = k/\sqrt{n}$} \\
\cmidrule{3-9}    Method &  & $k=0$   & $k=1$   & $k=2$   & $k=3$   & $k=4$   & $k=5$   & $k=6$ \\
    \midrule
          & $\varsigma = 10$ & 5.10\% & 7.10\% & 12.10\% & 30.10\% & 49.60\% & 72.10\% & 86.70\% \\
    LA-LA & $\varsigma = 20$ & 5.30\% & 6.90\% & 7.70\% & 16.50\% & 31.70\% & 61.30\% & 76.70\% \\
          & $\varsigma = 400$ & 7.50\% & 47.50\% & 93.80\% & 99.90\% & 100.00\% & 100.00\% & 100.00\% \\
    \midrule
          & $\varsigma = 10$ & 4.70\% & 7.20\% & 19.70\% & 41.30\% & 64.00\% & 88.30\% & 92.30\% \\
    SC-SC & $\varsigma = 20$ & 5.00\% & 7.30\% & 16.10\% & 29.40\% & 46.60\% & 66.80\% & 81.20\% \\
          & $\varsigma = 400$ & 4.80\% & 45.10\% & 86.20\% & 99.30\% & 100.00\% & 100.00\% & 100.00\% \\
    \bottomrule
    \bottomrule
    \end{tabular}%
 \label{diff_02table}
\end{table}%

\begin{table}[htbp]
\small
  \centering
  \caption{The empirical Type I error and power for group ${\cal G}$ under the homoscedastic case with $n = 300, p =400$,  $\xi = 0.50$, $\tau = 0.1 $, standard normal error and Toeplitz design  from $1000 $ replicates. The null hypothesis is ${\sf H}_0: \beta^*_1=\beta^*_3=\beta^*_4=0$.}
    \begin{tabular}{lrrrrrrr}
    \toprule
    \toprule
     & \multicolumn{7}{c}{$\beta^*_1 = k/\sqrt{n},\  \beta^*_3=\beta^*_4=0, \quad {\sf H}_0: \beta^*_1=\beta^*_3=\beta^*_4=0$.} \\
\cmidrule{2-8}     & \multicolumn{1}{l}{$k = 0$} & \multicolumn{1}{l}{$k = 1$} & \multicolumn{1}{l}{$k = 2$} & \multicolumn{1}{l}{$k = 3$} & \multicolumn{1}{l}{$k = 4$} & \multicolumn{1}{l}{$k = 5$} & \multicolumn{1}{l}{$k = 6$} \\
    \midrule
    Dirac4 & 5.30\% & 14.70\% & 30.90\% & 57.40\% & 80.80\% & 93.50\% & 98.90\% \\
    Dirac10 & 6.30\% & 14.00\% & 30.30\% & 55.10\% & 77.00\% & 91.50\% & 97.60\% \\
    Unif4 & 4.70\% & 13.20\% & 32.60\% & 59.40\% & 81.70\% & 95.10\% & 99.00\% \\
    Unif10 & 6.50\% & 15.70\% & 31.70\% & 56.20\% & 78.50\% & 92.00\% & 99.30\% \\
    \bottomrule
    \bottomrule
    \end{tabular}%
  \label{homo_group_test}%
\end{table}%

\subsection{Simulation results under the heteroscedastic case}

The response variable is generated from the following sparse linear model:
\begin{equation*}
 y_i = x_{i,6} + x_{i,12} + x_{i,15} + x_{i,20} + 0.7 \Phi(x_{i,1}){\epsilon_i},
\end{equation*}  
for $i \in \{1,\ldots,n\}$,  where the covariate $\boldsymbol{X}_i$ being a $p$-dimensional random variable and the error term $\epsilon_i$ being  generated as in the homoscedastic case. Additionally,   $\Phi(\cdot) $ denotes the cumulative distribution function (cdf) of the standard normal distribution. 
Note further that the covariate $x_{(1)} $  plays an essential role in revealing the potential heteroscedasticity, since the magnitude of the  pseudo-true coefficient for $x_{(1)} $ by the means of expectation under expectile framework is strictly distinct from $0 $ for any  $\tau \neq 0.5 $. Thus, many studies associate the presence of heteroscedasticity with a non-zero estimate of the coefficient corresponding to $x_{(1)} $, see  \cite{GZ16} and \cite{ZCZ18}.  

 To evaluate the capability of our proposed test  in detecting heteroscedasticity, we continue to employ the  test  (\ref{sig_test_Homo})  for single coordinate to obtain the rejection rates under a given statistical significance level $\alpha$ at different expectile levels. 
 Furthermore, we are intrigued by the significance of covariates  that exhibit high correlation with $x_{(1)}$ and, as a result, may be influenced by such heteroscedasticity. Consequently, we also conduct an additional test to further investigate the performance of our proposed test in controlling  the Type I error and local power,
 \begin{equation*}\label{sig_test_ad}
  {\sf H}_{\tau,0} : {\beta}^*_2 = 0 \quad \text{and} \quad {\sf H}_{\tau,1}:{\beta}^*_2 \neq 0,  
\end{equation*}
where ${\beta}^*_2 $  is the coefficient of $x_{(2)} $ in the heteroscedastic model. 

\begin{table}[htbp]
\small
  \centering
  \caption{The empirical rejection rate of the test (\ref{sig_test_Homo}) deduced by our proposed test statistic with 'LA-LA' method and 'SC-SC' method, and the frequency that $x_{(1)} $ is selected by the 'E-SCAD' and 'E-Lasso' method under the heteroscedastic case with $n = 300$, $\xi = 0.5$, Toeplitz design,  different expectile level $\tau $ and error distribution from 1000 replicates.}
    \begin{tabular}{ccccccccc}
    \toprule
    \toprule
    Dirac4 & \multicolumn{4}{c}{$\epsilon \sim \mathcal{N}(0,1)$}        & \multicolumn{4}{c}{$\epsilon \sim t_4 $} \\
\cmidrule{2-9}    $p = 400$ & LA-LA & E-Lasso & SC-SC & E-Scad & LA-LA & E-Lasso & SC-SC & E-Scad \\
    \midrule
    $\tau = 0.1$ & 90.50\% & 85.60\% & 98.20\% & 86.10\% & 82.40\% & 83.10\% & 98.30\% & 74.60\% \\
    $\tau = 0.5$ & 5.40\% & 5.70\% & 5.10\% & 0.00\% & 5.00\% & 3.30\% & 4.70\% & 0.00\% \\
    $\tau = 0.9$ & 89.70\% & 91.00\% & 95.10\% & 93.80\% & 79.70\% & 88.90\% & 93.20\% & 78.60\% \\
    \midrule
    Dirac4 & \multicolumn{4}{c}{$\epsilon \sim \mathcal{N}(0,1)$}        & \multicolumn{4}{c}{$\epsilon \sim t_4$} \\
\cmidrule{2-9}    $p = 600$ & LA-LA & E-Lasso & SC-SC & E-Scad & LA-LA & E-Lasso & SC-SC & E-Scad \\
    \midrule
    $\tau = 0.1$ & 89.70\% & 83.70\% & 96.60\% & 79.50\% & 74.90\% & 68.80\% & 96.10\% & 69.10\% \\
    $\tau = 0.5$ & 4.80\% & 1.60\% & 5.00\% & 0.00\% & 4.70\% & 1.80\% & 4.70\% & 0.00\% \\
    $\tau = 0.9$ & 87.70\% & 78.60\% & 94.40\% & 87.30\% & 80.70\% & 69.20\% & 96.50\% & 72.20\% \\
    \bottomrule
    \bottomrule
    \end{tabular}%
  \label{hete_table_02}%
\end{table}%


Table \ref{hete_table_02} presents the performance of our proposed test statistic, along with the expectile-based estimators 'E-SCAD' and 'E-Lasso', which have been studied in \cite{ZCZ18}. In the case of $\tau =0.5$, it is observed that the 'E-Lasso' estimator tends to select $x_{(1)} $ more frequently, while the 'E-SCAD' estimator is less inclined to select $x_{(1)} $. Furthermore,  our proposed test effectively controls the empirical Type I error in cases where $p =400$ and $p = 600$. For the case $\tau = 0.1 $ and $\tau = 0.9 $, the empirical rejection rate of our proposed test is larger than the frequency that $x_{(1)} $ is selected by the 'E-SCAD' and smaller than that selected by the 'E-Lasso', which demonstrates that our test can relatively well identify the heteroscedasticity of the model. Additionally, under the Dirac 4 Toeplitz design, the test statistics derived using the 'SC-SC' method exhibit relatively superior capability in detecting heteroscedasticity compared with those derived from the 'LA-LA' method.

 \begin{figure}[htbp]
	\centering
	\begin{minipage}[t]{0.48\textwidth}
		\centering
		\includegraphics[width=7.5cm]{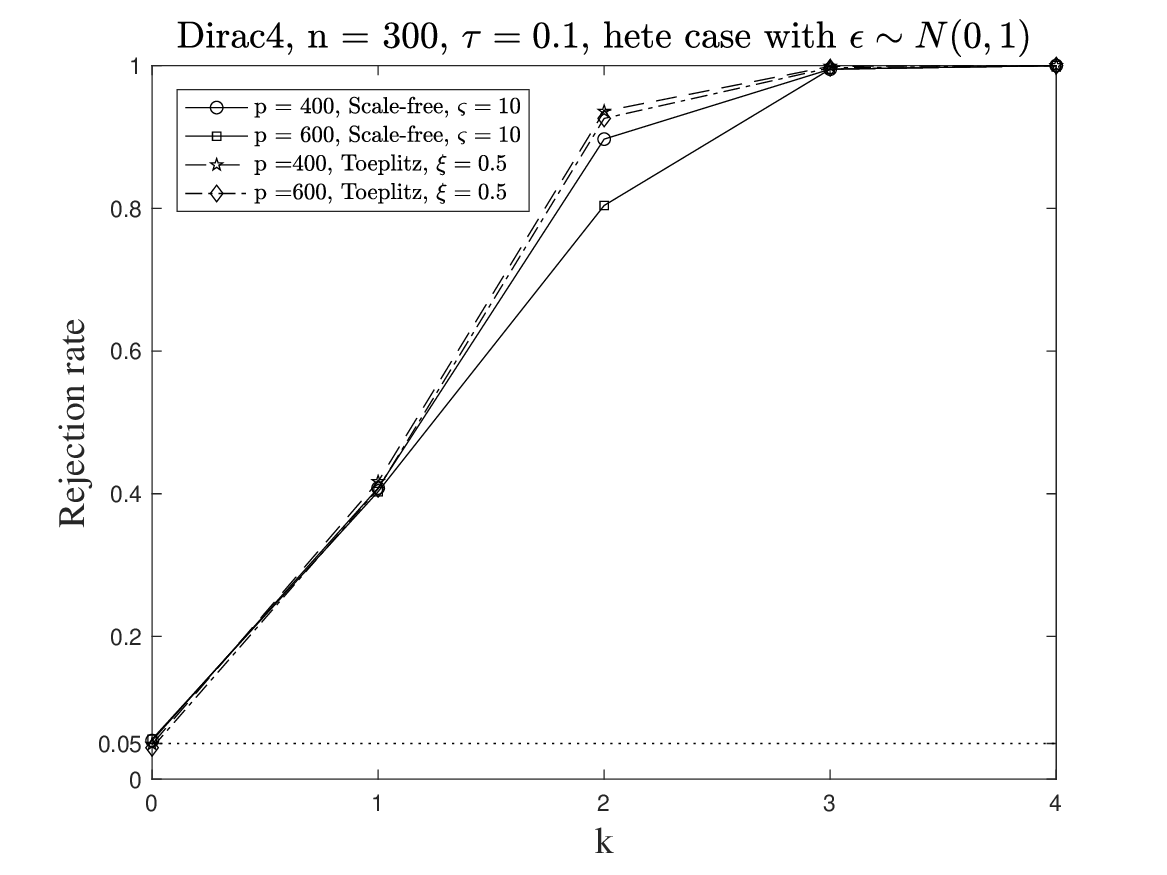}
	\end{minipage}
	\begin{minipage}[t]{0.48\textwidth}
		\centering
		\includegraphics[width=7.5cm]{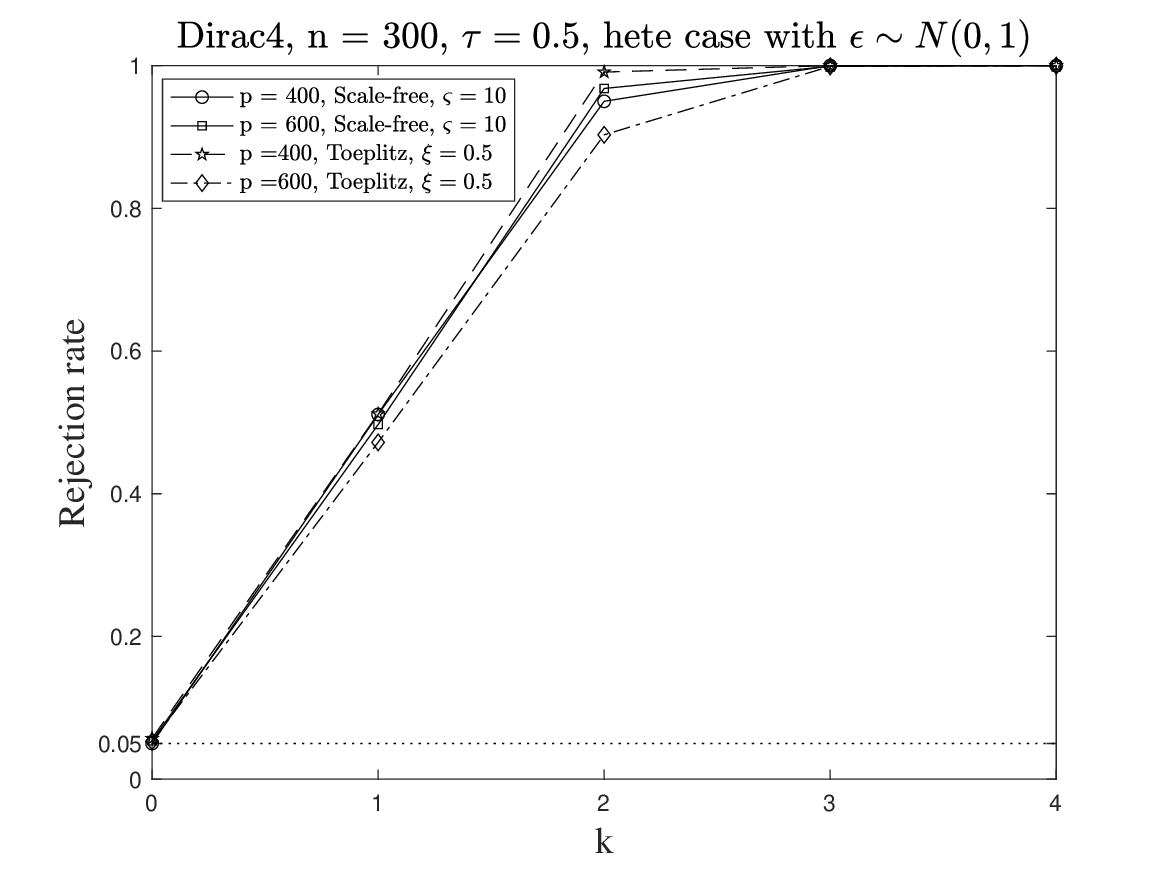}
	\end{minipage}
  \caption{The rejection rate of the proposed de-biased test for the expectile regression with different $\tau $ under the heteroscedasticity case with $n = 300 $ and $p=400,600$, $\beta^*_2 = k/\sqrt{n}$, and standard normal error, calculated from $1000$ replicates. The null hypothesis is ${\sf H}_0: \beta^*_2=0$.}
  \label{pic_hete_01}
\end{figure}

Figure \ref{pic_hete_01} depicts the empirical Type I error associated with the coefficient of $x_{(2)}$ across various expectile levels $\tau$, as $p$ increases from 400 to 600. It is evident that our proposed test effectively controls the empirical Type I error at both $\tau = 0.1$ and $\tau = 0.5$, across different dimensions $p$ and design setups, demonstrating the test's viability in heteroscedastic scenarios. Furthermore, it is observed that the empirical power reaches 1 when $\beta^*_2 = 4/\sqrt{n}$ for all scenarios depicted in Figure \ref{pic_hete_01}. This indicates that our proposed test statistic possesses considerable ability to detect potential active covariates in cases of heteroscedasticity.  Notably, in comparison with the homoscedastic case, the trend in local power becomes more pronounced, suggesting that heteroscedasticity significantly impacts the inference for some coordinate of the covariate closely associated with $x_{(1)} $.


Lastly, we consider group tests  \eqref{sig_gptest_Homo} with ${\cal G}_1 = \{1,2,3\}$ and ${\cal G}_2 = \{2,3,4\}$, where group ${\cal G}_1$ includes the critical variable 
$x_{(1)} $, while group ${\cal G}_2$ encompasses three variables predominantly affected by heteroscedasticity. Moreover, we employ the Dirac 4 pattern to generate the true coefficients.  In contrast to experimental designs with homoscedastic group tests, here, due to the nonlinear heteroscedastic structure of $x_{(1)} $, we only consider altering different dimensions $p$, expectile levels $\tau$ and types of errors to observe their rejection rates in group tests during simulations. In  Table \ref{hete_gp}, the rejection rates for both group tests, although lower than those of the previously presented univariate empirical Type I errors, are comparatively close. This demonstrates that our proposed multivariate testing approach maintains commendable performance under the heteroscedastic settings.


\begin{table}[htbp]
\small
  \centering
  \caption{The rejection rate calculated by different method, dimension $p$, expectile level $\tau$ and error type  for group ${\cal G}_1$ and ${\cal G}_2$ under the heteroscedastic case with $n = 300$,  $\xi = 0.50$ and  Toeplitz design  from $1000 $ replicates.}
    \begin{tabular}{cccccccccc}
    \toprule
    \toprule
          &       &       & \multicolumn{2}{c}{LA-LA } &       &       & \multicolumn{2}{c}{SC-SC } &  \\
\cmidrule{4-5}\cmidrule{8-9}          &       & \multicolumn{2}{c}{$\epsilon \sim \mathcal{N}(0,1)$} & \multicolumn{2}{c}{$\epsilon \sim t_4$} & \multicolumn{2}{c}{$\epsilon \sim \mathcal{N}(0,1)$} & \multicolumn{2}{c}{$\epsilon \sim t_4$} \\
\cmidrule{3-10}    group & $\tau$ value & $p = 400$ & $p = 600$ & $p = 400$ & $p = 600$ & $p = 400$ & $p = 600$ & $p = 400$ & $p = 600$ \\
    \midrule
          & $\tau = 0.1$ & 75.80\% & 73.40\% & 65.60\% & 56.40\% & 95.50\% & 94.10\% & 87.70\% & 83.50\% \\
    ${\cal G}_1= \{1,2,3\} $ & $\tau = 0.5$ & 4.80\% & 5.50\% & 4.60\% & 4.20\% & 4.60\% & 4.50\% & 4.40\% & 5.00\% \\
          & $\tau = 0.9$ & 78.90\% & 72.90\% & 63.70\% & 61.30\% & 92.60\% & 87.70\% & 86.40\% & 84.50\% \\
    \midrule
          & $\tau = 0.1$ & 4.60\% & 4.20\% & 4.30\% & 4.50\% & 4.50\% & 4.40\% & 3.90\% & 5.70\% \\
    ${\cal G}_2= \{2,3,4\}$  & $\tau = 0.5$ & 5.40\% & 5.40\% & 4.80\% & 4.70\% & 4.10\% & 4.90\% & 4.70\% & 4.30\% \\
          & $\tau = 0.9$ & 5.20\% & 6.20\% & 4.70\% & 6.10\% & 4.50\% & 4.70\% & 4.90\% & 4.50\% \\
    \bottomrule
    \bottomrule
    \end{tabular}%
  \label{hete_gp}%
\end{table}%

 \section{Real data analysis}\label{sec6}
\subsection{Financial and macro data}

In this empirical study, we examine the relationship between monetary policy measures and stock market prices, using the FRED-MD dataset, available on the Fred-MD website
\footnote{https://research.stlouisfed.org/econ/mccracken/fred-databases/}.
The dataset consists of 127 U.S. macroeconomic variables observed monthly over the period from January 1959 to September 2023. These macroeconoic variables can be classified into eight groups: consumption, orders and inventories; housing; interest and exchange rates; labour market; money and credit; output and income; prices; and the stock market. More detailed description can be found in \cite{MN16}.

We set the return of S\&P 500 as the dependent variable and the other 126 variables with lags 1, 2, and 3 as independent variables. 
Table \ref{table_md_empirical} reports the results of the de-biased estimation and inference under the expectile levels $\tau = 0.1,0.3,0.5,0.7,$ and $0.9$. 
The estimates of the regression coefficients related to  Monetary Base (BOGMBASE) and M1 Money Stock (M1SL) are reported with standard deviation. Furthermore, it presents the $p$-values for separate group tests on BOGMBASE and M1SL with lags 1, 2, and 3 using our proposed test statistics.  Table \ref{table_md_empirical} shows that the de-biased estimates of M1SL${}_{t-1}$ and M1SL${}_{t-3}$ are statistically significant at 10$\%$ significance level under the $\tau = 0.9$ case while the de-biased estimates of the regression coefficients  related to M1SL are not significant under other expectile levels. Moreover, the group test on the M1SL is statistically significant at 5$\%$ significance level under the $\tau = 0.9$ case, which indicates that the M1SL can be used to predict the upper expectiles of the distribution of S\&P 500 returns. Furthermore,  the differences in the results of the de-biased estimates at given expectile levels, along with the disparity in the significance of group tests for variables BOGMBASE and M1SL, suggest that the regression model exhibits a certain degree of  heteroscedasticity. This finding  is aligned with the  results reported by \cite{Taa15}, who showed that money supply has no impact on stock prices using a nonparametric Granger causality in mean test, but when employing quantile regression-based test, the impact of money supply on stock prices becomes apparent and statistically significant. { They showed that using a nonparametric Granger causality test in mean, money supply appears to have no impact on stock prices. However, when a quantile regression-based test is employed, the influence of money supply on stock prices becomes apparent and statistically significant.}



\begin{table}[htbp]
\small
  \centering
  \caption{High-dimensional expectile regression results of S\&P 500 on the other 126 variables with lags 1--3 as independent variables. $^{***}$", "$^{**}$", "$^*$" indicate the statistical significance at 1$\%$, 5$\%$, and 10$\%$ levels, respectively.}
    \begin{tabular}{lcccccc}
    \toprule
    \toprule
          & \multicolumn{1}{c}{$\tau = 0.1$} & \multicolumn{1}{c}{$\tau = 0.3$} & \multicolumn{1}{c}{$\tau = 0.5$} & \multicolumn{1}{c}{$\tau = 0.7$} & \multicolumn{1}{c}{$\tau = 0.9$}& \\\cmidrule{2-6}   
M1SL${}_{t-1}$ & -0.021  & -0.023  & -0.032  & -0.049  & -0.072${}^*$  \\
          & (0.1065) & (0.0742) & (0.0599) & (0.0511) & (0.0423) \\
    M1SL${}_{t-2}$ & 0.105  & 0.066  & 0.056  & 0.030  & 0.002  \\
          & (0.1085) & (0.0709) & (0.0557) & (0.0468) & (0.0399) \\
    M1SL${}_{t-3}$ & -0.135  & -0.070  & -0.065  & -0.074  & -0.084${}^*$  \\
          & (0.0870) & (0.0538) & (0.0504) & (0.0484) & (0.0459) \\
    BOGMBASE${}_{t-1}$ & -0.022  & 0.018  & 0.036  & 0.043  & 0.048  \\
          & (0.1288) & (0.0919) & (0.0768) & (0.0638) & (0.0568) \\
    BOGMBASE${}_{t-2}$ & 0.058  & 0.028  & 0.013  & 0.008  & 0.019  \\
          & (0.0742) & (0.0600) & (0.0540) & (0.0482) & (0.0417) \\
    BOGMBASE${}_{t-3}$ & -0.090  & -0.040  & -0.023  & -0.010  & -0.021  \\
          & (0.0733) & (0.0510) & (0.0436) & (0.0430) & (0.0389) \\
    \midrule
    \multicolumn{7}{l}{H${}_0$: $\beta_{{\textit M1SL}{}_{t-1}}=\beta_{\textit{M1SL}{}_{t-2}}=\beta_{\textit{M1SL}{}_{t-3}}=0$} \\
    p-value & 0.444  & 0.479  & 0.350  & 0.188  & 0.047  \\
    \midrule
    \multicolumn{7}{l}{H${}_0$: $\beta_{{\textit BOGMBASE}{}_{t-1}}=\beta_{\textit{BOGMBASE}{}_{t-2}}=\beta_{\textit{BOGMBASE}{}_{t-3}}=0$} \\
    p-value & 0.625  & 0.837  & 0.889  & 0.897  & 0.752  \\
    \bottomrule
    \bottomrule
    \end{tabular}%
    \label{table_md_empirical}%
\end{table}%

 \subsection{Gene data}
 
 In this section, we use a microarray dataset to demonstrate the effectiveness of our proposed de-biased expectile test statistic. \cite{HLP11} investigated the impact of the innate immune system on the development of atherosclerosis by analyzing gene profiles from 119 patients' peripheral blood. The data, collected using the Illumina HumanRef8 V2.0 Bead Chip, are publicly available on the Gene Expression Omnibus. Following \cite{HLP11}, the toll-like receptors (TLR) signaling pathway is identified as crucial in activating the innate immune response in atherosclerosis cases. Specifically, the 'TLR8' gene is highlighted as a key gene associated with atherosclerosis. \cite{ZCZ18} further demonstrate that the skewness and kurtosis of the 'TLR8' gene data deviate from the normal distribution. Notably, they applied the Lilliefors Test to the 'TLR8' gene data and concluded that it does not conform to a normal distribution at the 5$\%$ significance level. To further study the relation between this key gene and the other genes, \cite{FLW17} regressed it on another 464 genes from 12 different related-pathways (TLR, CCC, CIR, IFNG, MAPK, RAPO, EXAPO, INAPO, DRS, NOD, EPO, CTR) and their result reveals that the regression residuals have heavy right tail and skewed distribution. Motivated by such heteroscedastic feature of the data, we  apply our proposed de-biased expectile test on each possible gene at  different expectile levels, $\tau =0.1, 0.3, 0.5, 0.7$, and $0.9 $. 

\begin{table}[htbp]
\small
  \centering
  \caption{Significant genes in the expectile regression at the significance level $\alpha = 0.05 $, obtained by the de-biased method with {$\tau \in \{ 0.1, 0.3, 0.5, 0.7, 0.9 \}$}. The genes found in \cite{FLW17} are in boldface.  }
    \begin{tabular}{cccccccccc}
    \toprule
    \toprule
    \multicolumn{2}{c}{$\tau =0.1$} & \multicolumn{2}{c}{$\tau = 0.3$} & \multicolumn{2}{c}{$\tau = 0.5$} & \multicolumn{2}{c}{$\tau = 0.7$} & \multicolumn{2}{c}{$\tau = 0.9$} \\
    \midrule
    \textbf{IFI6} & 0.142  & \textbf{IFI6} & 0.148  & \textbf{MAPK1} & 0.299  & IFNA2 & -0.162  & IL1A  & -0.314  \\
    RASGRP1 & 0.212  & \textbf{MAPK1} & 0.265  & MAPK14 & 0.241  & MAPK10 & -0.101  & \textbf{CSF3} & -0.227  \\
    \textbf{MAPK1} & 0.171  & MAPK14 & 0.211  & \textbf{PSMB8} & 0.141  & MAPK7 & -0.180  & IFNGR2 & -0.206  \\
    PRKCH & -0.164  & \textbf{PSMB8} & 0.146  & \textbf{KPNB1} & 0.198  & \textbf{MAPK1} & 0.209  & IFNA2 & -0.180  \\
    DUSP1 & 0.136  & \textbf{KPNB1} & 0.178  & \textbf{BCL2L11} & -0.184  & PSMA5 & -0.198  & \textbf{AKT3} & -0.172  \\
    \textbf{AKT1} & 0.231  & CASP6 & -0.135  & CFLAR & 0.365  & SPTAN1 & -0.232  & MAP3K14 & -0.393  \\
    \textbf{TLR3} & 0.121  & \textbf{BCL2L11} & -0.208  & \textbf{CRK } & 0.250  & \textbf{KPNB1} & 0.182  & MAPK7 & -0.202  \\
    \textbf{DAPK2} & 0.129  & CFLAR & 0.277  &       &       & CFLAR & 0.421  & NRAS  & -0.345  \\
    PSMD4 & -0.176  & \textbf{CRK } & 0.203  &       &       & CBL   & -0.174  & MAP3K11 & -0.133  \\
    \textbf{PSMB8} & 0.152  &       &       &       &       & PTK2B & -0.097  & MAP3K4 & -0.264  \\
    STK24 & 0.231  &       &       &       &       & \textbf{CRK } & 0.370  & AKT2  & -0.199  \\
    PRKCQ & -0.130  &       &       &       &       &       &       & PSMD6 & -0.244  \\
    \textbf{KPNB1} & 0.216  &       &       &       &       &       &       & UBA52 & -0.181  \\
    HIST1H1D & -0.165  &       &       &       &       &       &       & PSMC1 & -0.149  \\
    CASP6 & -0.186  &       &       &       &       &       &       & VIM   & -0.178  \\
    \textbf{BCL2L11} & -0.238  &       &       &       &       &       &       & CFLAR & 0.576  \\
    AIM2  & 0.202  &       &       &       &       &       &       & NOD2  & -0.230  \\
    ZAP70 & -0.113  &       &       &       &       &       &       & CBL   & -0.220  \\
          &       &       &       &       &       &       &       & IL17A & -0.188  \\
          &       &       &       &       &       &       &       & \textbf{CRK } & 0.535  \\
          &       &       &       &       &       &       &       & CDC42 & -0.188  \\
          &       &       &       &       &       &       &       & CD247 & -0.239  \\
    \bottomrule
    \bottomrule
    \end{tabular}%
  \label{gene_table1}%
\end{table}%

Table \ref{gene_table1} reports the significant genes along with their de-biased estimators. For clarity, we bold the overlapping genes found in \cite{FLW17}. In the $\tau = 0.5 $ case, where the expectile regression is equivalent to the mean regression, there are totally seven genes significant at $\alpha = 5\% $.  Our approach identifies six additional genes compared with the sole gene CRK obtained using the Lasso estimator in \cite{FLW17}.   Furthermore,  the significant genes selected by our proposed de-biased method at different expectile levels include a subset of genes identified by the adaptively weighted lasso (R-Lasso) method in \cite{FYB14} and the regularized approximate quadratic estimator with an $L_1$ regularizer (RA-Lasso) in \cite{FLW17}.  Notably, at $\tau = 0.1$ and $\tau = 0.9$, the significant genes identified overlap with those selected at $\tau = 0.5$, aligning with the findings in \cite{ZCZ18}.  Moreover, both the significant genes and their corresponding de-biased estimators vary with changing expectile levels, indicating the presence of heteroscedasticity in this data.

In summary, examining the heteroscedastic pattern through expectile linear regression at various expectile levels, along with the identification of significant covariates by our proposed test, may provide new insights into depicting the relationship between the target gene 'TLR8' and other related genes. 

%




\section{Conclusion}\label{sec7}
In this article, we develop a bias correction procedure to conduct statistical inference for high-dimensional  sparse linear models under the expectile settings. Due to the second-order non-Lipschitz property of the expectile loss function,  our approach deviates from that of \cite{VBRD14}. An alternative proving strategy is developed to demonstrate that the estimation errors in the preliminary estimator have little impact on the estimation of the expectile-specified random weights, which is pivotal in revealing the asymptotic normality of the  de-biased estimator. A Wald-type test statistic is then established for multivariate testing.  Simulation results indicate that the test we proposed performs well in controlling the Type I error rate and demonstrates good local power under both the homoscedastic and heteroscedastic cases, even with heavy-tailed errors. Furthermore, compared with the method studied in \cite{ZCZ18}, The simulation demonstrates that our proposed test statistic possesses ability to detect heteroscedasticity based on different expectile levels. 

The empirical study on the FRED-MD dataset demonstrates that under the expectile settings, monetary supply may possess a certain predictive ability for stock prices, which is aligned with the conclusions in \cite{Taa15}.  Meanwhile, the empirical result on the selected gene data reveals a list of genes that exhibit significant explanatory ability for the 'TLR8' gene, which, along with the corresponding de-biased estimators vary across different expectile levels. Such findings not only highlight the presence of heteroscedasticity within the data, which has been discussed in \cite{ZCZ18}, but also provide a novel perspective on the relationships between the 'TLR8' gene and other genes which consists a subset of genes identified in \cite{FLW17}.

It is important to note that our results are applicable to a broader regularization framework, where the regularizers can be either convex or non-convex, provided they belong to the amenable category. While the use of non-convex regularizers introduces complexity to the optimization problem, their superior performance in variable selection may allow for more effective control over Type I errors and yields higher local power compared with convex regularizers. This assertion is further corroborated by our simulation results, particularly in scenarios with less sparse Hessian matrices. However, when addressing hypothesis testing in general sparse scenarios, even though the results using convex and non-convex regularizers are similar, it is crucial to note the considerably heavy  computational burden brought about by the use of non-convex regularizers.

Future research directions involve extending our theoretical results to facilitate simultaneous inference for high-dimensional components of a large parameter vector in sparse expectile linear models, in which the dimension of the parameter vector of interest is allowed to grow with the sample size.
Moreover, we acknowledge that our assumption regarding the existence of the 8th moment of the observations can be overly restrictive for certain datasets. Therefore, there is a need for the development of more robust expectile regression methods and Hessian matrix estimation methods to address datasets with less stringent moment assumptions.



\section*{Acknowledgments}

We thank the Editor, Associate Editor and referees. Zhang's research was supported by grants from the NSF of China (Grant Nos.U23A2064 and 12031005).  Zhao's research was supported by the MOE Project of Humanities and Social Sciences (No. 21YJCZH235), the Hangzhou Joint Fund of the Zhejiang Provincial Natural Science Foundation of China under Grant No. LHZY24A010002 and Scientific Research Foundation of Hangzhou City University (No.J-202315).

\section*{Appendix A: Proof of the main results}

\renewcommand{\thesubsection}{A.\arabic{subsection}}
\renewcommand{\theequation}{A.\arabic{equation}} 
\renewcommand{\thelemma}{A.\arabic{lemma}}
\setcounter{equation}{0}
\setcounter{lemma}{0}

We start with two auxiliary lemmas, where the proof can be found in \cite{LW15}.

\begin{lemma}
\label{lux_amen_convex}
  If $P_{\lambda}: \mathbb{R}^{p} \to \mathbb{R}$ is an amenable regularizer defined in Definition \ref{amenable-p}, then  the function $  \lambda \Vert \boldsymbol{\beta} \Vert_1 - P_{\lambda}(\boldsymbol{\beta}) $ with respect to $\boldsymbol{\beta}  \in \mathbb{R}^{p} $ is everywhere differentiable. Moreover, the function $\mu \Vert \boldsymbol{\beta}  \Vert_{2}^{2}/2 + P_{\lambda}(\boldsymbol{\beta}) -  \lambda \Vert \boldsymbol{\beta} \Vert_1 $ is convex and for any $\boldsymbol{\beta}  \in \mathbb{R}^{p} $, we have 
 $$ \lambda \Vert \boldsymbol{\beta}  \Vert_{1} \leq P_{\lambda}(\boldsymbol{\beta} ) +  \frac{\mu}{2}\Vert \boldsymbol{\beta}  \Vert_{2}^{2}. $$
\end{lemma}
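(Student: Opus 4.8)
The plan is to exploit the coordinate-wise separability of both $P_\lambda$ and the $\ell_1$ norm, so that every claim reduces to a statement about the scalar function $p_\lambda$ on $\mathbb{R}$. Writing $q_\lambda(t) = \lambda|t| - p_\lambda(t)$ and $g_\lambda(t) = \frac{\mu}{2}t^2 + p_\lambda(t) - \lambda|t|$, we have $\lambda\Vert\boldsymbol{\beta}\Vert_1 - P_\lambda(\boldsymbol{\beta}) = \sum_{j=1}^p q_\lambda(\beta_j)$ and $\frac{\mu}{2}\Vert\boldsymbol{\beta}\Vert_2^2 + P_\lambda(\boldsymbol{\beta}) - \lambda\Vert\boldsymbol{\beta}\Vert_1 = \sum_{j=1}^p g_\lambda(\beta_j)$. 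Since differentiability and convexity of a separable sum hold if and only if they hold for each summand, it suffices to analyze $q_\lambda$ and $g_\lambda$ on $\mathbb{R}$.

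First I would establish the differentiability of $q_\lambda$. Away from the origin this is immediate from condition (iv), since both $|t|$ and $p_\lambda$ are differentiable there. At $t=0$, note that $q_\lambda$ is even by condition (i) and $q_\lambda(0)=0$ by $p_\lambda(0)=0$, so it is enough to show the right derivative vanishes, after which symmetry forces the left derivative to vanish as well. Applying the mean value theorem to $p_\lambda$ together with $p_\lambda(0)=0$ gives $\lim_{t\to 0^+} p_\lambda(t)/t = \lim_{t\to 0^+} p'_\lambda(t) = \lambda$ by condition (v), so the right derivative of $q_\lambda$ at $0$ equals $\lambda - \lambda = 0$. This yields everywhere differentiability.

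Next, for the convexity of $g_\lambda$, I would argue via the monotonicity of its derivative. On $(0,\infty)$ we have $g'_\lambda(t) = \mu t + p'_\lambda(t) - \lambda$; since $\frac{\mu}{2}t^2 + p_\lambda(t)$ is convex by condition (vi), its derivative $\mu t + p'_\lambda(t)$ is non-decreasing there, and subtracting the constant $\lambda$ preserves this, so $g_\lambda$ is convex on $(0,\infty)$, and by evenness on $(-\infty,0)$ as well. An analogous computation shows $g_\lambda$ is differentiable at $0$ with $g'_\lambda(0)=0$ and $\lim_{t\to 0^+} g'_\lambda(t)=0$, so $g'_\lambda$ is continuous and non-negative on $(0,\infty)$; by oddness of $g'_\lambda$ it is non-positive on $(-\infty,0)$. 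Hence $g'_\lambda$ is non-decreasing across the origin too, and $g_\lambda$ is convex on all of $\mathbb{R}$. Summing over coordinates gives the convexity claim.

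Finally, the inequality follows for free: $g_\lambda$ is even, convex, and differentiable at $0$ with $g'_\lambda(0)=0$, so $0$ is a global minimizer and $g_\lambda(t) \geq g_\lambda(0)=0$ for all $t$; that is, $\lambda|t| \leq p_\lambda(t) + \frac{\mu}{2}t^2$, and summing over the $p$ coordinates yields $\lambda\Vert\boldsymbol{\beta}\Vert_1 \leq P_\lambda(\boldsymbol{\beta}) + \frac{\mu}{2}\Vert\boldsymbol{\beta}\Vert_2^2$. The one genuinely delicate point throughout is the behavior at the kink $t=0$: all three parts hinge on matching the one-sided derivatives there, which is exactly where condition (v), the normalization $\lim_{t\to0^+}p'_\lambda(t)=\lambda$, does the essential work. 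Everything else is routine scalar calculus combined with the separability reduction.
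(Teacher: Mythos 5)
Your proof is correct, but there is an important contextual point: the paper itself does not prove this lemma at all --- it is stated as an auxiliary result whose proof is explicitly deferred to \cite{LW15}, so there is no internal argument to compare against, and your write-up in effect supplies the self-contained proof that the paper omits. Your route (coordinate-wise reduction; mean value theorem plus condition (v) to kill the kink of $\lambda|t|-p_{\lambda}(t)$ at the origin; the monotone-derivative criterion for convexity of $g_{\lambda}(t)=\mu t^{2}/2+p_{\lambda}(t)-\lambda|t|$; global minimality at zero for the inequality) is essentially the standard argument for amenable penalties, and every step checks out: the right derivative computation, the transfer to the left derivative by evenness, the non-negativity of $g_{\lambda}'$ on $(0,\infty)$ from monotonicity plus the vanishing right limit, and the gluing of monotonicity of $g_{\lambda}'$ across the origin are all sound. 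Two small points are worth tightening. First, the mean value theorem on $[0,t]$ requires continuity of $p_{\lambda}$ at the origin, which is not among conditions (i)--(v); it does follow from condition (vi), since $p_{\lambda}(t)=\bigl(p_{\lambda}(t)+\mu t^{2}/2\bigr)-\mu t^{2}/2$ is the difference of a convex function on $\mathbb{R}$ (hence continuous) and a smooth function, but this deserves an explicit sentence, as (i)--(v) alone do not rule out a jump at zero. Second, the final inequality admits a shorter route that bypasses the convexity of $g_{\lambda}$ entirely: with $h(t)=p_{\lambda}(t)+\mu t^{2}/2$ convex, $h(0)=0$ and $h'(s)\to\lambda$ as $s\to 0^{+}$, the supporting-line inequality $h(t)\geq h(s)+h'(s)(t-s)$ for $t,s>0$ yields $h(t)\geq \lambda t$ upon letting $s\to 0^{+}$, and evenness extends this to all $t$; this is closer in spirit to the argument in \cite{LW15}. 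Neither point affects the validity of what you wrote.
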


\begin{lemma}
\label{lux_amen_L1}
  Suppose that the regularizer $P_{\lambda} $ satisfies the conditions (i)-(vi) in  Definition \ref{amenable-p}.  If $\boldsymbol{\beta}^* \in \mathbb{R}^p $ is $k-$sparse, then for any $\boldsymbol{\beta} \in \mathbb{R}^p $ such that $\xi P_{\lambda}(\boldsymbol{\beta}^*) - P_{\lambda}(\boldsymbol{\beta})>0 $   and $\xi \geq 1$, we have
  $$
  \xi P_{\lambda}(\boldsymbol{\beta}^*) - P_{\lambda}(\boldsymbol{\beta}) \leq \xi \Vert \Delta\boldsymbol{\beta}_{\cal B} \Vert_1 - \Vert \Delta\boldsymbol{\beta}_{{\cal B}^C}\Vert_1,
  $$
  where $\Delta\boldsymbol{\beta} :=\boldsymbol{\beta}^* - \boldsymbol{\beta} $ and ${\cal B}$ is the index set of the k largest elements of $\boldsymbol{\beta}$ in magnitude. 
\end{lemma}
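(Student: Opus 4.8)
The plan is to handle the two assertions separately: first the deterministic pointwise bound, which follows from a direct case analysis of the weight function, and then the stochastic average bound, which is obtained by summing the pointwise inequality and combining Proposition~\ref{prop1}, Assumption~\ref{assump1}, and a decoupling argument.

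For the pointwise bound I would first note that the squared weight $w^2_{\boldsymbol{\beta},i} = |\tau - \mathbb{I}(y_i - \boldsymbol{X}_i^\top\boldsymbol{\beta} < 0)|$ takes only the two values $\tau$ and $1-\tau$, determined by the sign of the residual. Writing $\delta_i := \boldsymbol{X}_i^\top(\hat{\boldsymbol{\beta}} - \boldsymbol{\beta}^*)$ so that $y_i - \boldsymbol{X}_i^\top\boldsymbol{\beta}^* = \epsilon_i$ and $y_i - \boldsymbol{X}_i^\top\hat{\boldsymbol{\beta}} = \epsilon_i - \delta_i$, the two weights coincide unless the indicators $\mathbb{I}(\epsilon_i < 0)$ and $\mathbb{I}(\epsilon_i < \delta_i)$ disagree, in which case their difference is exactly $|2\tau - 1| \leq 1$. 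A short case analysis on the sign of $\delta_i$ shows the indicators disagree only when $\epsilon_i$ lies between $0$ and $\delta_i$, that is, only on the event $\{|\epsilon_i| \leq |\delta_i|\}$; hence $|w^2_{\boldsymbol{\beta}^*,i} - w^2_{\hat{\boldsymbol{\beta}},i}| \leq \mathbb{I}(|\epsilon_i| \leq |\delta_i|)$, which is the first claim.

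For the average bound I would sum the pointwise inequality to get $\frac{1}{n}\sum_i |w^2_{\hat{\boldsymbol{\beta}},i} - w^2_{\boldsymbol{\beta}^*,i}| \leq \frac{1}{n}\sum_i \mathbb{I}(|\epsilon_i| \leq |\delta_i|)$. Applying H\"older's inequality together with $|\delta_i| \leq \Vert\mathbf{X}\Vert_\infty \Vert\hat{\boldsymbol{\beta}} - \boldsymbol{\beta}^*\Vert_1$, the bound $\Vert\mathbf{X}\Vert_\infty = O_p(K)$ from Assumption~\ref{assump1}(ii), and the $\ell_1$ rate $\Vert\hat{\boldsymbol{\beta}} - \boldsymbol{\beta}^*\Vert_1 = O_p(s\lambda)$ from Proposition~\ref{prop1}, I would deduce $\max_i |\delta_i| = O_p(Ks\lambda)$. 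The essential step is then to replace the random threshold $|\delta_i|$ by a deterministic one so that i.i.d.\ arguments become available.

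The main obstacle is precisely this dependence: since $\hat{\boldsymbol{\beta}}$ is a function of the whole sample, the thresholds $|\delta_i|$ are not independent of the $\epsilon_i$, so one cannot take expectations of $\mathbb{I}(|\epsilon_i| \leq |\delta_i|)$ term by term. To decouple, I would fix a constant $C$ and work on the high-probability event $\{\max_i |\delta_i| \leq M_n\}$ with the \emph{deterministic} sequence $M_n := C K s\lambda$; on this event monotonicity of the indicator gives $\frac{1}{n}\sum_i \mathbb{I}(|\epsilon_i| \leq |\delta_i|) \leq \frac{1}{n}\sum_i \mathbb{I}(|\epsilon_i| \leq M_n)$. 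The right-hand side is now an average of i.i.d.\ Bernoulli variables with mean $\Pr(|\epsilon_i| \leq M_n) \leq 2 M_n \sup_x f_\epsilon(x) = O(Ks\lambda)$, where the bounded-density condition in Assumption~\ref{assump1}(i) is exactly what controls this probability. A Markov-inequality argument then yields $\frac{1}{n}\sum_i \mathbb{I}(|\epsilon_i| \leq M_n) = O_p(Ks\lambda)$; and since off the good event each summand is bounded by $|2\tau-1| \leq 1$ so the average is deterministically at most $1$, the rate $O_p(Ks\lambda)$ holds unconditionally, completing the proof.
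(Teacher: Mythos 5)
Your proposal does not prove the statement at hand: it proves a different lemma from the paper. Lemma \ref{lux_amen_L1} is a purely deterministic, structural inequality about amenable penalties --- for a $k$-sparse $\boldsymbol{\beta}^*$ and any $\boldsymbol{\beta}$ with $\xi P_{\lambda}(\boldsymbol{\beta}^*) - P_{\lambda}(\boldsymbol{\beta})>0$, one must show $\xi P_{\lambda}(\boldsymbol{\beta}^*) - P_{\lambda}(\boldsymbol{\beta}) \leq \xi \Vert \Delta\boldsymbol{\beta}_{\cal B} \Vert_1 - \Vert \Delta\boldsymbol{\beta}_{{\cal B}^C}\Vert_1$. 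There is no randomness anywhere in it: no design matrix $\mathbf{X}$, no errors $\epsilon_i$, no estimator $\hat{\boldsymbol{\beta}}$, no rates. A proof has to proceed from conditions (i)--(vi) of Definition \ref{amenable-p} alone, exploiting separability of $P_\lambda$, subadditivity of $p_\lambda$ (a consequence of conditions (i)--(iii)), the bound $p_\lambda(t)\le \lambda|t|$ (from (iii) and (v)), and the weak-convexity lower bound of the kind recorded in Lemma \ref{lux_amen_convex}, splitting coordinates over ${\cal B}$ and ${\cal B}^C$. The paper itself does not reprove this; it is one of the two auxiliary lemmas imported from \cite{LW15}, and it is the ingredient invoked (with $\xi=3$) in the cone-type step \eqref{eq.A.10.12} of the proof of Proposition \ref{prop1}.

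What you have written is instead a proof of Lemma \ref{lem1}, the bounds for the expectile weighting factor: the pointwise bound $|w^2_{\boldsymbol{\beta}^*,i} - w^2_{\hat{\boldsymbol{\beta}},i}| \leq \mathbb{I}(|\epsilon_i| \leq |\boldsymbol{X}_i^\top (\hat{\boldsymbol{\beta}} - \boldsymbol{\beta}^*)|)$ and the average rate $O_p(Ks\lambda)$. As an argument for \emph{that} lemma, your sketch is essentially sound and close to the paper's own: the paper decouples via Markov's inequality, bounding the tail probability by $\Pr(\Vert\hat{\boldsymbol{\beta}}-\boldsymbol{\beta}^*\Vert_1\Vert\boldsymbol{X}_i\Vert_\infty>\delta_2)+\Pr(|\epsilon_i|\le\delta_2)$ with the deterministic threshold $\delta_2 = c_{10}Ks\lambda$ and the bounded density of $\epsilon$, which is the same mechanism as your good-event argument. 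But as a proof of Lemma \ref{lux_amen_L1} it is vacuous --- none of the penalty-function properties enter, and the stated conclusion is never derived --- so the attempt must be restarted from the definition of amenability (or by reproducing the argument of Lemma 5 in \cite{LW15}).
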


\subsection{Proof of Proposition \ref{prop1}}
\begin{lemma}\label{lemmaA.2}
Under Assumption \ref{assump1}, we have 
\begin{equation}\label{eqAA.2}
\Vert \hat{\boldsymbol{\Sigma}} - \boldsymbol{\Sigma}\Vert_{\infty}={ O_p(  \sqrt{\ln p/n}).} 
\end{equation}
\end{lemma}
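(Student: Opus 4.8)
The plan is to prove the entrywise concentration $\max_{1\le j,k\le p}|\hat\Sigma_{jk}-\Sigma_{jk}|=O_p(\sqrt{\ln p/n})$, where $\hat\Sigma_{jk}=n^{-1}\sum_{i=1}^n x_{ij}x_{ik}$ and $\Sigma_{jk}={\sf E}[x_{ij}x_{ik}]$. The difficulty is that Assumption \ref{assump1}(ii) only grants a bounded fourth moment for each $x_{ij}$, so the products $x_{ij}x_{ik}$ need not be sub-exponential and a direct Hoeffding or sub-Gaussian union bound over the $p^2$ pairs is unavailable. I would instead use a truncation argument combined with Bernstein's inequality, exploiting the envelope $\Vert\mathbf{X}\Vert_\infty=O_p(K)$ and the tail bound $\Pr(|x_{ij}|>c_4K)=o((\ln p)/n)$ from Assumption \ref{assump1}(ii).

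Concretely, fix $\varepsilon>0$ and, using $\Vert\mathbf{X}\Vert_\infty=O_p(K)$, choose a constant $M\ge c_4$ so that the event $A=\{\Vert\mathbf{X}\Vert_\infty\le MK\}$ has probability at least $1-\varepsilon$. Set $\bar x_{ij}=x_{ij}\mathbb{I}(|x_{ij}|\le MK)$ and $\bar\Sigma_{jk}=n^{-1}\sum_{i=1}^n\bar x_{ij}\bar x_{ik}$, and decompose
\begin{equation*}
\Vert\hat\Sigma-\Sigma\Vert_\infty\le\Vert\hat\Sigma-\bar\Sigma\Vert_\infty+\Vert\bar\Sigma-{\sf E}\bar\Sigma\Vert_\infty+\Vert{\sf E}\bar\Sigma-\Sigma\Vert_\infty.
\end{equation*}
On $A$ every entry satisfies $x_{ij}=\bar x_{ij}$, so the first term vanishes with probability at least $1-\varepsilon$ and is therefore negligible. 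For the third (truncation-bias) term I would write ${\sf E}[\bar x_{ij}\bar x_{ik}]-{\sf E}[x_{ij}x_{ik}]$ as an expectation restricted to $\{|x_{ij}|>MK\}\cup\{|x_{ik}|>MK\}$ and bound it by Cauchy--Schwarz, namely by $({\sf E}[x_{ij}^2x_{ik}^2])^{1/2}(\Pr(|x_{ij}|>MK)+\Pr(|x_{ik}|>MK))^{1/2}$. The fourth-moment bound gives ${\sf E}[x_{ij}^2x_{ik}^2]\le({\sf E}[x_{ij}^4])^{1/2}({\sf E}[x_{ik}^4])^{1/2}\le c_3$, and the tail bound (recalling $M\ge c_4$) makes each probability $o((\ln p)/n)$, so this term is $o(\sqrt{\ln p/n})$ uniformly in $(j,k)$.

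The main work, and the principal obstacle, is the middle term. Here each summand $\bar x_{ij}\bar x_{ik}-{\sf E}[\bar x_{ij}\bar x_{ik}]$ is bounded in absolute value by $2M^2K^2$ and has variance at most ${\sf E}[x_{ij}^2x_{ik}^2]\le c_3$, so by independence Bernstein's inequality yields, for $t=C\sqrt{\ln p/n}$,
\begin{equation*}
\Pr\left(\left|\bar\Sigma_{jk}-{\sf E}\bar\Sigma_{jk}\right|>t\right)\le 2\exp\left(-\frac{nt^2/2}{c_3+M^2K^2t/3}\right).
\end{equation*}
This is exactly where Assumption \ref{assump1}(iii) enters: since $M^2K^2t=M^2C\,K^2\sqrt{\ln p/n}\to0$, the linear term in the denominator is asymptotically dominated by the variance $c_3$, so the exponent behaves like $-C^2\ln p/(2c_3)(1+o(1))$ and the probability is $2p^{-C^2/(2c_3)+o(1)}$; that is, the effective rate is the sub-Gaussian one. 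A union bound over the $p^2$ pairs gives total probability at most $2p^{2-C^2/(2c_3)+o(1)}$, which tends to $0$ once $C$ is taken large enough (e.g. $C^2>6c_3$). Hence $\Vert\bar\Sigma-{\sf E}\bar\Sigma\Vert_\infty=O_p(\sqrt{\ln p/n})$, and combining the three terms (with constants depending on the fixed $\varepsilon$) delivers \eqref{eqAA.2}. The delicate point throughout is verifying that Assumption \ref{assump1}(iii) forces the Bernstein exponent into its Gaussian regime despite the heavy tails; once this is secured the union bound and the bias estimate are routine.
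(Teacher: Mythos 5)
Your proposal is correct and follows essentially the same route as the paper's proof: truncation at level $\asymp K^2$ for the products $x_{ij}x_{ik}$, Bernstein's inequality with a union bound over the $p^2$ entries, a Cauchy--Schwarz bound on the truncation bias using the tail probability $o((\ln p)/n)$, and Assumption \ref{assump1}(iii) to keep the Bernstein exponent in its Gaussian regime. The only (immaterial) difference is that you truncate the coordinates $x_{ij}$ at level $MK$ while the paper truncates the products $\sigma_{i,jk}=x_{ij}x_{ik}$ at level $CK^2$, and you organize the error as three terms instead of the paper's two.
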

\begin{proof}
Define $\sigma_{i,jk}=x_{ij}x_{ik}$, $\bar\sigma_{i,jk}=\sigma_{i,jk}I\{|\sigma_{i,jk}|\leq CK^2\}$ and $\tilde\sigma_{i,jk}=\sigma_{i,jk}I\{|\sigma_{i,jk}|>CK^2\}$.
 Then by Bernstein inequality and union bound inequality, we have 
 $$
\Pr\left(  \max_{1\leq j,k\leq p}\left|\frac{1}{n} \sum_{i}^n \left(\bar\sigma_{i,jk} - {\sf E} [\bar\sigma_{i,jk}]\right)  \right| \geq t \right) \leq p^2\exp\left[  -\frac{(nt)^2/2}{\kappa^2 + CK^2nt/3}  \right],$$
where $\kappa^2 = \sum^n_{i = 1} {\sf Var}[\bar\sigma_{i,jk}] = O(n)$.
And this further implies that 
\begin{equation}\label{eqAA.3}
\max_{1\leq j,k\leq p}\left|\frac{1}{n} \sum_{i}^n \left(\bar\sigma_{i,jk} - {\sf E} [\bar\sigma_{i,jk}]\right)  \right| = O_p(  \sqrt{\ln p/n} ) +  O_p(K^{2}  {\ln p/n} ).
\end{equation}
On the other hand,
\begin{equation*}
 |{\sf E} 
[\tilde\sigma_{i,jk}]|
\leq ({\sf E} [\sigma_{i,jk}^2] \Pr \{|\sigma_{i,jk}|>CK^2\})^{1/2}=o(\sqrt{\ln p/n}),   
\end{equation*}
then we have 
 \begin{eqnarray}
&&\Pr\left(  \max_{1\leq j,k\leq p}\left|\frac{1}{n} \sum_{i}^n \left(\tilde\sigma_{i,jk} - {\sf E} [\tilde\sigma_{i,jk}]\right)  \right| > C \max \{ \sqrt{\ln p/n}, K^2\ln p/n\}  \right) \nonumber\\
&\leq&\Pr\left(  \max_{1\leq j,k\leq p}\left|\frac{1}{n} \sum_{i}^n \tilde\sigma_{i,jk}  \right| > \frac{C}{2} \max \{ \sqrt{\ln p/n}, K^2\ln p/n\} \right) \nonumber\\
&{\leq }&\Pr\left(  \max_{1\leq j,k\leq p} \max_{1\leq i\leq n}\left|\sigma_{i,jk}  \right| > C K^2 \right)=o(1), \label{eqAA.4}
 \end{eqnarray}
where the last step follows from the fact that
$$\Pr\left(  \max_{1\leq j,k\leq p} \max_{1\leq i\leq n}\left|\sigma_{i,jk}  \right| > C K^2 \right) = \Pr\left(  \max_{1\leq j\leq p} \max_{1\leq i\leq n}\left|x_{i,j}  \right| > \sqrt{C} K \right)$$
and the assumption $\Vert \mathbf{X}\Vert_{\infty} = O_{p}(K)$.

Combing \eqref{eqAA.3}, \eqref{eqAA.4}, and the assumption $ K^2 \sqrt{\ln p/n} = o(1)$, we prove \eqref{eqAA.2}.
\end{proof}

\medskip
\begin{lemma}\label{lemmaA.3}
Under Assumptions \ref{assump1}, we have 
\begin{equation}
\Vert\mathbf{X}^\top\mathbf{W}_{\boldsymbol{\beta} ^*}^2\boldsymbol{\epsilon}/n\Vert_{\infty}={  O_p(\sqrt{\ln p/n})}.
\end{equation}
\end{lemma}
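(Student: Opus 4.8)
The plan is to follow the truncation-plus-Bernstein strategy of Lemma \ref{lemmaA.2}, adapted to the weighted-residual setting. Write the $j$-th coordinate of the target vector as $\frac{1}{n}\sum_{i=1}^n u_{i,j}$ with $u_{i,j} := x_{ij}\, w^2_{\boldsymbol{\beta}^*,i}\,\epsilon_i$, so that $\Vert \mathbf{X}^\top\mathbf{W}_{\boldsymbol{\beta}^*}^2\boldsymbol{\epsilon}/n\Vert_\infty = \max_{1\le j\le p}\big|\frac{1}{n}\sum_i u_{i,j}\big|$. The first observation is that the $u_{i,j}$ are mean-zero: since $w^2_{\boldsymbol{\beta}^*,i}=|\tau-\mathbb{I}(\epsilon_i<0)|$ depends only on $\epsilon_i$, the identification condition ${\sf E}[w^2_{\boldsymbol{\beta}^*,i}\epsilon_i\mid \boldsymbol{X}_i]=0$ from Assumption \ref{assump1} gives ${\sf E}[u_{i,j}]={\sf E}\big[x_{ij}\,{\sf E}(w^2_{\boldsymbol{\beta}^*,i}\epsilon_i\mid\boldsymbol{X}_i)\big]=0$. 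The second observation is that $w^2_{\boldsymbol{\beta}^*,i}\in\{\tau,1-\tau\}$ is bounded by $1$, so $|u_{i,j}|\le|x_{ij}|\,|\epsilon_i|$; this is precisely what reduces the problem to a scalar analogue of Lemma \ref{lemmaA.2}.

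Next, I would truncate at level $CK^2$: set $\bar u_{i,j}=u_{i,j}\mathbb{I}\{|u_{i,j}|\le CK^2\}$ and $\tilde u_{i,j}=u_{i,j}\mathbb{I}\{|u_{i,j}|>CK^2\}$, and decompose $\frac{1}{n}\sum_i u_{i,j}=\frac{1}{n}\sum_i(\bar u_{i,j}-{\sf E}[\bar u_{i,j}])+\frac{1}{n}\sum_i(\tilde u_{i,j}-{\sf E}[\tilde u_{i,j}])$, which is legitimate because ${\sf E}[u_{i,j}]=0$. For the truncated part I would apply Bernstein's inequality together with a union bound over the $p$ coordinates (note that there are only $p$ terms here, in contrast to the $p^2$ terms of Lemma \ref{lemmaA.2}), yielding
\[
\Pr\Big(\max_{1\le j\le p}\Big|\tfrac1n\sum_i(\bar u_{i,j}-{\sf E}\bar u_{i,j})\Big|\ge t\Big)\le p\exp\Big[-\frac{(nt)^2/2}{\kappa^2+CK^2nt/3}\Big].
\]
The crucial input is the uniform variance bound $\kappa^2=\sum_i{\sf Var}[\bar u_{i,j}]\le\sum_i{\sf E}[x_{ij}^2\epsilon_i^2]=O(n)$, which follows from Cauchy--Schwarz and the fourth-moment conditions ${\sf E}[x_{ij}^4]<c_3$, ${\sf E}[\epsilon_i^4]<c_1$. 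Choosing $t\asymp\sqrt{\ln p/n}$ then gives a contribution of order $O_p(\sqrt{\ln p/n})+O_p(K^2\ln p/n)$, and the second term is absorbed into the first because $K^2\sqrt{\ln p/n}=o(1)$ by Assumption \ref{assump1}(iii).

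Finally, for the tail part I would argue exactly as in Lemma \ref{lemmaA.2}: the deterministic piece obeys $|{\sf E}[\tilde u_{i,j}]|\le\big({\sf E}[u_{i,j}^2]\,\Pr\{|u_{i,j}|>CK^2\}\big)^{1/2}=o(\sqrt{\ln p/n})$, using $\{|u_{i,j}|>CK^2\}\subseteq\{|x_{ij}|>\sqrt{C}K\}\cup\{|\epsilon_i|>\sqrt{C}K\}$ together with the tail assumptions $\Pr\{|x_{ij}|>c_4K\}=o(\ln p/n)$ and $\Pr\{|\epsilon_i|>c_2K\}=o(\ln p/n)$ (taking $C\ge c_2^2\vee c_4^2$); while the random piece vanishes on the event $\{\max_{j}\max_i|u_{i,j}|\le CK^2\}$, whose complement has probability $o(1)$ by the same tail bounds together with $\Vert\mathbf{X}\Vert_\infty=O_p(K)$ and $\max_i|\epsilon_i|=O_p(K)$. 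Combining the two parts yields the claim. I do not anticipate a genuine obstacle, since the template of Lemma \ref{lemmaA.2} transfers directly; the only points requiring care are establishing the mean-zero property through the expectile identification condition and obtaining the variance bound via Cauchy--Schwarz rather than a direct moment bound, since $u_{i,j}$ is a product of the dependent factors $x_{ij}$ and $\epsilon_i$.
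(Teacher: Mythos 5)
Your proposal is correct and takes essentially the same approach as the paper: the paper's own proof of this lemma is literally the single line ``following a similar argument as in the proof of Lemma \ref{lemmaA.2}, we can prove the result,'' and your write-up is precisely that adaptation, carried out with the right ingredients (mean-zero via the identification condition ${\sf E}[w^2_{\boldsymbol{\beta}^*,i}\epsilon_i\mid\boldsymbol{X}_i]=0$, the bound $|u_{i,j}|\le|x_{ij}||\epsilon_i|$, truncation at $CK^2$ with Bernstein and a union bound over $p$ coordinates, variance control by Cauchy--Schwarz and the fourth-moment conditions, and the tail part handled exactly as in Lemma \ref{lemmaA.2}). In fact your version supplies details the paper leaves implicit, so no gap remains.
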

\begin{proof}
Following a similar argument as in the proof of Lemma \ref{lemmaA.2}, we can prove the result. 
\end{proof}



\begin{proof}[\textbf{\upshape Proof of Proposition \ref{prop1}}]
Recall that $\nabla L_n(\boldsymbol{\beta} ^*)=- \mathbf{X}^\top\mathbf{W}_{\boldsymbol{\beta} ^*}^2\boldsymbol{\epsilon}/n $ and let $z^*_{\infty} = \Vert \nabla L_n(\boldsymbol{\beta} ^*)\Vert_{\infty} $.
By Lemma 4 of \cite{GZ16} and letting $\Delta\boldsymbol{\beta}  = \hat{\boldsymbol{\beta}}   - \boldsymbol{\beta} ^* $, we can prove that 
\begin{equation}\label{eqA.1}
\min\{\tau,1-\tau\}\Vert \mathbf{X} \Delta\boldsymbol{\beta}  \Vert^2_2 /n \leq  \langle \nabla L_n(\hat{\boldsymbol{\beta}}  )  - \nabla L_n(\boldsymbol{\beta} ^*),  \Delta\boldsymbol{\beta}  \rangle.
\end{equation}
On the left hand side, we have
\begin{equation}
\Vert \mathbf{X} \Delta\boldsymbol{\beta}  \Vert^2_2 /n = \Delta\boldsymbol{\beta}^\top \Sigma\Delta\boldsymbol{\beta} +  \Delta\boldsymbol{\beta}^\top (\hat{\Sigma} - \Sigma)\Delta\boldsymbol{\beta}\geq \lambda_{\min}(\boldsymbol{\Sigma})\Vert \Delta \boldsymbol{\beta} \Vert^2_2  - \Vert \boldsymbol{\Sigma}  -   \hat{\boldsymbol{\Sigma}} \Vert_{\infty} \Vert \Delta \boldsymbol{\beta} \Vert^2_1 ,\label{A1.01}
\end{equation}
and on the right hand side, we have
\begin{equation}\label{A3.1}
\langle \nabla L_n(\hat{\boldsymbol{\beta}}  )  - \nabla L_n(\boldsymbol{\beta} ^*),  \Delta\boldsymbol{\beta}  \rangle = \langle -\nabla P_{\lambda}(\hat{\boldsymbol{\beta}}  )  - \nabla L_n(\boldsymbol{\beta} ^*),  \Delta\boldsymbol{\beta}  \rangle \leq  z^*_{\infty}\Vert \Delta \boldsymbol{\beta} \Vert_1 + P_{\lambda}(\boldsymbol{\beta}^*) - P_{\lambda}(\hat{\boldsymbol{\beta}}) + \frac{\mu}{2}\Vert \Delta\boldsymbol{\beta} \Vert^2_2, 
\end{equation}
where the last inequality follows from the property of $\mu-$amenable regularizer, i.e., $$\langle \nabla P_{\lambda}(\hat{\boldsymbol{\beta}}) , -\Delta\boldsymbol{\beta} \rangle \leq P_{\lambda}(\boldsymbol{\beta}^*) - P_{\lambda}(\hat{\boldsymbol{\beta}}) + \frac{\mu}{2}\Vert \Delta\boldsymbol{\beta} \Vert^2_2.$$
Combining \eqref{eqA.1}--\eqref{A3.1}, we obtain 
\begin{equation*}\label{A.51}
\begin{aligned}
\min\{\tau,1-\tau\}\lambda_{\min}(\boldsymbol{\Sigma})\Vert \Delta \boldsymbol{\beta} \Vert^2_2  & \leq (z^*_{\infty} + \min\{\tau,1-\tau\}\Vert \boldsymbol{\Sigma}  -   \hat{\boldsymbol{\Sigma}} \Vert_{\infty} \Vert \Delta \boldsymbol{\beta} \Vert_1 )\Vert \Delta \boldsymbol{\beta} \Vert_1 + P_{\lambda}(\boldsymbol{\beta}^*) - P_{\lambda}(\hat{\boldsymbol{\beta}}) + \frac{\mu}{2}\Vert \Delta\boldsymbol{\beta} \Vert^2_2 \\
& \leq (z^*_{\infty} + 2\Vert \boldsymbol{\Sigma}  -   \hat{\boldsymbol{\Sigma}} \Vert_{\infty} R )\cdot\frac{1}{\lambda} (P_{\lambda}(\Delta \boldsymbol{\beta}) + \frac{\mu}{2}\Vert \Delta \boldsymbol{\beta} \Vert^2_2  )  + P_{\lambda}(\boldsymbol{\beta}^*) - P_{\lambda}(\hat{\boldsymbol{\beta}}) + \frac{\mu}{2}\Vert \Delta\boldsymbol{\beta} \Vert^2_2,
\end{aligned}
\end{equation*}
where the last step follows from $\Vert \Delta\boldsymbol{\beta} \Vert_1 \leq \Vert \hat{\boldsymbol{\beta}} \Vert_1 +\Vert {\boldsymbol{\beta}^*} \Vert_1 \leq  2R$ and Lemma \ref{lux_amen_convex}. 
Note further that the regularizer is sub-addictivie, i.e., 
$
P_{\lambda}(\Delta\boldsymbol{\beta}) \leq P_{\lambda}(\boldsymbol{\beta}^*) + P_{\lambda}(\hat{\boldsymbol{\beta}})
$
and 
$\min\{\tau,1-\tau\}\lambda_{\min}(\boldsymbol{\Sigma}) > 3\mu/4 $. 
Under the event   
$$ {\cal E}=
 \{\lambda\geq 4z^*_{\infty}\} \cap\{ \lambda\geq 8\Vert \boldsymbol{\Sigma}  -   \hat{\boldsymbol{\Sigma}} \Vert_{\infty} R\},
$$
 we deduce that
\begin{equation}\label{eq.A.10.1}
0 \leq  (\min\{\tau,1-\tau\}\lambda_{\min}(\boldsymbol{\Sigma}) - \frac{3}{4}\mu) \Vert \Delta \boldsymbol{\beta} \Vert^2_2  \leq \frac{3}{2}P_{\lambda}(\boldsymbol{\beta}^*) - \frac{1}{2}P_{\lambda}(\hat{\boldsymbol{\beta}}). 
\end{equation}
Denote ${\cal A}_1$ the index set of the $s$ largest components of $\Delta\boldsymbol{\beta}$ in magnitude, which may be slightly different from the active set ${\cal A}$ in most cases.  By the property of the $\mu-$amenable regularizer
(see Lemma 5 in \cite{LW15}), we have
\begin{equation}\label{eq.A.10.12}
0< \frac{3}{2}P_{\lambda}(\boldsymbol{\beta}^*) - \frac{1}{2}P_{\lambda}(\hat{\boldsymbol{\beta}}) \leq \frac{3}{2} \lambda \Vert \Delta\boldsymbol{\beta}_{{\cal A}_1} \Vert_1 - \frac{1}{2} \lambda \Vert \Delta\boldsymbol{\beta}_{{\cal A}^C_1} \Vert_1 \leq \frac{3}{2}\sqrt{s} \lambda \Vert \Delta\boldsymbol{\beta}_{{\cal A}_1} \Vert_2 \leq  \frac{3}{2} \sqrt{s} \lambda  \Vert \Delta\boldsymbol{\beta}  \Vert_2.
\end{equation}
Then by combining \eqref{eq.A.10.1} and \eqref{eq.A.10.12},  it yields the $l_2-$bound 
$$
\Vert \Delta\boldsymbol{\beta} \Vert_2 \leq \frac{6\sqrt{s}\lambda }{4\min\{\tau,1-\tau\}\lambda_{\min}(\boldsymbol{\Sigma}) - 3\mu },
$$
and the $l_1-$bound can be derived using \eqref{eq.A.10.12}, i.e.,
$$
\Vert \Delta\boldsymbol{\beta} \Vert_1 = \Vert \Delta\boldsymbol{\beta}_{{\cal A}_1} \Vert_1 + \Vert \Delta\boldsymbol{\beta}_{{\cal A}^C_1} \Vert_1 \leq 4\Vert \Delta\boldsymbol{\beta}_{{\cal A}_1} \Vert_1 \leq 4\sqrt{s}\Vert \Delta\boldsymbol{\beta} \Vert_2. $$
Next, from \eqref{eqA.1} and \eqref{A3.1} we can obtain the following result for the prediction error bound,
$$
\Vert \mathbf{X} \Delta\boldsymbol{\beta}  \Vert^2_2 /n \leq \frac{s\lambda^2}{\min\{\tau,1-\tau\}} \left ( \frac{12}{4\min\{\tau,1-\tau\}\lambda_{\min}(\boldsymbol{\Sigma}) - 3\mu} + \frac{36\mu}{(4\min\{\tau,1-\tau\}\lambda_{\min}(\boldsymbol{\Sigma}) - 3\mu)^2}\right).
$$

Finally, by Lemmas \ref{lemmaA.2} and \ref{lemmaA.3}
and  choosing $\lambda$  such that $\lambda\geq c_7R\sqrt{\ln p/n}$ for some large positive constant $c_7$, we can prove that $\Pr ({\cal E})\to 1$, and  we  complete the proof of Proposition \ref{prop1}. 

\end{proof}


\bigskip
\begin{proof}[\textbf{\upshape Proof of Lemma \ref{lem1}.}]
Since 
\begin{eqnarray}
|w^2_{\boldsymbol{\beta} ^*,i} - w^2_{\hat{\boldsymbol{\beta}}  ,i}| &=& |1 - 2\tau|\cdot \mathbb{I}( \epsilon_i \hat{\epsilon}_i <0)\nonumber\\
&\leq& \mathbb{I}(|\epsilon_i| \leq |\hat{\epsilon}_i - \epsilon_i|)
= \mathbb{I}(|\epsilon_i| \leq |\boldsymbol{X}_i^\top (\hat{\boldsymbol{\beta}}   - \boldsymbol{\beta} ^*) |),\nonumber
\end{eqnarray}
holds for any $i\in\{1,\ldots, n\} $. Then by the Markov's inequality,
we have  
\begin{equation}\label{eqA.10new}
\Pr  \left( \frac{1}{n}\sum_{i=1}^n |w^2_{\hat{\boldsymbol{\beta}}  ,i} - w^2_{\boldsymbol{\beta} ^*,i}|  \geq \delta_1 \right)
\leq\frac{1}{n\delta_1}\sum_{i=1}^n\Pr \left(\left|\boldsymbol{X}_{i}^\top  (\hat{\boldsymbol{\beta}} -\boldsymbol{\beta} ^*)\right|>\delta_2 \right) 
+\frac{1}{n\delta_1}\sum_{i=1}^n\Pr \left(|\epsilon_i|\leq\delta_2 \right),
\end{equation}
for any  $\delta_1, \delta_2>0 $. 
Note that by the Holder's inequality, 
\begin{equation}\label{eqA.11new}
\frac{1}{n}\sum_{i=1}^n\Pr \left(\left|\boldsymbol{X}_{i}^\top  (\hat{\boldsymbol{\beta}} -\boldsymbol{\beta} ^*)\right|>\delta_2 \right) 
\leq\frac{1}{n}\sum_{i=1}^n\Pr \left(\Vert \boldsymbol{\hat{\beta}}   - \boldsymbol{\beta} ^* \Vert_{1}\Vert \boldsymbol{X}_i \Vert_{\infty}>\delta_2 \right)=\Pr \left(\Vert \hat{\boldsymbol{\beta}}   - \boldsymbol{\beta} ^* \Vert_{1}\Vert \boldsymbol{X}_i \Vert_{\infty}>\delta_2 \right)
\end{equation} 
and
\begin{equation}\label{eqA.12new}
\frac{1}{n}\sum_{i=1}^n\Pr \left(|\epsilon_i|\leq\delta_2 \right) \leq  2\delta_2 \cdot \sup_{x \in(-\delta_2,+\delta_2)}f_\epsilon(x) \leq 2\delta_2 \cdot \sup_{x \in(-\infty,+\infty)}f_\epsilon(x).
\end{equation}
Thus, by combining \eqref{eqA.10new}--\eqref{eqA.12new}, and taking $\delta_1=c_9\delta_2 $ and $\delta_2=c_{10}Ks\lambda$, where $c_9$ and $c_{10}$ are large enough positive constants, 
we can prove \eqref{eqT42.4}.
\end{proof}


\bigskip

\subsection{Proof of Theorem \ref{thm1}}


\begin{proof}[\textbf{\upshape Proof of Theorem \ref{thm1}.}]
(i) We start with the $\mathbf{W}_{\boldsymbol{\beta} ^*}$-weighted node-wise regression, that is  
\begin{equation}\label{lemma4.1_01}
  \boldsymbol{X}_{\boldsymbol{\beta} ^*,(j)} = \mathbf{X}_{\boldsymbol{\beta} ^*,(-j)} \boldsymbol{\varphi}_{\boldsymbol{\beta} ^*,j} +\boldsymbol{\varrho}_{\boldsymbol{\beta} ^*,j} ,
\end{equation}
where $\boldsymbol{X}_{\boldsymbol{\beta} ^*,(j)} $ is the $j-$th column of $\mathbf{X}_{\boldsymbol{\beta} ^*} = \mathbf{W}_{\boldsymbol{\beta} ^*}\mathbf{X} $,  and $\boldsymbol{\varphi}_{\boldsymbol{\beta} ^*,j} $ is defined by
\begin{equation*}\label{eq.A11}
  \boldsymbol{\varphi}_{\boldsymbol{\beta} ^*,j} = \argmin_{\boldsymbol{\varphi}} {\sf E}\Vert \boldsymbol{X}_{\boldsymbol{\beta} ^*,(j)} -\mathbf{X}_{\boldsymbol{\beta} ^*,(-j)} \boldsymbol{\varphi} \Vert^2_{2} = \argmin_{\boldsymbol{\varphi}} {\sf E}[( x_{\boldsymbol{\beta} ^*,(j),1} -\mathbf{X}_{\boldsymbol{\beta} ^*,(-j),1}^\top \boldsymbol{\varphi} )^2],
\end{equation*}
where ${x}_{\boldsymbol{\beta} ^*,(j),1} $ is the first element of $\boldsymbol{X}_{\boldsymbol{\beta} ^*,(j)} $ and $\mathbf{X}_{\boldsymbol{\beta} ^*,(-j),1}^\top$ is the first row of $\mathbf{X}_{\boldsymbol{\beta} ^*,(-j)}$. 
Furthermore, we denote the variance of the residual by 
\begin{equation*}
  \phi^2_{\boldsymbol{\beta} ^*,j} = {\sf E} \Vert \boldsymbol{\varrho}_{\boldsymbol{\beta} ^*,j} \Vert^2_{2}/n ={\sf E}[\boldsymbol{\varrho}^2_{\boldsymbol{\beta} ^*,j,1}], \quad j\in\{1,\ldots, p\}.
\end{equation*}
Thus we have $  \phi^2_{\boldsymbol{\beta} ^*,j} =1/{\Theta}_{\boldsymbol{\beta} ^*,jj}$.
 Recall that $\boldsymbol{\Theta}_{\boldsymbol{\beta} ^*,j} = (-{\varphi}_{\boldsymbol{\beta} ^*,j,1}/\phi^2_{\boldsymbol{\beta} ^*,j},...,  -{\varphi}_{\boldsymbol{\beta} ^*,j,p}/\phi^2_{\boldsymbol{\beta} ^*,j}   )^\top $. Using the eigenvalue condition, $$ 1/\lambda_{\max}( {\boldsymbol{\Sigma}}_{\boldsymbol{\beta} ^*}) \leq {\Theta}_{\boldsymbol{\beta} ^*,jj} \leq 1/\lambda_{\min}( {{\boldsymbol{\Sigma}}_{\boldsymbol{\beta} ^*}}), $$ and  
 $$ \lambda_{\min}({\boldsymbol{\Sigma}}_{\boldsymbol{\beta} ^*})\Vert \boldsymbol{\Theta}_{\boldsymbol{\beta} ^*,j} \Vert^2_2 \leq   \boldsymbol{\Theta}_{\boldsymbol{\beta} ^*,j}^\top {\boldsymbol{\Sigma}}_{\boldsymbol{\beta} ^*} \boldsymbol{\Theta}_{\boldsymbol{\beta} ^*,j} = {\Theta}_{\boldsymbol{\beta} ^*,jj}  = 1/\phi^2_{\boldsymbol{\beta} ^*,j}, $$ then we obtain
\begin{equation}\label{eq.A12}
\lambda_{\min}({\boldsymbol{\Sigma}}_{\boldsymbol{\beta} ^*}) \leq  \phi^2_{\boldsymbol{\beta} ^*,j} \leq \lambda_{\max}({\boldsymbol{\Sigma}}_{\boldsymbol{\beta} ^*}),
 \end{equation}
and
 \begin{equation*} \label{eq.A13}
{\Vert\boldsymbol{\varphi}_{\boldsymbol{\beta} ^*,j} \Vert^2_2 = \Vert \phi^2_{\boldsymbol{\beta} ^*,j} \boldsymbol{\Theta}_{\boldsymbol{\beta} ^*,j}  \Vert^2_2  \leq {\phi^2_{\boldsymbol{\beta} ^*,j}}/{ \lambda_{\min}({\boldsymbol{\Sigma}}_{\boldsymbol{\beta} ^*})}}.
 \end{equation*}
This further implies  
 \begin{equation}\label{eq.A14}
 \max_{1\leq j \leq p}\Vert\boldsymbol{\varphi}_{\boldsymbol{\beta} ^*,j} \Vert_1 
 \leq \max_{1\leq j \leq p}\sqrt{ s_j}\Vert\boldsymbol{\varphi}_{\boldsymbol{\beta} ^*,j} \Vert_2 
 \leq \max_{1\leq j \leq p}\sqrt{s_j {{\Sigma}}_{\boldsymbol{\beta} ^*,jj}/\lambda_{\min}({\boldsymbol{\Sigma}}_{\boldsymbol{\beta} ^*})} = O(\sqrt{s^{**}}),
 \end{equation}
 and
 $$
 \max_{1\leq j \leq p}\Vert \boldsymbol{\varrho}_{\boldsymbol{\beta} ^*,j} \Vert_{\infty} = \max_{1\leq j \leq p} \Vert \mathbf{X}_{\boldsymbol{\beta} ^*}{\boldsymbol{\Theta}}_{{\boldsymbol{\beta} ^*},j}\Vert_{\infty}\cdot  \max_{1\leq j \leq p}\phi^2_{\beta^*,j}  = O_p(\sqrt{s^{**}} K). 
$$

Then by left-producting the diagonal matrix $\mathbf{W}_{\hat{\boldsymbol{\beta}}  }\mathbf{W}^{-1}_{\boldsymbol{\beta} ^*} $ on both sides of \eqref{lemma4.1_01}, we have the $\mathbf{W}_{\hat{\boldsymbol{\beta}}} $-weighted node-wise regression,
\begin{equation*}\label{lemma4.1_02}
  \boldsymbol{X}_{\hat{\boldsymbol{\beta}}  ,(j)} = \mathbf{X}_{\hat{\boldsymbol{\beta}}  ,(-j)} \boldsymbol{\varphi}_{\boldsymbol{\beta} ^*,j} +\mathbf{W}_{\hat{\boldsymbol{\beta}}  }\mathbf{W}^{-1}_{\boldsymbol{\beta} ^*}\boldsymbol{\varrho}_{\boldsymbol{\beta} ^*,j}.
\end{equation*}
By the definition of $\hat{\boldsymbol{\varphi}}_{\hat{\boldsymbol{\beta}  },j}$ in (\ref{op-de}), we have  
\begin{equation*}
\Vert \boldsymbol{X}_{\hat{\boldsymbol{\beta}}  ,(j)} - \mathbf{X}_{\hat{\boldsymbol{\beta}}  ,(-j)} \hat{\boldsymbol{\varphi}}_{\hat{\boldsymbol{\beta}}  ,j} \Vert^2_{2}/n + Q_{\lambda_{j}}(\hat{\boldsymbol{\varphi}}_{\hat{\boldsymbol{\beta}}  ,j}) 
\leq \Vert \boldsymbol{X}_{\hat{\boldsymbol{\beta}}  ,(j)} - \mathbf{X}_{\hat{\boldsymbol{\beta}}  ,(-j)} \boldsymbol{\varphi}_{\boldsymbol{\beta} ^*,j} \Vert^2_{2}/n + Q_{\lambda_{j}}(\boldsymbol{\varphi}_{\boldsymbol{\beta} ^*,j}), 
\end{equation*}
which further indicates that 
\begin{equation}\label{eq.A13old}
\Vert \mathbf{X}_{\hat{\boldsymbol{\beta}}  ,(-j)} (\hat{\boldsymbol{\varphi}}_{\hat{\boldsymbol{\beta}}  ,j} -\boldsymbol{\varphi}_{\boldsymbol{\beta} ^*,j})  \Vert^2_{2}/n + Q_{\lambda_{j}}( \hat{\boldsymbol{\varphi}}_{\hat{\boldsymbol{\beta}}  ,j}) 
\leq 2\boldsymbol{\varrho}_{\boldsymbol{\beta} ^*,j}^\top\mathbf{W}_{\hat{\boldsymbol{\beta}}  }^2\mathbf{W}^{-2}_{\boldsymbol{\beta} ^*}\mathbf{X}_{\boldsymbol{\beta} ^*,(-j)} (\hat{\boldsymbol{\varphi}}_{\hat{\boldsymbol{\beta}}  ,j} -\boldsymbol{\varphi}_{\boldsymbol{\beta} ^*,j} )/n + Q_{\lambda_{j}}(  \boldsymbol{\varphi}_{\boldsymbol{\beta} ^*,j} ). 
\end{equation}
Then by the Holder's inequality, we have
\begin{equation}\label{eq.A17}
\boldsymbol{\varrho}_{\boldsymbol{\beta} ^*,j}^\top\mathbf{W}_{\hat{\boldsymbol{\beta}}  }^2\mathbf{W}^{-2}_{\boldsymbol{\beta} ^*}\mathbf{X}_{\boldsymbol{\beta} ^*,(-j)} (\hat{\boldsymbol{\varphi}}_{\hat{\boldsymbol{\beta}}  ,j} -\boldsymbol{\varphi}_{\boldsymbol{\beta} ^*,j} )/n 
 \leq (\iota_{1j}  + \iota_{2j}) \Vert \hat{\boldsymbol{\varphi}}_{\hat{\boldsymbol{\beta}}  ,j} -\boldsymbol{\varphi}_{\boldsymbol{\beta} ^*,j} \Vert_1,
\end{equation}
where $\iota_{1j}=\Vert \boldsymbol{\varrho}_{\boldsymbol{\beta} ^*,j}^\top \mathbf{X}_{{\boldsymbol{\beta} ^*},(-j)} /n\Vert_{\infty} $ and $\iota_{2j}=\Vert \boldsymbol{\varrho}_{\boldsymbol{\beta} ^*,j}^\top(\mathbf{W}^2_{\hat{\boldsymbol{\beta}}  }\mathbf{W}^{-2}_{\boldsymbol{\beta} ^*}-\mathbf{I}) \mathbf{X}_{{\boldsymbol{\beta} ^*},(-j)} /n\Vert_{\infty}$.
For $\iota_{1j}$ in \eqref{eq.A17},  by Lemma \ref{lemmaA.2}, \eqref{eq.A12}, and \eqref{eq.A14}, we obtain 
\begin{eqnarray}
 \max_{1\leq j \leq p}\iota_{1j}=\max_{1\leq j \leq p}\Vert \boldsymbol{\varrho}_{\boldsymbol{\beta} ^*,j}^\top \mathbf{X}_{{\boldsymbol{\beta} ^*},(-j)} /n\Vert_{\infty}
 &=&\max_{1\leq j \leq p}\Vert \phi^2_{\boldsymbol{\beta} ^*,j}\boldsymbol{\Theta}^\top_{\boldsymbol{\beta} ^*,j} (\mathbf{X}^\top_{{\boldsymbol{\beta} ^*}}\mathbf{X}_{{\boldsymbol{\beta} ^*},(-j)}/n-\boldsymbol{\Sigma}_{\boldsymbol{\beta} ^*,-j}) \Vert_{\infty}\nonumber\\
 &=& O_p(\sqrt{s^{**}\ln p/n}), \label{eq.A18}
 \end{eqnarray}
 where $\boldsymbol{\Sigma}_{\boldsymbol{\beta} ^*,-j}={\sf}[\mathbf{X}^\top_{{\boldsymbol{\beta} ^*}}\mathbf{X}_{{\boldsymbol{\beta} ^*},(-j)}/n]$.
For $\iota_{2j}$ in \eqref{eq.A17},
by the Holder's inequality we have 
\begin{eqnarray}
\max_{1\leq j \leq p}\iota_{2j}
&=&\max_{1\leq j \leq p}\Vert \boldsymbol{\varrho}_{\boldsymbol{\beta} ^*,j}^\top(\mathbf{W}^2_{\hat{\boldsymbol{\beta}}  }\mathbf{W}^{-2}_{\boldsymbol{\beta} ^*}-\mathbf{I}) \mathbf{X}_{{\boldsymbol{\beta} ^*},(-j)} /n\Vert_{\infty}\\
&=&\max_{1\leq j \leq p}\max_{1\leq q\leq p, q\neq j} \frac{1}{n}\left|\sum_{i=1}^n\frac{w^2_{\hat{\boldsymbol{\beta}} ,i}-w^2_{\boldsymbol{\beta} ^*,i}}{w^2_{\boldsymbol{\beta} ^*,i}}{\varrho}_{\boldsymbol{\beta} ^*,ji}x_{\boldsymbol{\beta} ^*,iq}\right|\nonumber\\
&\leq& \max_{1\leq j \leq p}\max_{1\leq q\leq p, q\neq j}\max_{1\leq i\leq n}|{\varrho}_{\boldsymbol{\beta} ^*,ji}x_{\boldsymbol{\beta} ^*,il} |\cdot \frac{1}{n}\sum_{i=1}^n{|w^2_{\hat{\boldsymbol{\beta}} ,i}-w^2_{\boldsymbol{\beta} ^*,i}|},\nonumber\\
&=&O_p(\sqrt{s^{**}}K^2 \cdot sK\lambda ) = O_p( s\sqrt{s^{**}} K^{3} \lambda), \label{eq.A19}
\end{eqnarray}
where the last step follows from \eqref{eq.A14} the  results in Proposition \ref{prop1} and Lemma \ref{lem1}. 




Denote by the population covariance matrix ${\boldsymbol\Sigma}_{-j,-j} = {\sf E} [ \mathbf{X}^\top_{(-j)} \mathbf{X}_{(-j)}/n ]$ and the sample covariance $\hat{\boldsymbol\Sigma}_{-j,-j} = \mathbf{X}^\top_{(-j)} \mathbf{X}_{(-j)}/n$, respectively. Note further that 
$$\lambda_{\min}(\boldsymbol{\Sigma}) \leq \lambda_{\min}({\boldsymbol\Sigma}_{-j,-j}) \quad \text{and} \quad \Vert {\boldsymbol\Sigma}_{-j,-j} -   \hat{\boldsymbol\Sigma}_{-j,-j}\Vert_{\infty} \leq \Vert \boldsymbol{\Sigma}  -   \hat{\boldsymbol{\Sigma}} \Vert_{\infty},$$
then we have 
\begin{equation}\label{eq.A20}
\begin{aligned}
\Vert \mathbf{X}_{\hat{\boldsymbol{\beta}}  ,(-j)} (\hat{\boldsymbol{\varphi}}_{\hat{\boldsymbol{\beta}}  ,j} -\boldsymbol{\varphi}_{\boldsymbol{\beta} ^*,j})  \Vert^2_{2}/n  
& \geq  \min\{\tau,1-\tau\} \cdot \Vert \mathbf{X}_{(-j)} (\hat{\boldsymbol{\varphi}}_{\hat{\boldsymbol{\beta}}  ,j} -\boldsymbol{\varphi}_{\boldsymbol{\beta} ^*,j})  \Vert^2_{2}/n \\
& \geq \min\{\tau,1-\tau\} \left(\lambda_{\min}({\boldsymbol\Sigma}_{-j,-j}) \Vert \hat{\boldsymbol{\varphi}}_{\hat{\boldsymbol{\beta}}  ,j} -\boldsymbol{\varphi}_{\boldsymbol{\beta} ^*,j}  \Vert^2_2 - \Vert {\boldsymbol\Sigma}_{-j,-j} -   \hat{\boldsymbol\Sigma}_{-j,-j}\Vert_{\infty}\Vert \hat{\boldsymbol{\varphi}}_{\hat{\boldsymbol{\beta}}  ,j} -\boldsymbol{\varphi}_{\boldsymbol{\beta} ^*,j}  \Vert^2_1 \right) \\
& \geq  \min\{\tau,1-\tau\} \left(\lambda_{\min}(\boldsymbol{\Sigma}) \Vert \hat{\boldsymbol{\varphi}}_{\hat{\boldsymbol{\beta}}  ,j} -\boldsymbol{\varphi}_{\boldsymbol{\beta} ^*,j}  \Vert^2_2 - \Vert \boldsymbol{\Sigma}  -   \hat{\boldsymbol{\Sigma}} \Vert_{\infty}\Vert \hat{\boldsymbol{\varphi}}_{\hat{\boldsymbol{\beta}}  ,j} -\boldsymbol{\varphi}_{\boldsymbol{\beta} ^*,j}  \Vert^2_1 \right).
\end{aligned}
\end{equation}
Since
$ \Vert \boldsymbol{\varphi}_{\boldsymbol{\beta} ^*,j} \Vert_1 \leq R_j$, we have 
$\Vert \hat{\boldsymbol{\varphi}}_{\hat{\boldsymbol{\beta}}  ,j} -\boldsymbol{\varphi}_{\boldsymbol{\beta} ^*,j}  \Vert_1 \leq \Vert \hat{\boldsymbol{\varphi}}_{\hat{\boldsymbol{\beta}}  ,j}  \Vert_1  +\Vert \boldsymbol{\varphi}_{\boldsymbol{\beta} ^*,j}  \Vert_1 \leq 2 R_j$.
 Combining  \eqref{eq.A13old} with \eqref{eq.A17} and \eqref{eq.A20}, we can derive that
\begin{equation*}
\begin{aligned}
0 &\leq \min\{\tau,1-\tau\}  \lambda_{\min}(\boldsymbol{\Sigma}) \Vert \hat{\boldsymbol{\varphi}}_{\hat{\boldsymbol{\beta}}  ,j} -\boldsymbol{\varphi}_{\boldsymbol{\beta} ^*,j}  \Vert^2_2 \\
& \leq  \left\{ \iota_{1j} +\iota_{2j} + 2\min\{\tau,1-\tau\} R_j \Vert \boldsymbol{\Sigma}  -   \hat{\boldsymbol{\Sigma}} \Vert_{\infty}   \right\} \Vert \hat{\boldsymbol{\varphi}}_{\hat{\boldsymbol{\beta}}  ,j} -\boldsymbol{\varphi}_{\boldsymbol{\beta} ^*,j} \Vert_1  + Q_{\lambda_{j}}(  \boldsymbol{\varphi}_{\boldsymbol{\beta} ^*,j} ) - Q_{\lambda_{j}}( \hat{\boldsymbol{\varphi}}_{\hat{\boldsymbol{\beta}}  ,j}),
\end{aligned}
\end{equation*}
where $\iota_{1j}$ and $\iota_{2j}$ are defined in \eqref{eq.A17}.
Under the event   
$$ {\cal E}_j=
 \{\lambda_j\geq 4(\iota_{1j}+\iota_{2j}) \} \cap\{ \lambda_j\geq 8 \min\{\tau,1-\tau\} \Vert \boldsymbol{\Sigma}  -   \hat{\boldsymbol{\Sigma}} \Vert_{\infty} R_j\},
$$
by invoking Lemma \ref{lux_amen_convex} and  applying the sub-addictive property of the amenable regularizer in Proposition \ref{prop1}, it yields that
$$
\min\{\tau,1-\tau\}  \lambda_{\min}(\boldsymbol{\Sigma}) \Vert \hat{\boldsymbol{\varphi}}_{\hat{\boldsymbol{\beta}}  ,j} -\boldsymbol{\varphi}_{\boldsymbol{\beta} ^*,j}  \Vert^2_2 \leq \frac{3}{2}Q_{\lambda_{j}}(  \boldsymbol{\varphi}_{\boldsymbol{\beta} ^*,j} ) - \frac{1}{2}Q_{\lambda_{j}}( \hat{\boldsymbol{\varphi}}_{\hat{\boldsymbol{\beta}}  ,j}) + \frac{\mu}{2}\Vert \hat{\boldsymbol{\varphi}}_{\hat{\boldsymbol{\beta}}  ,j} -\boldsymbol{\varphi}_{\boldsymbol{\beta} ^*,j} \Vert^2_2.
$$
Since 
$$\min\{\tau,1-\tau\}  \lambda_{\min}(\boldsymbol{\Sigma}) \geq 3\mu/4 > \mu/2, $$
then 
$$
0 \leq \left(\min\{\tau,1-\tau\}  \lambda_{\min}(\boldsymbol{\Sigma}) - \mu/2 \right)\Vert \hat{\boldsymbol{\varphi}}_{\hat{\boldsymbol{\beta}}  ,j} -\boldsymbol{\varphi}_{\boldsymbol{\beta} ^*,j} \Vert^2_2 \leq \frac{3}{2}Q_{\lambda_{j}}(  \boldsymbol{\varphi}_{\boldsymbol{\beta} ^*,j} ) - \frac{1}{2}Q_{\lambda_{j}}( \hat{\boldsymbol{\varphi}}_{\hat{\boldsymbol{\beta}}  ,j}).
$$
Letting $s_j=\Vert \boldsymbol{\varphi}_{\boldsymbol{\beta} ^*,j} \Vert_0$,
applying the conclusion in Lemma \ref{lux_amen_L1}, and following the similar argument in the proof of Proposition \ref{prop1}, we obtain the $l_2$ bound
\begin{equation}
\label{eq.A21}
\Vert \hat{\boldsymbol{\varphi}}_{\hat{\boldsymbol{\beta}}  ,j} -\boldsymbol{\varphi}_{\boldsymbol{\beta} ^*,j} \Vert_2 \leq \frac{6\sqrt{s_j}\lambda_j }{4\min\{\tau,1-\tau\}\lambda_{\min}(\boldsymbol{\Sigma}) - \mu },
\end{equation}
also the $l_1$ bound 
\begin{equation}\label{eq.A22}
\Vert \hat{\boldsymbol{\varphi}}_{\hat{\boldsymbol{\beta}}  ,j} -\boldsymbol{\varphi}_{\boldsymbol{\beta} ^*,j} \Vert_1 \leq \frac{24 {s_j}\lambda_j }{4\min\{\tau,1-\tau\}\lambda_{\min}(\boldsymbol{\Sigma}) - \mu },
\end{equation}
and finally the prediction error bound
\begin{equation}\label{eq.A23}
\Vert \mathbf{X}_{(-j)} (\hat{\boldsymbol{\varphi}}_{\hat{\boldsymbol{\beta}}  ,j} -\boldsymbol{\varphi}_{\boldsymbol{\beta} ^*,j})  \Vert^2_2 /n \leq \frac{s_j\lambda_j^2}{\min\{\tau,1-\tau\}} \left ( \frac{12}{4\min\{\tau,1-\tau\}\lambda_{\min}(\boldsymbol{\Sigma}) - \mu} + \frac{36\mu}{(4\min\{\tau,1-\tau\}\lambda_{\min}(\boldsymbol{\Sigma}) - \mu)^2}\right).
\end{equation}

Lastly, we can prove 
$\Pr (\bigcap_{j=1}^p {\cal E}_j)\to 1$ as $n\to\infty$ by \eqref{eq.A18} and \eqref{eq.A19} and  choosing $\lambda_j$ such that  $\lambda_j\geq c_8(\max_jR_j\sqrt{\ln p/n})\vee(s\sqrt{s^{**}}K^3 \lambda))$ for $j\in\{1,\ldots, p\}$. 
By replacing $s_j$ and $\lambda_j$ in \eqref{eq.A21}--\eqref{eq.A23} with $s^{**}$ and $\lambda^{**}$, respectively, we completes the proof of (i).

\medskip

(ii) Recall that 
 $$
  \hat{\phi}^2_{\hat{\boldsymbol{\beta}}  , j}  = \boldsymbol{X}_{\hat{\boldsymbol{\beta}}  ,(j)}^\top( \boldsymbol{X}_{\hat{\boldsymbol{\beta}}  ,(j)} - \mathbf{X}_{\hat{\boldsymbol{\beta}}  ,(-j)} \hat{\boldsymbol{\varphi}}_{\hat{\boldsymbol{\beta}}  ,j} )/n.
 $$
Since 
 $$
\boldsymbol{X}_{\hat{\boldsymbol{\beta}},(j)} = \mathbf{W}_{\hat{\boldsymbol{\beta}}  } \mathbf{W}^{-1}_{\boldsymbol{\beta} ^*} \boldsymbol{X}_{{\boldsymbol{\beta} ^*},(j)} = \mathbf{W}_{\hat{\boldsymbol{\beta}}  } \mathbf{W}^{-1}_{\boldsymbol{\beta} ^*} (\mathbf{X}_{\boldsymbol{\beta} ^*,(-j)} \boldsymbol{\varphi}_{\boldsymbol{\beta} ^*,j} +\boldsymbol{\varrho}_{\boldsymbol{\beta} ^*,j}),
 $$
and 
 $$
( \boldsymbol{X}_{\hat{\boldsymbol{\beta}}  ,(j)} - \mathbf{X}_{\hat{\boldsymbol{\beta}}  ,(-j)} \hat{\boldsymbol{\varphi}}_{\hat{\boldsymbol{\beta}}  , j} ) = \mathbf{W}_{\hat{\boldsymbol{\beta}}  } \mathbf{W}^{-1}_{\boldsymbol{\beta} ^*} ( \boldsymbol{X}_{\boldsymbol{\beta} ^*,(j)} - \mathbf{X}_{\boldsymbol{\beta} ^*,(-j)}  \hat{\boldsymbol{\varphi}}_{\hat{\boldsymbol{\beta}}  , j}    )
=  \mathbf{W}_{\hat{\boldsymbol{\beta}}  } \mathbf{W}^{-1}_{\boldsymbol{\beta} ^*} ( \boldsymbol{\varrho}_{\boldsymbol{\beta} ^*,j} + \mathbf{X}_{\boldsymbol{\beta} ^*,(-j)}  (  \boldsymbol{\varphi}_{\boldsymbol{\beta} ^*,j} -    \hat{\boldsymbol{\varphi}}_{\hat{\boldsymbol{\beta}}  ,j} )   ),
 $$
then we can write
\begin{eqnarray*}
\hat{\phi}^2_{\hat{\boldsymbol{\beta}}  , j} - \phi^2_{\boldsymbol{\beta} ^*,j} 
&=& \left[  \boldsymbol{X}^\top_{{\boldsymbol{\beta} ^*},(j)} ( \boldsymbol{X}_{\boldsymbol{\beta} ^*,(j)} - \mathbf{X}_{\boldsymbol{\beta} ^*,(-j)}  \hat{\boldsymbol{\varphi}}_{\hat{\boldsymbol{\beta}}  , j} )/n -  \phi^2_{\boldsymbol{\beta} ^*,j}  \right] \\
&&+ \left[  \boldsymbol{X}^\top_{{\boldsymbol{\beta} ^*},(j)}( \mathbf{W}^2_{\hat{\boldsymbol{\beta}}  } \mathbf{W}^{-2}_{\boldsymbol{\beta} ^*}  -\mathbf{I}   ) ( \boldsymbol{X}_{\boldsymbol{\beta} ^*,(j)} - \mathbf{X}_{\boldsymbol{\beta} ^*,(-j)}  \hat{\boldsymbol{\varphi}}_{\hat{\boldsymbol{\beta}}  , j} )/n   \right] \\
&=:& \iota_{3j}+\iota_{4j}.
\end{eqnarray*}

For $\iota_{3j}$, we have
\begin{eqnarray}
&&\left|\boldsymbol{X}^\top_{{\boldsymbol{\beta} ^*},(j)} ( \boldsymbol{X}_{\boldsymbol{\beta} ^*,(j)} - \mathbf{X}_{\boldsymbol{\beta} ^*,(-j)}  \hat{\boldsymbol{\varphi}}_{\hat{\boldsymbol{\beta}}  , j} )/n -  \phi^2_{\boldsymbol{\beta} ^*,j}  \right| \\
&\leq& \left|  \boldsymbol{\varrho}^\top_{\boldsymbol{\beta} ^*,j} \boldsymbol{\varrho}_{\boldsymbol{\beta} ^*,j}/n - \phi^2_{\boldsymbol{\beta} ^*,j}  \right| + \left|  \boldsymbol{\varrho}^\top_{\boldsymbol{\beta} ^*,j} \mathbf{X}_{\boldsymbol{\beta} ^*,(-j)}  (  \boldsymbol{\varphi}_{\boldsymbol{\beta} ^*,j} -    \hat{\boldsymbol{\varphi}}_{\hat{\boldsymbol{\beta}}  ,j} ) /n  \right| \nonumber \\ 
&& \quad + \left|  \boldsymbol{\varrho}^\top_{\boldsymbol{\beta} ^*,j} \mathbf{X}_{\boldsymbol{\beta} ^*,(-j)}   \boldsymbol{\varphi}_{\boldsymbol{\beta} ^*,j} /n  \right| + \left| \boldsymbol{\varphi}^\top_{\boldsymbol{\beta} ^*,j} \mathbf{X}^\top_{\boldsymbol{\beta} ^*,(-j)} \mathbf{X}_{\boldsymbol{\beta} ^*,(-j)}  (  \boldsymbol{\varphi}_{\boldsymbol{\beta} ^*,j} -    \hat{\boldsymbol{\varphi}}_{\hat{\boldsymbol{\beta}}  ,j} ) /n  \right| \nonumber\\ 
&=:& \iota_{3j,1}+\iota_{3j,2}+\iota_{3j,3}+\iota_{3j,4}.\nonumber
\end{eqnarray}
By \eqref{eq.A14}, we can prove 
\begin{equation}\label{eq.A24}
\max_{1\leq j \leq p}\iota_{3j,1} =\max_{1\leq j \leq p}\left| \phi^4_{\boldsymbol{\beta} ^*,j}\boldsymbol{\Theta}_{\boldsymbol{\beta} ^*,j}^\top (\mathbf{X}_{{\boldsymbol{\beta} ^*}}^\top\mathbf{X}_{{\boldsymbol{\beta} ^*}}/n-\boldsymbol{\Sigma}_{\boldsymbol{\beta} ^*}) \boldsymbol{\Theta}_{\boldsymbol{\beta} ^*,j} \right|
= O_p(s^{**}\sqrt{\ln p / n }). 
\end{equation}

Besides, by Holder's inequality and \eqref{eq.A18}, we have 
 \begin{eqnarray}
\max_{1\leq j \leq p}\iota_{3j,2} &\leq& \max_{1\leq j\leq p}\Vert \boldsymbol{\varrho}^\top_{\boldsymbol{\beta} ^*,j} \mathbf{X}_{\boldsymbol{\beta} ^*,(-j)} /n \Vert_{\infty}
\cdot\max_{1\leq j \leq p} \Vert  \boldsymbol{\varphi}_{\boldsymbol{\beta} ^*,j} -    \hat{\boldsymbol{\varphi}}_{\hat{\boldsymbol{\beta}}  ,j} \Vert_1 \nonumber\\
&=& {  O_p( \sqrt{s^{**}\ln p/n}) O_p(s^{**} \lambda^{**}) = O_p((s^{**})^{3/2}\lambda^{**}\sqrt{\ln p/n})},
 \end{eqnarray}
and
 \begin{equation}
\max_{1\leq j \leq p}\iota_{3j,3} 
\leq \max_{1\leq j \leq p} { \Vert \boldsymbol{\varrho}^\top_{\boldsymbol{\beta} ^*,j} \mathbf{X}_{\boldsymbol{\beta} ^*,(-j)} /n\Vert_{\infty}}
 \cdot \max_{1\leq j \leq p}\Vert\boldsymbol{\varphi}_{\boldsymbol{\beta} ^*,j} \Vert_1  =  { O_p(s^{**}\sqrt{\ln p/n}).}
 \end{equation}
Note further that
\begin{eqnarray*}
&&\Vert  \mathbf{X}_{\boldsymbol{\beta} ^*,(-j)} \boldsymbol{\varphi}_{\boldsymbol{\beta} ^*,j}\Vert^2_2/n\\
&\leq& \left|  \Vert \mathbf{X}_{\boldsymbol{\beta} ^*,(-j)} \boldsymbol{\varphi}_{\boldsymbol{\beta} ^*,j} \Vert^2_2/n - \boldsymbol{\varphi}^\top_{\boldsymbol{\beta} ^*,j} {\sf E}\left[ \mathbf{X}^\top_{\boldsymbol{\beta} ^*,(-j)}  \mathbf{X}_{\boldsymbol{\beta} ^*,(-j)}/n\right] \boldsymbol{\varphi}_{\boldsymbol{\beta} ^*,j}  \right| + \boldsymbol{\varphi}^\top_{\boldsymbol{\beta} ^*,j} {\sf E}\left[ \mathbf{X}^\top_{\boldsymbol{\beta} ^*,(-j)}  \mathbf{X}_{\boldsymbol{\beta} ^*,(-j)}/n\right] \boldsymbol{\varphi}_{\boldsymbol{\beta} ^*,j} \\
& \leq & \lVert \mathbf{X}^\top_{\boldsymbol{\beta} ^*,(-j)}  \mathbf{X}_{\boldsymbol{\beta} ^*,(-j)}/n -{\sf E}\left[ \mathbf{X}^\top_{\boldsymbol{\beta} ^*,(-j)}  \mathbf{X}_{\boldsymbol{\beta} ^*,(-j)}/n\right]  \rVert_{\infty} \Vert\boldsymbol{\varphi}_{\boldsymbol{\beta} ^*,j} \Vert^2_1 + \lambda_{\max}(\Sigma_{\boldsymbol{\beta}^*})\Vert\boldsymbol{\varphi}_{\boldsymbol{\beta} ^*,j} \Vert^2_2 \\ 
& \leq &  O_{p}(\sqrt{\ln p/n})O(s_{j}) + O(1) = O(1),
\end{eqnarray*}
and by the Cauchy–Schwarz inequality,
\begin{eqnarray}
\max_{1\leq j \leq p}\iota_{3j,4} 
&\leq&\max_{1\leq j \leq p} \Vert  \mathbf{X}_{\boldsymbol{\beta} ^*,(-j)} \boldsymbol{\varphi}_{\boldsymbol{\beta} ^*,j}/\sqrt{n}\Vert_2 
\cdot \max_{1\leq j \leq p}\Vert \mathbf{X}_{\boldsymbol{\beta} ^*,(-j)}  (  \boldsymbol{\varphi}_{\boldsymbol{\beta} ^*,j} -    \hat{\boldsymbol{\varphi}}_{\hat{\boldsymbol{\beta}}  ,j} )/\sqrt{n}\Vert_2 \nonumber \\
&\leq&\max_{1\leq j \leq p} \Vert  \mathbf{X}_{\boldsymbol{\beta} ^*,(-j)} \boldsymbol{\varphi}_{\boldsymbol{\beta} ^*,j}/\sqrt{n}\Vert_2 
\cdot  \sqrt{\max\{ \tau,1-\tau\}} \max_{1\leq j \leq p}\Vert \mathbf{X}_{(-j)} (\hat{\boldsymbol{\varphi}}_{\hat{\boldsymbol{\beta}}  ,j} -\boldsymbol{\varphi}_{\boldsymbol{\beta} ^*,j})/\sqrt{n}  \Vert_2  \nonumber\\
&=& O_p(\sqrt{s^{**}}\lambda^{**}). \label{eq.A27}
\end{eqnarray}
Thus, combining \eqref{eq.A24}--\eqref{eq.A27} we prove that
 $$
\max_{1\leq j \leq p}|\iota_{3j}| = O_p(s^{**}\sqrt{\ln p / n }) + O_p((s^{**})^{3/2}\lambda^{**}\sqrt{\ln p/n}) + {{  O_p(s^{**}\sqrt{\ln p/n})} + O_p(\sqrt{s^{**}}\lambda^{**})}  = O_p(\sqrt{s^{**}}\lambda^{**}). 
 $$

Now we turn to $\iota_{4j}$. Similar to  $\iota_{2j}$, we have   
\begin{equation*}
\begin{aligned}
|\iota_{4j}| & \leq \left|  \boldsymbol{X}^\top_{{\boldsymbol{\beta} ^*},(j)}( \mathbf{W}^2_{\hat{\boldsymbol{\beta}}  } \mathbf{W}^{-2}_{\boldsymbol{\beta} ^*}  -\mathbf{I}   ) ( \boldsymbol{X}_{\boldsymbol{\beta} ^*,(j)} - \mathbf{X}_{\boldsymbol{\beta} ^*,(-j)}  {\boldsymbol{\varphi}}_{ {\boldsymbol{\beta}^*}  , j} )/n   \right| + \left|  \boldsymbol{X}^\top_{{\boldsymbol{\beta} ^*},(j)}( \mathbf{W}^2_{\hat{\boldsymbol{\beta}}  } \mathbf{W}^{-2}_{\boldsymbol{\beta} ^*}  -\mathbf{I}   )  \mathbf{X}_{\boldsymbol{\beta} ^*,(-j)}  ( \hat{\boldsymbol{\varphi}}_{ \hat{\boldsymbol{\beta}}  , j} -
 {\boldsymbol{\varphi}}_{ {\boldsymbol{\beta}^*}  , j} )/n   \right| \\
 & \leq \max_{1\leq q\leq p, q\neq j} \frac{1}{n}\left|\sum_{i=1}^n\frac{w^2_{\hat{\boldsymbol{\beta}} ,i}-w^2_{\boldsymbol{\beta} ^*,i}}{w^2_{\boldsymbol{\beta} ^*,i}}{\varrho}_{\boldsymbol{\beta} ^*,j,i}x_{\boldsymbol{\beta} ^*,ij}\right| +\max_{1\leq q\leq p, q\neq j} \frac{1}{n}\left|\sum_{i=1}^n\frac{w^2_{\hat{\boldsymbol{\beta}} ,i}-w^2_{\boldsymbol{\beta} ^*,i}}{w^2_{\boldsymbol{\beta} ^*,i}} |x_{\boldsymbol{\beta} ^*,ij} ||\boldsymbol{X}_{\boldsymbol{\beta}^*,(-j),i}(\hat{\boldsymbol{\varphi}}_{ \hat{\boldsymbol{\beta}}  , j} -
 {\boldsymbol{\varphi}}_{ {\boldsymbol{\beta}^*}  , j}   )|\right|.
\end{aligned}
\end{equation*}
{Since $\Vert \boldsymbol{X}_{{\boldsymbol{\beta} ^*},(j)}  \Vert_{\infty} \leq   \Vert \boldsymbol{X}_{(j)}  \Vert_{\infty} \leq \Vert \boldsymbol{X}   \Vert_{\infty}   $ and 
 $$
\max_{1\leq j \leq p}\Vert  \mathbf{X}_{\boldsymbol{\beta} ^*,(-j)}  (\hat{\boldsymbol{\varphi}}_{ \hat{\boldsymbol{\beta}}  , j} -
 {\boldsymbol{\varphi}}_{ {\boldsymbol{\beta}^*}  , j}   ) \Vert_{\infty} 
 \leq \max_{1\leq j \leq p} \Vert \boldsymbol{X}_{{\boldsymbol{\beta} ^*},(-j)}  \Vert_{\infty}  
 \cdot \max_{1\leq j \leq p}\Vert  {\boldsymbol{\varphi}}_{{\boldsymbol{\beta} ^*}, j} -  \hat{\boldsymbol{\varphi}}_{\hat{\boldsymbol{\beta}}  , j} \Vert_{1}  =  O_p(Ks^{**}\lambda^{**}),
 $$
then by the Holder's inequality and \eqref{eq.A19}, we get}
\begin{equation}\label{eq.A25}
\max_{1\leq j \leq p}|\iota_{4j}| \leq O_p(s\sqrt{s^{**}}K^3\lambda) +O_p(Ks\lambda)O_p(K) O_p(Ks^{**}\lambda^{**})   = O_p(ss^{**}K^3\lambda).
\end{equation}
Thus by \eqref{eq.A24} and \eqref{eq.A25} we prove that 
\begin{equation}\label{lemma4.1_06}
\max_{1\leq j \leq p}|\hat{\phi}^2_{\hat{\boldsymbol{\beta}}  , j} - \phi^2_{\boldsymbol{\beta} ^*,j}| =  O_p(\sqrt{s^{**}}\lambda^{**}) +   O_p( ss^{**}K^3\lambda) = O_p(\sqrt{s^{**}}\lambda^{**}). 
\end{equation}
Then by \eqref{eq.A12} we have
\begin{equation}\label{lemma4.1_07}
\max_{1\leq j \leq p}|1/\hat{\phi}^2_{\hat{\boldsymbol{\beta}}  , j} - 1/\phi^2_{\boldsymbol{\beta} ^*,j}| = O_p(\sqrt{s^{**}}\lambda^{**}). 
\end{equation}

Finally, using   \eqref{lemma4.1_06}, \eqref{lemma4.1_07}, we have  
\begin{eqnarray*}
\max_{1\leq j \leq p}  \left\lVert \hat{\boldsymbol{\Theta}}_{\hat{\boldsymbol{\beta}}  ,j} - \boldsymbol{\Theta}_{\boldsymbol{\beta} ^*,j} \right\rVert _{1}
&= & \max_{1\leq j \leq p}\left\lVert \hat{\boldsymbol{\Phi}}_{\hat{\boldsymbol{\beta}}  ,j}/\hat{\phi}^2_{\hat{\boldsymbol{\beta}}  ,j} - {\boldsymbol{\Phi}}_{\boldsymbol{\beta} ^*,j}/\phi^2_{\boldsymbol{\beta} ^*,j} \right\rVert_{1}\\
 & \leq & \max_{1\leq j \leq p}\left(\Vert \hat{\boldsymbol{\varphi}}_{\hat{\boldsymbol{\beta}}  ,j} - \boldsymbol{\varphi}_{\boldsymbol{\beta} ^*,j} \Vert_{1} /\hat{\phi}^2_{\hat{\boldsymbol{\beta}}  ,j}\right) +  \max_{1\leq j \leq p}\left(\Vert \boldsymbol{\varphi}_{\boldsymbol{\beta} ^*,j} \Vert_{1} (1/\hat{\phi}^2_{\hat{\boldsymbol{\beta}}  ,j} -1/\phi^2_{\boldsymbol{\beta} ^*,j})\right) \\
 & = &  O_{p}(s^{**}\lambda^{**}) + O(\sqrt{s^{**}}) O_p(\sqrt{s^{**}}\lambda^{**}) = O_{p}(s^{**}\lambda^{**}).
\end{eqnarray*}
Analogously,
\begin{eqnarray*}
\max_{1\leq j \leq p}  \Vert \hat{\boldsymbol{\Theta}}_{\hat{\boldsymbol{\beta}}  ,j} - \boldsymbol{\Theta}_{\boldsymbol{\beta} ^*,j}  \Vert _{2} 
&\leq& \max_{1\leq j \leq p}\left(\Vert \hat{\boldsymbol{\varphi}}_{\hat{\boldsymbol{\beta}}  ,j} - \boldsymbol{\varphi}_{\boldsymbol{\beta} ^*,j} \Vert_{2} /\hat{\phi}^2_{\hat{\boldsymbol{\beta}}  ,j}\right) +  \max_{1\leq j \leq p}\left(\Vert \boldsymbol{\varphi}_{\boldsymbol{\beta} ^*,j} \Vert_{2} (1/\hat{\phi}^2_{\hat{\boldsymbol{\beta}}  ,j} -1/\phi^2_{\boldsymbol{\beta} ^*,j})\right)\\
  &= &O_{p}(\sqrt{s^{**}}\lambda^{**}) +O(1)O_p(\sqrt{s^{**}}\lambda^{**})= O_{p}(\sqrt{s^{**}}\lambda^{**}).
\end{eqnarray*}

Now we complete our proof for Theorem \ref{thm1}.
\end{proof}

\subsection{Proof of Theorem \ref{thm_add}}


\begin{proof}[\textbf{\upshape Proof of Theorem \ref{thm_add}}]
 First, we focus on $  {\boldsymbol{ \Delta}}^{(1)} $. By the Holder's inequality, we have
  $$
  {  \Vert{\boldsymbol{ \Delta}}^{(1)} \Vert_{\infty} \leq \Vert \hat{\boldsymbol{\Theta}}_{\hat{\boldsymbol{\beta}  }} \mathbf{X}^\top \Vert_{\infty} \Vert \mathbf{W}_{\hat{\boldsymbol{\beta}  }}^2\boldsymbol{\epsilon}/n  - \mathbf{W}_{\boldsymbol{\beta}  ^*}^2\boldsymbol{\epsilon}/n \Vert_1.}
 $$
Note that 
 \begin{equation}\label{pf_thad_01}
  \begin{aligned}
\Vert\hat{\boldsymbol{\Theta}}_{\hat{\boldsymbol{\beta}}  } \mathbf{X}^\top\Vert_{\infty} = O(  \max_{1\leq j \leq p} \Vert \mathbf{X}_{\boldsymbol{\beta} ^*}\hat{\boldsymbol{\Theta}}_{\hat{\boldsymbol{\beta}}  ,j}\Vert_{\infty}) &\leq O\left( \max_{1\leq j \leq p}\Vert \mathbf{X}_{\boldsymbol{\beta} ^*}{\boldsymbol{\Theta}}_{{\boldsymbol{\beta} ^*},j}\Vert_{\infty}  +  \max_{1\leq j \leq p}\Vert \mathbf{X}_{\boldsymbol{\beta} ^*}(\hat{\boldsymbol{\Theta}}_{\hat{\boldsymbol{\beta}}  ,j}- {\boldsymbol{\Theta}}_{{\boldsymbol{\beta} ^*},j})\Vert_{\infty}  \right)\\ 
  & \leq {  O_p(\sqrt{s^{**}} K  ) }+ O_p(K )O_p(s^{**}\lambda^{**}) = { O_p(\sqrt{s^{**}} K  ) }.
  \end{aligned}
   \end{equation}
Moreover, by the arguments in Lemma \ref{lem1}, 
 we have
\begin{equation}\label{pf_thad_02}
\begin{aligned}
\Vert \mathbf{W}_{\hat{\boldsymbol{\beta}  }}^2\boldsymbol{\epsilon}/n  - \mathbf{W}_{\boldsymbol{\beta}  ^*}^2\boldsymbol{\epsilon}/n \Vert_1 & \leq  \frac{1}{n}\sum_{i=1}^n\left|w^2_{\hat{\boldsymbol{\beta}} ,i}-w^2_{\boldsymbol{\beta} ^*,i}\right||\boldsymbol{X}_{i}^\top (\hat{\boldsymbol{\beta}} -\boldsymbol{\beta} ^*)|  \\
& \leq O_p(Ks\lambda)\cdot  \Vert \mathbf{X} \Vert_{\infty}  \Vert \boldsymbol{\beta} ^* - \hat{\boldsymbol{\beta}}   \Vert_1
= O_p(K^2 s^2\lambda^2).
\end{aligned}
\end{equation}
Combine \eqref{pf_thad_01} with \eqref{pf_thad_02}, then it follows
\begin{equation*}\label{pf_thad_delta1}
    \Vert\hat{\boldsymbol{\Theta}}_{\hat{\boldsymbol{\beta}}  } \mathbf{X}^\top\Vert_{\infty} \leq  {  O_p(\sqrt{s^{**}} K  ) } O_p(K^2 s^2\lambda^2) = O_p( s\lambda\lambda^{**} ).
\end{equation*}

Next, we turn to ${\boldsymbol{ \Delta}}^{(2)}$. Recall the first order necessary condition for the optimization problem (\ref{op-de}),  that is, 
  \begin{equation*}
  \mathbf{X}_{\hat{\boldsymbol{\beta}  },(-j)}^\top( \boldsymbol{X}_{\hat{\boldsymbol{\beta}  },(j)} - \mathbf{X}_{\hat{\boldsymbol{\beta}  },(-j)} \hat{\boldsymbol{\varphi}}_{\hat{\boldsymbol{\beta}  },j} )+  \nabla Q_{\lambda_{j}}(\hat{\boldsymbol{\varphi}}_{\hat{\boldsymbol{\beta}  },j}) = \mathbf{0}.
  \end{equation*} 
Hence we get 
\begin{equation*}
  \hat{\phi}^2_{\hat{\boldsymbol{\beta}  },j} =  \Vert \boldsymbol{X}_{\hat{\boldsymbol{\beta}  },(j)} - \mathbf{X}_{\hat{\boldsymbol{\beta}  },(-j)} \hat{\boldsymbol{\varphi}}_{\hat{\boldsymbol{\beta}  },j} \Vert^2_{2}/n + \hat{\boldsymbol{\varphi}}^\top_{\hat{\boldsymbol{\beta}  },j} \nabla Q_{\lambda_{j}}(\hat{\boldsymbol{\varphi}}_{\hat{\boldsymbol{\beta}  },j})
  = \boldsymbol{X}_{\hat{\boldsymbol{\beta}  },(j)}^\top( \boldsymbol{X}_{\hat{\boldsymbol{\beta}  },(j)} - \mathbf{X}_{\hat{\boldsymbol{\beta}  },(-j)} \hat{\boldsymbol{\varphi}}_{\hat{\boldsymbol{\beta}  },j} )/n
=\boldsymbol{X}_{\hat{\boldsymbol{\beta}  },(j)}^\top\mathbf{X}_{\hat{\boldsymbol{\beta}  }}\hat{\boldsymbol{\Phi}}_{\hat{\boldsymbol{\beta}  },j}/n,
\end{equation*}  
 $$
\boldsymbol{X}^\top_{\hat{\boldsymbol{\beta}  },(j)}\mathbf{X}_{\hat{\boldsymbol{\beta}  }} \hat{\boldsymbol{\Theta}}_{\hat{\boldsymbol{\beta}  },j}/n = 1, \quad \text{and}  \quad \Vert \mathbf{X}^\top_{\hat{\boldsymbol{\beta}  },(-j)}\mathbf{X}_{\hat{\boldsymbol{\beta}  }} \hat{\boldsymbol{\Theta}}_{\hat{\boldsymbol{\beta}  },j}/n \Vert_{\infty} = \Vert \nabla Q_{\lambda_{j}}(\hat{\boldsymbol{\varphi}}_{\hat{\boldsymbol{\beta}  },j}) \Vert_{\infty} / \hat{\phi}^2_{\hat{\boldsymbol{\beta}  },j},
 $$  which further indicates that 
 $$
\Vert \hat{{\boldsymbol{\Sigma}}}_{\hat{\boldsymbol{\beta}  }} \hat{\boldsymbol{\Theta}}_{\hat{\boldsymbol{\beta}  },j} - \boldsymbol{e}_{j}\Vert_{\infty} = \Vert \nabla Q_{\lambda_{j}}(\hat{\boldsymbol{\varphi}}_{\hat{\boldsymbol{\beta}  },j}) \Vert_{\infty} / \hat{\phi}^2_{\hat{\boldsymbol{\beta}  },j},
 $$
holds uniformly in $j$, where $\boldsymbol{e}_j $ is the $j$-th column of an identity matrix. 
Equivalently, we have the following property, 
\begin{equation*}\label{lemma3.1}
  \Vert\hat{\boldsymbol{\Theta}}_{\hat{\boldsymbol{\beta}  }}\hat{{\boldsymbol{\Sigma}}}_{\hat{\boldsymbol{\beta}  }}-\mathbf{I}\Vert_{\infty} 
  = \max_{j} \Vert \nabla Q_{\lambda_{j}}(\hat{\boldsymbol{\varphi}}_{\hat{\boldsymbol{\beta}  },j}) \Vert_{\infty} / \hat{\phi}^2_{\hat{\boldsymbol{\beta}  },j}.
 \end{equation*}
 Thus, by the Holder's inequality, it yields 
 \begin{equation*}
  {  \Vert{\boldsymbol{ \Delta}}^{(2)} \Vert_{\infty} \leq \Vert\hat{\boldsymbol{\Theta}}_{\hat{\boldsymbol{\beta}  }}\hat{{\boldsymbol{\Sigma}}}_{\hat{\boldsymbol{\beta}  }}-\mathbf{I}\Vert_{\infty}  \Vert \hat{\boldsymbol{\beta}} - \boldsymbol{\beta}^*\Vert_1} = O_p(\lambda^{**})\cdot O_p(s\lambda)= O_p(s\lambda\lambda^{**}).
  \end{equation*}

  Finally, with the condition $ s\lambda\lambda^{**} = o(n^{-1/2})$, we get the desired results in the theorem.
  

\end{proof}



\subsection {Proof of Theorem \ref{thm3}.}
\begin{proof}[\textbf{\upshape Proof of Theorem \ref{thm3}}]
  
  Recall the decomposition (\ref{de-bias01}),
  \begin{equation*}
  \begin{aligned}
  \hat{\boldsymbol{\beta}}  _{de} - \boldsymbol{\beta} ^*  &=  \hat{\boldsymbol{\Theta}}_{\hat{\boldsymbol{\beta}}} \mathbf{X}^\top\mathbf{W}_{\boldsymbol{\beta}  ^*}^2\boldsymbol{\epsilon}/n - (\hat{\boldsymbol{\Theta}}_{\hat{\boldsymbol{\beta}}} \mathbf{X}^\top\mathbf{W}_{\boldsymbol{\beta}  ^*}^2\boldsymbol{\epsilon}/n- \hat{\boldsymbol{\Theta}}_{\hat{\boldsymbol{\beta}}}\mathbf{X}^\top\mathbf{W}_{\hat{\boldsymbol{\beta}}  }^2\boldsymbol{\epsilon}/n) - (\hat{\boldsymbol{\Theta}}_{\hat{\boldsymbol{\beta}}}\hat{{\boldsymbol{\Sigma}}}_{\hat{\boldsymbol{\beta}}  }-\mathbf{I})(\hat{\boldsymbol{\beta}}   - \boldsymbol{\beta} ^*)\\
  & :=\hat{\boldsymbol{\Theta}}_{\hat{\boldsymbol{\beta}}} \mathbf{X}^\top\mathbf{W}_{\boldsymbol{\beta}  ^*}^2\boldsymbol{\epsilon}/n - {\boldsymbol{ \Delta}}^{(1)} -  {\boldsymbol{ \Delta}}^{(2)}.
  \end{aligned}
  \end{equation*}
  

Since Theorem \ref{thm_add} shows that the terms $\sqrt{n}{\boldsymbol{ \Delta}}^{(1)}$ and $\sqrt{n}{\boldsymbol{ \Delta}}^{(2)}$ are negligible, we only need to verify the following two items to conclude the desired results:
  
  (i)  The asymptotic normality of ${\mathbf R}\hat{\boldsymbol{\Theta}}_{\hat{\boldsymbol{\beta}}} \mathbf{X}^\top\mathbf{W}_{\boldsymbol{\beta}  ^*}^2\boldsymbol{\epsilon}/\sqrt{n} $.
  
  (ii) The estimator of the asymptotic variance-covariance matrix $\hat{\boldsymbol{\Omega}}_{\mathbf R}$ is consistent with ${\boldsymbol{\Omega}}_{\mathbf R}$.

  
  To prove (i), note that 
 \begin{eqnarray*}
 \left\Vert {\mathbf R}\hat{\boldsymbol{\Theta}}_{\hat{\boldsymbol{\beta}}} \sum_{i=1}^n\boldsymbol{X}_i w^2_{{\boldsymbol{\beta} ^*},i}\epsilon_i/n
   - {\mathbf R}\boldsymbol{\Theta}_{\boldsymbol{\beta} ^*} \sum_{i=1}^n \boldsymbol{X}_i w^2_{{\boldsymbol{\beta} ^*},i}\epsilon_i/n\right\Vert_{\infty}  &\leq& \Vert \sum_{i=1}^n\boldsymbol{X}_i w^2_{{\boldsymbol{\beta} ^*},i}\epsilon_i/n \Vert_{\infty}
   \Vert{\mathbf R}\Vert_{ l_{\infty}} \Vert \hat{\boldsymbol{\Theta}}_{\hat{\boldsymbol{\beta}}} -\boldsymbol{\Theta}_{\boldsymbol{\beta} ^*}  \Vert_{l_{\infty}} \\
   &=& O_p( s^{**}\lambda^{**}\sqrt{\ln p / n })
 \end{eqnarray*}
 and that $ {\mathbf R}{\boldsymbol{\Theta}}_{\boldsymbol{\beta} ^*}\boldsymbol{X}_{i} w^2_{{\boldsymbol{\beta} ^*},i}\epsilon_i $, $i\in\{1,\ldots, n\}$ are i.i.d. sequences of $p_0$-dimensional vectors with zero mean and finite variance.
 Thus, given  $s^{**}\lambda^{**}\sqrt{\ln p/n} = o(n^{-1/2})$, 
 by the Lindeberg-Levy central limit theorem along with the Slutsky's lemma, it follows that
 \begin{equation*}\label{pf_th3_3}
  {\mathbf R}\hat{\boldsymbol{\Theta}}_{\hat{\boldsymbol{\beta}}} \mathbf{X}^\top\mathbf{W}_{\boldsymbol{\beta}  ^*}^2\boldsymbol{\epsilon}/\sqrt{n}\xrightarrow{d} \mathcal{N}_{p_0}\left(\bf0, {\boldsymbol{\Omega}}_{\mathbf R}\right) \quad \textit{with} \quad  {\boldsymbol{\Omega}}_{\mathbf R} = {\mathbf R}{\boldsymbol{\Theta}}_{\boldsymbol{\beta}^*} {\sf E} \left[ \boldsymbol{X}_i \boldsymbol{X}^\top_iw^4_{{\boldsymbol{\beta} ^*},i}\epsilon^2_i \right]{\boldsymbol{\Theta}}^\top_{\boldsymbol{\beta}^*}{\mathbf R}^\top.
\end{equation*}

For (ii),  note that we further assume  $ {\sf E}[ x^8_{ij}]  = O(1)$ and $K^4 \sqrt{\ln p / n } = o(1)$, then 
following a similar argument as in the proof of Lemma \ref{lemmaA.2}, we can show that    
\begin{equation*}
\left\Vert\hat {\mathbf Q}- {\mathbf Q}\right\Vert_{\infty}=\left\Vert\frac{1}{n}\sum^n_{i=1} \boldsymbol{X}_i \boldsymbol{X}^\top_iw^4_{{\boldsymbol{\beta} ^*},i}{\epsilon}_i^2  - {\sf E} \left[ \boldsymbol{X}_i \boldsymbol{X}^\top_iw^4_{{\boldsymbol{\beta} ^*},i}\epsilon^2_i \right]\right\Vert_{\infty}=O_p(\sqrt{\ln p / n } ),
\end{equation*}
where $ {\mathbf Q} = {\sf E} \left[ \boldsymbol{X}_i \boldsymbol{X}^\top_iw^4_{{\boldsymbol{\beta} ^*},i}\epsilon^2_i \right]$ and $\hat {\mathbf Q} = \frac{1}{n}\sum^n_{i=1} \boldsymbol{X}_i \boldsymbol{X}^\top_iw^4_{{\boldsymbol{\beta} ^*},i}{\epsilon}_i^2 $. Then we have
\begin{eqnarray*}
  && \Vert   {\mathbf R}\hat{\boldsymbol{\Theta}}_{\hat{\boldsymbol{\beta}}} \hat{\mathbf Q} \hat{\boldsymbol{\Theta}}_{\hat{\boldsymbol{\beta}}}^\top{\mathbf R}^\top
   - {\mathbf R} {\boldsymbol{\Theta}}_{\boldsymbol{\beta}^*} {\mathbf Q} {\boldsymbol{\Theta}}^\top_{\boldsymbol{\beta}^*}{\mathbf R}^\top \Vert_\infty\\
   &\leq &
      \Vert   {\mathbf R}\hat{\boldsymbol{\Theta}}_{\hat{\boldsymbol{\beta}}} \hat{\mathbf Q} \hat{\boldsymbol{\Theta}}_{\hat{\boldsymbol{\beta}}}^\top{\mathbf R}^\top
   - {\mathbf R} {\boldsymbol{\Theta}}_{\boldsymbol{\beta}^*} \hat{\mathbf Q} {\boldsymbol{\Theta}}^\top_{\boldsymbol{\beta}^*}{\mathbf R}^\top \Vert_\infty
   +  \Vert   {\mathbf R}{\boldsymbol{\Theta}}_{\boldsymbol{\beta}^*}  (\hat{\mathbf Q}-{\mathbf Q}) {\boldsymbol{\Theta}}^\top_{\boldsymbol{\beta}^*}{\mathbf R}^\top
   \Vert_\infty\\
   &\leq& {   \Vert\hat{\mathbf Q}\Vert_\infty \Vert {\mathbf R}\hat{\boldsymbol{\Theta}}_{\hat{\boldsymbol{\beta}}} - {\mathbf R} {\boldsymbol{\Theta}}_{\boldsymbol{\beta}^*} \Vert_{l_{\infty}}^2 +2 \Vert\hat{\mathbf Q}\Vert_\infty \Vert {\mathbf R}\hat{\boldsymbol{\Theta}}_{\hat{\boldsymbol{\beta}}} - {\mathbf R} {\boldsymbol{\Theta}}_{\boldsymbol{\beta}^*} \Vert_{l_{\infty}}\Vert  {\mathbf R} {\boldsymbol{\Theta}}_{\boldsymbol{\beta}^*} \Vert_{l_{\infty}}
   + \left\Vert\hat {\mathbf Q}- {\mathbf Q}\right\Vert_{\infty}\Vert  {\mathbf R} {\boldsymbol{\Theta}}_{\boldsymbol{\beta}^*} \Vert_{l_{\infty}}^2}\\
   &=& O_p( (s^{**})^2(\lambda^{**})^2) + O_p( s^{**}\lambda^{**}){ O_p(\sqrt{s^{**}})} + O_p( { {s^{**}}}\sqrt{\ln p / n })\\
   &=& O_p({(s^{**})}^{3/2}\lambda^{**}),
\end{eqnarray*}
which complete the proof. 
\end{proof}

\bibliographystyle{elsarticle-harv} 
\bibliography{expectile.bib}





\end{document}